\documentclass[11pt]{article}
\usepackage[letterpaper,margin=1in]{geometry}
\usepackage[T1]{fontenc}
\usepackage[utf8]{inputenc}
\usepackage{amsmath,amssymb,amsthm,mathrsfs,bm,bbm,dsfont}
\usepackage{graphicx}
\usepackage{xcolor}
\usepackage{booktabs,multirow}
\usepackage{subcaption}
\usepackage{enumitem}
\usepackage{siunitx}
\usepackage{placeins}
\usepackage{algorithm}
\usepackage{algpseudocode}
\usepackage[round,authoryear]{natbib}
\usepackage{url}
\usepackage[colorlinks=true,linkcolor=blue,citecolor=blue,urlcolor=blue]{hyperref}

\newcommand{\beq}{\begin{equation}}
\newcommand{\eeq}{\end{equation}}
\newcommand{\beqa}{\begin{eqnarray}}
\newcommand{\eeqa}{\end{eqnarray}}
\newcommand{\beqan}{\begin{eqnarray*}}
\newcommand{\eeqan}{\end{eqnarray*}}

\newcommand{\vnorm}[1]{\left\|#1\right\|}

\newcommand{\E}{\mathds{E} }

\newcommand{\Nset}{\mathbb{N}}

\newcommand{\Rset}{\mathbb{R}}

\newcommand{\Tset}{\mathbb{T}}

\newcommand{\Xset}{\mathbb{X}}

\newcommand{\Bcal}{{\cal B}}

\newcommand{\Fcal}{{\cal F}}

\newcommand{\Hcal}{{\cal H}}
\newcommand{\Ical}{{\cal I}}
\newcommand{\Jcal}{{\cal J}}
\newcommand{\Kcal}{{\cal K}}

\newcommand{\Pcal}{{\cal P}}

\newcommand{\Ucal}{{\cal U}}

\renewcommand{\[}{\left[}

\newcounter{l1}
\newcounter{l2}
\newcounter{l3}
\newcommand{\bdotlist}{\begin{list}{$\bullet$}{}}
\newcommand{\bboxlist}{\begin{list}{$\Box$}{}}
\newcommand{\bbboxlist}{\begin{list}{\raisebox{.005in}{{\tiny
$\blacksquare$ \ \ }}}{}}
\newcommand{\bdashlist}{\begin{list}{$-$}{} }
\newcommand{\blist}{\begin{list}{}{} }
\newcommand{\barablist}{\begin{list}{\arabic{l1}}{\usecounter{l1}}}
\newcommand{\balphlist}{\begin{list}{(\alph{l2})}{\usecounter{l2}}}
\newcommand{\bAlphlist}{\begin{list}{\Alph{l2}.}{\usecounter{l2}}}
\newcommand{\bdiamlist}{\begin{list}{$\diamond$}{}}
\newcommand{\bromalist}{\begin{list}{(\roman{l3})}{\usecounter{l3}}}

\newtheorem{theorem}{Theorem}
\newtheorem{lemma}{Lemma}
\newtheorem{proposition}{Proposition}
\newtheorem{example}{Example}
\newtheorem{corollary}{Corollary}

\newtheorem{assumption}{Assumption}

\newcommand{\R}{\mathbb{R}}
\newcommand{\dd}{\mathrm{d}}

\newcommand{\DKL}{D_{\mathrm{KL}}}
\newcommand{\tE}{\tilde{\mathbb{E}}}
\newcommand{\tP}{\tilde{P}}
\newcommand{\grad}{\nabla}

\newcommand{\Lmd}{L}

\newcommand{\Pmd}{P}

\title{Transition Path Sampling Using Koopman Operators and Exit-Time Optimal Control}

\author{Boya Hou\thanks{Co-first authors.} \and Shane Wang\footnotemark[1] \and Siddharth Ambekar \and Maxim Raginsky \and Olgica Milenkovic\\[6pt]  University of Illinois Urbana-Champaign\\}

\begin{document}

\maketitle
\begin{abstract}
Sampling transitions between metastable states is a central problem in dynamical systems theory and molecular dynamics in particular. A key challenge is the existence of high free-energy barriers that separate the states, making transitions extremely rare. Recent machine learning-based methods cast transition path sampling (TPS) as an optimal stochastic control (OSC) problem over a fixed time horizon, and parameterize the drift bias via a neural network trained by simulation-in-the-loop, requiring repeated biased rollouts. To address computational and performance guarantee issues of these models, we propose a new approach for the problem based on Koopman operators. Because Koopman operators are linear, their leading eigenfunctions reveal the metastable sets and provide an estimate of the committor function with no transition path information required. Furthermore, we formulate TPS as an OSC problem up to an \emph{exit time}. Our time horizon is the first hitting time of the target set, and our running cost penalizes time spent in nonreactive regions by encoding the estimated committor function. We derive the optimal controller in closed form and approximate it in a reproducing kernel Hilbert space (RKHS). This reduces the problem of constructing the optimal controller to solving a single equality-constrained quadratic program, whose solution can be characterized by a linear Karush-Kuhn-Tucker (KKT) system. On the two-channel double well and alanine dipeptide, our controller increases the fraction of trajectories reaching the target from $0\%$ to $99.8\%$ within $1{,}000$ steps, and from $0\%$ to $93\%$ within $1\mathrm{ps}$, respectively. 

\end{abstract}

\section{Introduction}
\label{sec:intro}
Sampling transitions between metastable sets is a fundamental problem in dynamical systems theory and molecular dynamics, underlying the study of conformational changes of proteins, folding, and drug discovery~\citep{PhysRevE.52.2893,PhysRevLett.78.3908}. A typical challenge is that the states of interest are separated by high free-energy barriers: the overwhelming majority of simulated trajectories remain within a single metastable set, while trajectories that cross from one metastable set to another without returning (reactive paths) are rare under native dynamics~\citep{vanden2006towards}.

Classical transition path sampling (TPS) performs Markov chain Monte Carlo in the path space, generating new trajectories from existing ones \citep{dellago1998efficient,bolhuis2002transition,bolhuis2021tps}. However, they require an initial reactive path connecting the two metastable sets, which can be difficult to generate. 
In physical chemistry and molecular dynamics, rare event sampling is typically accelerated by adding bias potentials to the energy landscape~\citep{PhysRevLett.78.3908,PhysRevE.52.2893,doi:10.1073/pnas.202427399}. These are usually defined over a small number of hand-crafted collective variables. Modern machine learning-based samplers such as PIPS~\citep{holdijk2023pips} and TPS-DPS~\citep{seong2024tpsdps} cast transition path sampling as optimal stochastic control or path-measure matching problems over a fixed time horizon, and parameterize the drift bias with a neural network trained by simulation-in-the-loop, which calls for repeated biased rollouts. Learning a bias force through simulation, however, can be challenging. For example, early in the training, few rollouts reach the target, and hence the training signal is highly sparse. Furthermore, a fixed-horizon formulation requires a properly chosen horizon, and the transition time is a random variable dependent on the path itself. Partial mitigation of these drawbacks is possible using replay buffers and annealing, and truncating the paths at a time that maximizes specific kernel values.

To resolve the above issues, we propose an \emph{operator-based} framework for efficient TPS that uses \emph{Koopman spectral analysis} and optimal stochastic control (OSC) up to an \emph {exit time}, in which the stopping rule is the first hitting time of the target set. Specifically, we cast TPS as sampling from a Gibbs measure on the path space problem, following the information-theoretic frameworks of~\cite{mitter2003variational} and
~\cite{raginsky2026}. By the Gibbs variational principle, the Gibbs measure associated with a reference path measure and an energy on a path is, under certain conditions, the unique minimizer of the free energy. \cite{mitter2003variational} used this insight to pose nonlinear filtering as an optimal control problem. To adapt the described sampling method to TPS, we take the reference to be the unbiased path measure of the original dynamics, and the energy to penalize the time spent in nonreactive regions via the committor function. Since we assume the time horizon to be the first time a path reaches the target set, we operate with exit rather than fixed times. We prove that the resulting Gibbs measure is the law of a controlled diffusion process and prove that the optimal controller minimizes the free energy over \emph{all} path measures on the stopped $\sigma$-algebra rather than only those generated by admissible controls. Furthermore, the optimal controller admits a closed-form solution and can be computed without simulations in the loop.

It is worth pointing out that the proposed approach differs from the recent work of~\cite{du2026rare}, which investigates a related hitting-time problem for committor estimation, treating the committor as a value function and training it with neural networks that require repeated simulation of controlled dynamics during training. As previously emphasized, we instead follow an \emph{operator-theoretic} approach in which the spectrum of the Koopman generator of the model reveals the metastable sets, and where the committor function is constructed from Koopman eigenfunctions. These can be estimated from a fixed dataset without requiring any trajectory to transition between the metastable sets. The Koopman generator plays a central role within this framework: its leading eigenfunctions reveal the metastable sets and are incorporated into the committor function, which in turn defines the running cost of the OSC. The optimal controller of the OSC can be solved via a boundary value problem wrt the operator. This pipeline comes with several practical advantages. First, it requires no transition paths, as Koopman eigenfunctions can be computed without crossings between metastable sets via a generalized eigenproblem. Second, the controller admits a closed-form solution, and thus is obtained without network training but through solutions of convex quadratic programs. The approach is also interpretable, as the controller is built from physically meaningful quantities that characterize the mechanism of the transition, namely the metastable sets and the committor function. Furthermore, the sampled paths can be reweighted to recover the statistics of the original dynamics.

Our contributions are multifold: (1) We formulate TPS as an OSC up to an exit time. We show in Theorem~\ref{thm:exit-time-SOC} that the optimal controller admits a closed-form solution, $u^* = \nabla \log \rho$, and that its path measure minimizes the free energy over \emph{all} path measures, not only those generated by admissible controls; 
(2) We construct this controller by approximating $\rho$ in a Reproducing Kernel Hilbert Space (RKHS). The problem then reduces to solving a convex quadratic program with linear equality constraints, whose solution can be characterized by a single linear Karush–Kuhn–Tucker (KKT) system rather than repeated rollouts; 
(3) In TPS, the source and target metastable sets are often predefined through collective variables. Following Koopman-based approaches that identify metastable sets~\citep{mardt2018vampnets,perez2013identification}, we construct these sets without collective variables by using the leading eigenfunctions of the Koopman generator, which capture the slow inter-basin modes. From the same eigenfunctions, we construct a committor function, which defines the running cost of the {OSC}; (4) We generate transition paths by simulating the controlled diffusion process until it first enters the target sets, and show the resulting paths can be reweighted via Lemma~\ref{lemma:girsanov-density} to recover the statistics of the original dynamics up to the transition; (5) We evaluate our method on both synthetic and real-world datasets.

The paper is organized as follows. In Section~\ref{sec:preliminaries}, we introduce the diffusion process of interest and the corresponding Koopman generator. In Section~\ref{sec:exit-time}, we develop the {OSC} up to an exit time in general form. There, we also derive a closed form of the optimal controller. The results of this section can be specialized for efficient TPS. Section~\ref{sec:opt-controller} constructs metastable sets and the committor function from Koopman eigenfunctions, and approximates the optimal controller in an RKHS. We generate transition paths by simulating the optimally controlled SDE, and report numerical experiments in Section~\ref{sec5}. Due to space limitations, proofs and extra experiments are deferred to the Appendix~\ref{sec:experiments}.

\section{Preliminaries: It\^o Diffusion and the Koopman Generator}
\label{sec:preliminaries}
Let $\Xset \subseteq \Rset^n$ denote the state space and consider the It\^o diffusion process
\begin{align}
  dX_t = b(X_t)\,dt + \sigma(X_t)\,dW_t, \qquad X_0 = x,
  \label{eq:ref-sde}
\end{align}
where $W_t$ is an $m$-dimensional Brownian motion process, and $b:\Xset\to\R^n$ and $\sigma:\Xset\to\R^{n\times m}$ are the drift and diffusion functions, respectively~\cite{karatzas2014brownian,oksendal2013stochastic}. A model of the form~\eqref{eq:ref-sde} used in MD simulations is the \emph{Langevin dynamics,} where $b$ and $\sigma$ are specified by a potential energy $U$, friction $\gamma$, and bath temperature $T_{\mathrm{emp}}$. In the overdamped case, the state is the atomic configuration, and the drift and diffusion coefficients are specialized as $b(x) = -\tfrac{1}{\gamma m}\nabla U(x)$, $\sigma(x) = \sigma_0 I$, with $\sigma_0 = \sqrt{\tfrac{2 k_B T_{\mathrm{emp}}}{\gamma m}}$. In the underdamped case, the state additionally carries the momenta, and friction and noise act only on the momentum component. In either case, the process is metastable whenever $U$ has multiple deep minima (wells) separated by barriers large relative to $k_B T_{\mathrm{emp}}$, and a trajectory spends most of its time in a single well.

Next, let $\Tset:=[0,\infty)$ and fix the space $\Omega = C(\Tset;\Xset)$ of continuous paths mapping $\Tset$ to $ \Xset$. Also, let $X_t(\omega)=\omega(t)$ denote the evaluation of a path $\omega\in\Omega$ at time $t \in \Tset$. Let $\Fcal=\sigma(X_t : t\in\Tset)$ be the canonical $\sigma$-algebra, and let $(\Fcal_t)_{t\in \Tset}$ with $\Fcal_t=\sigma(X_s : s\le t)$ be the filtration, both generated by the process $(X_t)_{t\in\Tset}$. 
For $x \in \Xset$, let $P^x$ be the law on $(\Omega,\Fcal)$ of the solution of \eqref{eq:ref-sde} started at $x$, provided it is well posed, for example, under Assumptions~\ref{A1} and \ref{A1.elliptic} in Section \ref{sec:exit-time}. 
Under $P^x$, the canonical process $X_t(\omega)=\omega(t)$ solves \eqref{eq:ref-sde}.
Define the family of Koopman operators $\{\Kcal_t\}_{t\in\Tset}$ as
\begin{align}
\Kcal_t \varphi(x) = \E[\varphi(X_t)|X_0=x],
\end{align}
per \citep{koopman1932dynamical,lasota2013chaos}. 
Although \eqref{eq:ref-sde} is nonlinear, $\Kcal_t$ for $t\in \Tset$ are \emph{linear} operators and $\{\Kcal_t\}_{t\in\Tset}$ defines a semigroup $\Kcal_{t+s}=\Kcal_t \circ \Kcal_s$. The associated infinitesimal Koopman generator $\Lmd$ is a \emph{linear} second-order differential operator given by
\begin{align}
    \Lmd \varphi(x) = b(x)\cdot\nabla \varphi(x)
        + \frac{1}{2}\sum_{i=1}^{n} \sum_{j=1}^{n} a_{ij}(x)\, \partial_{ij} \varphi(x),
    \label{eq:ref-generator}
\end{align}
for $\varphi \in C^2(\Xset)$, where the indices $i,j = 1,\cdots,n$ range over the coordinates of $x = (x_1,\dots,x_n) \in \Xset$,  $\partial_{ij}$ denotes the second partial derivative with respect to $x_i$ and $x_j$, $a_{ij}$ is the $(i,j)$ entry of the diffusion matrix $a(x):=\sigma(x)\sigma(x)^\top$, and $b$ is the drift term in \eqref{eq:ref-sde}. 
The spectrum of $L$ can be utilized to characterize the global structure of the dynamics \citep{mezic2005spectral,mardt2018vampnets,kostic2022learning}.
Let $(\lambda,\psi)$ be an eigenpair of $\Lmd$, i.e., $\Lmd\psi=\lambda\psi$. Then $\Kcal_t \psi= e^{\lambda t}\psi$, i.e., $\E_x[\psi(X_t)]=e^{\lambda t}\psi(x)$. The constant function $\psi_0\equiv 1$ is always an eigenfunction with $\lambda_0=0$. 

The Koopman operator and its spectrum can be approximately computed from data via numerical methods such as dynamic mode decomposition (DMD) \citep{schmid2010dynamic,rowley2009spectral} and extended dynamic mode decomposition (EDMD) \citep{williams2015data}, as well as nonparametric variants based on RKHS known as kernel EDMD \citep{williams2014kernel,kostic2022learning,klus2020kernel,hou2023sparse}. We provide a brief introduction to RKHS in Appendix \ref{sec:rkhs}. Following \cite{kostic2022learning,hou2023sparse}, we approximate the leading eigenpairs $(\lambda_k,\psi_k)$ of $\Lmd$ from samples $\{x_i\}_{i=1}^N\subset\Xset$. Since $b$ and $a$ are known, $\Lmd$ can be applied directly to the kernel via \eqref{eq:ref-generator}. This yields a generalized eigenproblem in terms of Gram matrices; see, e.g., \cite[Appendix M]{hou2023sparse} for details. 

A trajectory of~\eqref{eq:ref-sde} that crosses from one metastable set to another without returning to its origin is a (reactive) transition path \citep{vanden2006towards}. 
For two disjoint sets $A_{i}$ and $A_{j}$, let $\tau_{A_k}$ be the first time the system enters $A_k$ for $k={i,j}$. The committor function $q^{(i,j)}(x)$ for the transition from $x \in {A_i}$ to ${A_j}$ is defined as $q({x}) = \Pmd \bigl(\tau_{A_j} < \tau_{A_i} \mid {x}_0 = {x}\bigr)$ \citep{vanden2006towards,metzner2009transition,hoffmann2021deeptime}. 
Along a transient path, $q^{(i,j)}(x_t)$ is close to $0$ near $A_i$ and close to $1$ near $A_j$, whereas a path that lingers in $A_i$ has $q^{(i,j)}(x_t) \approx 0$ most of the time. When the metastable sets are clear from context, we write $q$ as a shorthand for $q^{(i,j)}$. Next, let $D_q := \Xset\setminus(A_i\cup A_j)$. The committor function $q$ satisfies the boundary value problem
\begin{align}
    {L} q = 0 \ \text{ in } D_q,
    \qquad q|_{\partial A_i}=0,
    \qquad q|_{\partial A_j}=1,
\label{eq:committor-bvp}
\end{align}
where ${L}$ is the Koopman generator, and we extended $q$ by $q\equiv0$ on $A_i$ and $q\equiv1$ on $A_j$. We defer the proof of this identity to the Appendix~\ref{pf:sec:committor}. 

The above concepts lead to our pipeline for TPS: the leading eigenfunctions of the Koopman generator $\Lmd$ identify the metastable sets and provide the spectral term of the committor function; the committor function then defines the running cost of the OSC presented in the next section, whose optimal controller can be approximated via a boundary value problem in $\Lmd$.
\section{Transition Path Sampling via {OSC} up to an Exit Time}
\label{sec:exit-time}

\subsection{The Sampling Problem}
\label{sec:gibbs}
Consider the diffusion process governed by the It\^o SDE~\eqref{eq:ref-sde}. To facilitate our analysis of the outlined sampling approach, we make the assumption:
\begin{assumption}\label{A1}
 $b(x)$ and $\sigma(x)$ are bounded and Lipschitz on $\Xset$. 
\end{assumption}
Under Assumption 1, the SDE \eqref{eq:ref-sde} has a unique strong solution.
Let $P^x$ be the path measure started at $x$.
Let $D \subset \Xset$ be open, and $\tau_D:=\inf\{t\ge0:X_t\notin D\}$ be the exit time.
Since $D$ is open, $\tau_D$ is the hitting time of the closed set $\Xset \setminus D$ and is therefore an $(\Fcal_t)$-stopping time. We observe paths until they reach the target set (equivalently, exit the set $D$), and we work on the stopped $\sigma$-algebra $\Fcal_{\tau_D}$ defined by $\Fcal_{\tau_D}:=\bigl\{A\in \Fcal:\ A\cap\{\tau_D\le t\}\in\Fcal_t\ \text{for all }t\ge0\bigr\}$. Then $X_{\tau_D}$ is $\Fcal_{\tau_D}$-measurable.

For $x \in \bar D$ and each path $\chi$ under $P^x$, define the Hamiltonian as
\begin{align}
H(\chi) = \int_0^{\tau_D} f(X_t)\,dt + \Phi(X_{\tau_D}),  
\end{align}
where $f:\bar D\to[0,\infty)$ and $\Phi:\partial D\to[0,\infty)$ are measurable, and $H(\chi):=+\infty$ on $\{\tau_D=\infty\}$.
Here, $f \geq 0$ is the running cost and $\Phi$ is the terminal cost. 
Since paths are continuous and $D$ is open, $X_{\tau_D}\in \partial D$ on $\{\tau_D<\infty\}$. Hence, $H$ is a well-defined $\Fcal_{\tau_D}$-measurable function.

For the purpose of efficient sampling, among all path measures on $(\Omega,\Fcal_{\tau_D})$, we seek the one that minimizes the free energy
\begin{align}
  F({\tP}) := \tE \big[H(\chi)\big] 
  + \DKL \big({\tP} \,\|\, P^x\big),
  \qquad {\tP} \in \Pcal(\Omega,\Fcal_{\tau_D}),
  \label{eq:free-energy}
\end{align}
where $\tE$ denotes the expectation w.r.t. $\tP$, and $\Pcal(\Omega,\Fcal_{\tau_D})$ denotes the set of all probability measures on $(\Omega,\Fcal_{\tau_D})$. Here, the first term favors paths with low energy, while the second keeps $\tP$ close to the original dynamics. We next make use of the Gibbs variational principle on $\Fcal_{\tau_D}$, which is adapted from \cite[Proposition 2.1]{mitter2003variational}.
\begin{proposition}
Let $H$ be an $\Fcal_{\tau_D}$-measurable function, and suppose $Z:=\E_{x} \big[e^{-H}\big]>0$. Then the probability measure $P^{*,x}$ on $(\Omega,\Fcal_{\tau_D})$ defined by
\begin{align}
  \frac{\dd P^{*,x}}{\dd P^x}\bigg|_{\Fcal_{\tau_D}}
   =  \frac{e^{-H(\chi)}}{Z}
  \label{eq:gibbs-density}
\end{align}
is the unique minimizer of the free energy \eqref{eq:free-energy} over the set of all probability measures on $(\Omega,\Fcal_{\tau_D})$, and $\min_{{\tP}} F({\tP}) = F(P^{*,x}) = -\log Z$.
\label{prop:gibbs}
\end{proposition}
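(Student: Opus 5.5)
The plan is the standard Donsker--Varadhan argument: write the free energy as a KL divergence to the Gibbs measure, up to the constant $-\log Z$. All measures live on $(\Omega,\Fcal_{\tau_D})$, with $P^x$ restricted to this $\sigma$-algebra.

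First I check that $P^{*,x}$ is well defined. Since $H\ge 0$ takes values in $[0,\infty]$, we have $e^{-H}\in[0,1]$, with $e^{-H}=0$ on $\{\tau_D=\infty\}$. Hence $Z\le 1$, and by assumption $Z>0$, so $e^{-H}/Z$ is a nonnegative $\Fcal_{\tau_D}$-measurable density with $P^x$-integral one. Because $H<\infty$ holds $P^{*,x}$-a.s. and $e^{-H}>0$ there, $P^{*,x}$ is mutually absolutely continuous with $P^x$ restricted to $\{H<\infty\}$.

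Next, take any $\tP\in\Pcal(\Omega,\Fcal_{\tau_D})$. If $F(\tP)=+\infty$ there is nothing to prove, since $F(P^{*,x})$ will turn out to be finite. So assume $F(\tP)<\infty$. Then $\DKL(\tP\|P^x)<\infty$, so $\tP\ll P^x$, and $\tE[H]<\infty$, so $\tP(H<\infty)=1$. Together these give $\tP\ll P^{*,x}$, and on $\{H<\infty\}$ the chain rule gives
\[
\log\frac{\dd\tP}{\dd P^x}=\log\frac{\dd\tP}{\dd P^{*,x}}-H-\log Z .
\]
Taking $\tE$ of this identity yields
\[
F(\tP)=\tE[H]+\DKL(\tP\|P^x)=\DKL(\tP\|P^{*,x})-\log Z .
\]
The manipulation is legitimate because $\tE[H]$ is finite and $\log Z$ is a finite constant. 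This lets me split the integral of $\log(\dd\tP/\dd P^x)$, whose negative part is integrable as usual for relative entropy, into finite pieces plus the KL term.

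The step I expect to be the main obstacle is exactly this integrability bookkeeping. A clean route is to apply the identity with $\tP=P^{*,x}$ first. In that case $\log(\dd P^{*,x}/\dd P^x)=-H-\log Z$, so $\DKL(P^{*,x}\|P^x)=-\E^{*}[H]-\log Z$. Both terms are finite: $\E^{*}[H]=\E_x[He^{-H}]/Z\le e^{-1}/Z$, since $te^{-t}\le e^{-1}$. This gives $F(P^{*,x})=-\log Z$. For a general $\tP$, the inequality $F(\tP)\ge -\log Z$ also follows from $\DKL(\tP\|P^{*,x})\ge 0$. In the case $F(\tP)=\infty$ this holds trivially, so no absolute-continuity assumption is needed there.

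Finally, uniqueness follows from Gibbs' inequality. If $F(\tP)=-\log Z$, then $\DKL(\tP\|P^{*,x})=0$, which forces $\tP=P^{*,x}$ on $\Fcal_{\tau_D}$. Hence $\min F=F(P^{*,x})=-\log Z$, and the minimizer is unique.
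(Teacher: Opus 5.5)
Your proof is correct, but it takes a different route from the paper's. The paper does not prove the variational principle itself. It checks that $Z\in(0,\infty)$ and that $H\ge 0$ gives $-\E_x[He^{-H}]<\infty$, and then invokes \citet[Proposition 2.1]{mitter2003variational} on $(\Omega,\Fcal_{\tau_D})$ with reference measure $P^x|_{\Fcal_{\tau_D}}$.

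You instead reprove that result from first principles through the exact identity $F(\tP)=\DKL(\tP\|P^{*,x})-\log Z$, valid for every $\tP$ of finite free energy. You make the integrability bookkeeping explicit: the negative part of $\log(\dd\tP/\dd P^x)$ is $\tP$-integrable, $\tE[H]<\infty$, and $\E^{*}[H]\le e^{-1}/Z$, which yields $F(P^{*,x})=-\log Z<\infty$.

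Your version is self-contained and shows exactly where nonnegativity of $H$ enters. It makes $F$ a well-defined element of $[0,\infty]$, and it forces $He^{-H}$ into $[0,e^{-1}]$, so both the paper's condition and your bound on $\E^{*}[H]$ are automatic. It also handles the event $\{\tau_D=\infty\}$, where $H=+\infty$, without assuming that event is $P^x$-null, since neither $P^{*,x}$ nor any $\tP$ with $F(\tP)<\infty$ charges it. So you never need Lemma~\ref{cor:finite-exit-time} at this stage.

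The paper's citation buys brevity and ties the statement to the Gibbs framework it reuses in Theorem~\ref{thm:exit-time-SOC}(ii). In both arguments, $H\ge 0$ comes from the definition of $H$ (with $f,\Phi\ge 0$), not from the hypotheses of the proposition as stated; that is fine in context.
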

It will be useful to keep in mind the following motivating example.
\begin{example}[TPS]
{\em Let $A_i,A_j \subset \Xset$ be disjoint closed source and target sets, and let $D:=\Xset\setminus A_j$. Then $A_i\subset D$, and thus a trajectory initialized within the source set starts in the interior of $D$. For such a trajectory, the exit time of $D$ is the first hitting time of $A_j$, so $\tau_D$ is the transition time of interest.  
Let $q(x):=P^x(\tau_{A_j}<\tau_{A_i})$ be the committor function. Define the running and terminal cost
\begin{align}
{f_\beta}(x) := \beta\bigl(1 - q^{(i,j)}(x)\bigr),
\quad
\Phi\equiv 0 \text{ on } \partial A_j,
\end{align}
with $\beta>0$. Then, $H = \beta H_0$, where $H_0=\int_0^{\tau_D}(1-q^{(i,j)}(X_t))\,dt$ is the time spent in nonreactive regions before reaching the target. Here, the parameter $\beta$ controls the trade-off between speed and fidelity. As $\beta$ increases, we increasingly prefer faster paths at the cost of deviating more from the original dynamics. In Appendix~\ref{sec:reweight}  we show how to undo this dynamics distortion via reweighting.}
\label{ex:tps}
\end{example}

The sampling problem of interest is to generate paths from $P^{*,x}$. Proposition~\ref{prop:gibbs} identifies the target measure $P^{*,x}$ but does not provide a way to sample from it. We therefore phrase the sampling problem as a control problem where we add a control drift to the reference process~\eqref{eq:ref-sde} so that the law of the controlled diffusion equals $P^{*,x}$ on $\Fcal_{\tau_D}$. 
We call a control $u:\bar D\to\R^n$ \emph{admissible} if $u$ is Lipschitz continuous on $\bar D$ and use $\Ucal$ to denote the set of all admissible controls. A $u \in \Ucal$ enters the drift through $a(x)u(x)$, resulting in the controlled process
\begin{align}
  d\tilde X_t = \bigl[b(\tilde X_t) + a(\tilde X_t)\,u(\tilde X_t)\bigr]\,dt
  + \sigma(\tilde X_t)\,d\tilde W_t,
  \label{eq:general-controlled-sde}
\end{align}
where $\tilde\tau_D:=\inf\{t\ge0:\tilde X_t\notin D\}$.
As we show in Theorem~\ref{thm:exit-time-SOC} (ii), restricting the focus to a Lipschitz continuous controller does not weaken the result, since we can show that the path measure of the optimal admissible controller is optimal over \emph{all} probability measures on $(\Omega,\Fcal_{\tau_D})$.

As the time-horizon is the first time the process exits $D$, the exit time $\tau_D$ is itself a random variable. We first establish that it has a finite mean. 
We make the following Assumptions
\begin{assumption}\label{A1.elliptic}
There exists $c>0$ such that $z^\top a(x)\,z \geq c|z|^2$ for all $x\in \Xset, z\in\R^n$.
\end{assumption}
\begin{assumption}\label{A2}
  $D\subset \Xset$ is open and bounded with $C^{2,\alpha}$ boundary $\partial D$ for some $\alpha\in(0,1)$.
\end{assumption}
These assumptions are standard for exit-time problems for controlled diffusions \citep{fleming1977exit}. They ensure a unique solution of the associated elliptic boundary value problem in $\Lmd$. They hold, for example, for overdamped Langevin dynamics with a smooth potential, including the double-well, four-well, and M\"uller--Brown systems studied in Section \ref{sec5} and Appendix \ref{sec:experiments}.
The proof of the following lemma is presented in Appendix~\ref{pf:cor:finite-exit-time}:
\begin{lemma}
  Let Assumptions~\ref{A1}--\ref{A2} hold. Then, the SDE~\eqref{eq:ref-sde} starting from $x\in D$ has a pathwise unique strong solution, and its exit time $\tau_D$ satisfies $\E_x[\tau_D] <\infty$. Consequently, $\tau_D<\infty$ a.s.
  \label{cor:finite-exit-time}
\end{lemma}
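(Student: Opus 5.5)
The strong well-posedness part follows from Assumption~\ref{A1} alone: since $b$ and $\sigma$ are globally Lipschitz (and bounded), the classical It\^o existence and uniqueness theorem gives a pathwise unique strong solution of \eqref{eq:ref-sde} for every initial point $x\in D$. The substance is the finiteness of $\E_x[\tau_D]$. I will prove it with a Dynkin/Lyapunov argument, using uniform ellipticity (Assumption~\ref{A1.elliptic}) and boundedness of $D$ (Assumption~\ref{A2}).

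\textbf{Test function.} Since $D$ is bounded, choose $R>0$ with $D\subset\{|x|<R\}$. Fix the first coordinate direction and set
\begin{align*}
h(x)=-e^{\mu x_1},
\end{align*}
with $\mu>0$ to be determined. Then
\begin{align*}
\Lmd h(x) = -e^{\mu x_1}\Bigl(\mu\, b_1(x) + \tfrac12 \mu^2 a_{11}(x)\Bigr).
\end{align*}
Uniform ellipticity with $z=e_1$ gives $a_{11}\ge c$, and Assumption~\ref{A1} gives $|b_1|\le \|b\|_\infty$. Choose $\mu$ large enough that $\tfrac12\mu^2 c - \mu\|b\|_\infty \ge 1$. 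Then on $\bar D$,
\begin{align*}
\Lmd h \le -e^{\mu x_1} \le -e^{-\mu R} =: -\kappa < 0 .
\end{align*}
So $h$ is a smooth function whose generator is bounded above by a strictly negative constant on $D$.

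\textbf{Localized Dynkin formula.} For $t>0$, apply It\^o's formula to $h(X_{t\wedge\tau_D})$. On $[0,\tau_D]$ the process stays in $\bar D$, so $\nabla h$ and $\sigma$ are bounded there. The stochastic integral is therefore a true martingale, and taking expectations gives
\begin{align*}
\E_x\bigl[h(X_{t\wedge\tau_D})\bigr] - h(x) = \E_x\!\int_0^{t\wedge\tau_D}\Lmd h(X_s)\,ds \le -\kappa\,\E_x[t\wedge\tau_D].
\end{align*}
Since $|h|\le e^{\mu R}$ on $\bar D$, this yields
\begin{align*}
\E_x[t\wedge\tau_D]\le \frac{2e^{\mu R}}{\kappa} = 2e^{2\mu R},
\end{align*}
a bound uniform in $t$. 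Letting $t\to\infty$ and applying monotone convergence gives $\E_x[\tau_D]\le 2e^{2\mu R}<\infty$. In particular $\tau_D<\infty$ almost surely.

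\textbf{Main obstacle.} The only delicate point is ensuring the local martingale term has zero mean before we know $\tau_D$ is finite. Stopping at $t\wedge\tau_D$ handles this: it keeps the path in the compact set $\bar D$, where all integrands are bounded. The $C^{2,\alpha}$ regularity of $\partial D$ is not needed for this argument; only boundedness of $D$ and ellipticity in one direction are used.
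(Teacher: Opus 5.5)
Your proof is correct, but it takes a different route from the paper.

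\textbf{The paper's route.} It first solves the Poisson problem $L\varphi=-1$ in $D$, $\varphi|_{\partial D}=0$, using elliptic (Schauder-type) existence theory to get $\varphi\in C^{2,\alpha}(\bar D)$. This is where the $C^{2,\alpha}$ boundary of Assumption~\ref{A2} and H\"older continuity of the coefficients are used. A maximum-principle argument then shows $\varphi\ge 0$. Finally, the localized Dynkin formula of Lemma~\ref{lem:localized-dynkin} gives $\E_x[\sigma_k]\le\varphi(x)$, and monotone convergence passes to $\tau_D$. The localized formula relies on the exhaustion $D_k$ and a cutoff $\zeta_k$, because $\varphi$ is only $C^2$ up to $\bar D$.

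\textbf{Your route.} You use the explicit, globally smooth Lyapunov function $h(x)=-e^{\mu x_1}$. The bound $\Lmd h\le -e^{-\mu R}$ on $\bar D$ follows directly from $a_{11}\ge c$ and $|b_1|\le\|b\|_\infty$. Since $h\in C^\infty(\Rset^n)$, It\^o's formula stopped at $t\wedge\tau_D$ suffices, with a bounded integrand; no cutoff, exhaustion, or PDE solvability is needed.

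\textbf{What each approach buys.} Your argument is more elementary and uses strictly less. It needs only boundedness of $D$ and $b$ and nondegeneracy of $a$ in one direction. It needs no boundary regularity, no elliptic existence theory, and no maximum principle. It gives the explicit bound $\sup_{x\in D}\E_x[\tau_D]\le 2e^{2\mu R}$, and it carries over verbatim to the controlled drift $b+au$ in Corollary~\ref{cor:finite-exit-time-control}.

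The paper's route buys sharpness and fits the rest of the paper. Its bound $\E_x[\tau_D]\le\varphi(x)$ is in fact an equality, since $\varphi(X_{\sigma_k})\to\varphi(X_{\tau_D})=0$ boundedly. This identifies the mean exit time as the solution of a boundary value problem in the generator $\Lmd$. The same PDE-plus-localized-Dynkin template is reused for the Feynman--Kac and verification arguments. By contrast, your exponential bound is extremely crude for metastable systems, though for this lemma only finiteness matters.
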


\subsection{A Candidate Optimal Controller}
\label{sec:fk-ch}
We now seek a controlled diffusion of the form~\eqref{eq:general-controlled-sde} whose path measure equals $P^{*,x}$. For this purpose, we also require a regularity condition, 
\begin{assumption}\label{A3}
  The running cost $f\in C^{0,\alpha}(\bar D)$ with $f \geq 0$, and $\Phi: \partial D \to [0,\infty)$ such that $g:=e^{-\Phi} \in C^{2,\alpha}(\partial D)$.
\end{assumption}

Define
\begin{align}
\rho(x) 
:= \E_x \left[\exp \Bigl(-\int_0^{\tau_D} f(X_t)\,dt - \Phi(X_{\tau_D})\Bigr)\right].
\label{eq:rho-exit}
\end{align}
Here, we integrate time from $0$ to an exit time $\tau_D$. Under Assumption~3, $f$ is bounded on $\bar D$, and $\Phi$ is bounded on $\partial D$ since $g$ is continuous and positive on the compact set $\partial D$. 
By Lemma~\ref{cor:finite-exit-time}, $\tau_D<\infty$ a.s.\ under $P^x$, hence \eqref{eq:rho-exit} is well-defined, and $\rho(x)\in(0,1]$.

We next show that $\rho$ solves a boundary value problem via elliptic existence theory \cite[Ch.~III]{ladyzhenskaya1968linear}.
We consider the following elliptic boundary value problem, where ${L}$ is the (Koopman) generator:
\begin{align}
  {L}w - fw = 0 \text{ in } D, \qquad w|_{\partial D}=e^{-\Phi}.
  \label{eq:elliptic-bvp}
\end{align}
Since $\partial D\in C^{2,\alpha}$ by Assumption~2 and $g=e^{-\Phi}\in C^{2,\alpha}(\partial D)$ by Assumption~3, applying \cite[Ch.~III, Theorem~1.1,~1.2]{ladyzhenskaya1968linear} gives a unique solution $w\in C^{2,\alpha}(\bar D)$ to the problem  \eqref{eq:elliptic-bvp}. 
The proof of the following lemma is included in the Appendix~\ref{pf:lemma.FK}.

\begin{lemma}
Let Assumptions~\ref{A1}--\ref{A3} hold and let $w \in C^{2,\alpha}(\bar D)$ be the unique solution of the problem~\eqref{eq:elliptic-bvp}. Then, $w$ coincides on $\bar D$ with $\rho$ defined in \eqref{eq:rho-exit}. Moreover, $\rho>0$ on $\bar D$.
\label{lemma.FK}
\end{lemma}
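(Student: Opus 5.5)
The plan is a Feynman--Kac verification argument: apply It\^o's formula to $w$ multiplied by the exponential discount, stop at the exit time, and pass to the limit.

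Because $w\in C^{2,\alpha}(\bar D)$ while the process may leave $\bar D$ after $\tau_D$, we first extend $w$ to a $C^2$ function $\tilde w$ on $\Xset$ with compact support. A standard extension of $C^{2,\alpha}$ functions across a $C^{2,\alpha}$ boundary, followed by a cutoff, provides this. Up to $\tau_D$ the process stays in $\bar D$, so $\tilde w=w$ along the stopped path and only $w$ matters.

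Fix $x\in D$ and set $Y_t:=\exp\bigl(-\int_0^{t} f(X_s)\,ds\bigr)$. By It\^o's product rule,
\begin{align*}
d\bigl(Y_t\,\tilde w(X_t)\bigr)=Y_t\bigl(L\tilde w-f\tilde w\bigr)(X_t)\,dt+Y_t\,\nabla\tilde w(X_t)^\top\sigma(X_t)\,dW_t .
\end{align*}
For $t<\tau_D$ the drift term vanishes, since $Lw-fw=0$ in $D$. We integrate from $0$ to $t\wedge\tau_D$ and take expectations. The stochastic integral is a true martingale: its integrand is bounded because $0<Y\le1$ (as $f\ge0$), $\nabla w$ is bounded on the compact set $\bar D$, and $\sigma$ is bounded by Assumption~\ref{A1}. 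This gives
\begin{align*}
w(x)=\E_x\bigl[Y_{t\wedge\tau_D}\,w(X_{t\wedge\tau_D})\bigr].
\end{align*}
Now let $t\to\infty$. By Lemma~\ref{cor:finite-exit-time}, $\tau_D<\infty$ almost surely, so the integrand converges almost surely to $Y_{\tau_D}w(X_{\tau_D})$. It is bounded by $\sup_{\bar D}|w|$, so dominated convergence applies. On $\partial D$ we have $w=e^{-\Phi}$, and hence $w(x)=\rho(x)$ for $x\in D$.

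For $x\in\partial D$ we have $\tau_D=0$ under $P^x$, so $\rho(x)=e^{-\Phi(x)}=w(x)$ directly. Thus $w=\rho$ on all of $\bar D$.

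For positivity, on $\partial D$ we have $\rho=e^{-\Phi}>0$. For $x\in D$, the bound $f\le\|f\|_\infty$ on $\bar D$ (Assumption~\ref{A3}) and the continuity of $g=e^{-\Phi}>0$ on the compact set $\partial D$, which give $g\ge g_{\min}>0$, together with Jensen's inequality yield
\begin{align*}
\rho(x)\ge g_{\min}\,\E_x\bigl[e^{-\|f\|_\infty\tau_D}\bigr]\ge g_{\min}\,e^{-\|f\|_\infty\E_x[\tau_D]}>0 ,
\end{align*}
where the last inequality uses $\E_x[\tau_D]<\infty$ from Lemma~\ref{cor:finite-exit-time}.

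The main point of care is justifying It\^o's formula for a function that is only $C^2$ up to the boundary. The extension step handles this; alternatively, one could apply It\^o's formula up to the exit times $\tau_{D_k}$ of smooth interior approximating domains $D_k\uparrow D$, and pass to the limit using $\tau_{D_k}\uparrow\tau_D$ and the continuity of $w$ on $\bar D$. The martingale property and the limit $t\to\infty$ are routine given the boundedness established above.
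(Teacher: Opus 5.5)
Your proposal is correct and follows essentially the same Feynman--Kac verification argument as the paper. Both apply It\^o's formula to the discounted process $e^{-\int_0^t f(X_s)\,ds}\,w(X_t)$ up to the exit time, use $Lw-fw=0$ to kill the drift, pass to the limit by bounded convergence using $\tau_D<\infty$ a.s.\ and $w|_{\partial D}=e^{-\Phi}$, and treat starting points on $\partial D$ via $\tau_D=0$.

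The differences are only technical. The paper localizes through interior domains $D_k$ and stopping times $\sigma_k=k\wedge\tau_{D_k}$ (its localized Dynkin lemma, which needs only $h\in C^2(D)\cap C(\bar D)$ and is reused in other proofs). You instead extend $w$ across the $C^{2,\alpha}$ boundary. For positivity, the paper uses the a.s.\ strict positivity of the integrand, whereas your Jensen bound has the small advantage of being quantitative.
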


Our sampling/free energy minimization problem can now be cast as an optimal control problem. In the fixed-horizon setting, the relative entropy of the controlled law with respect to the reference law equals the expected control effort by Girsanov's theorem, so the free energy~\eqref{eq:free-energy} of the controlled law is an optimal control cost~\citep[eq.~(4.5)]{raginsky2026}. By contrast, here we have to consider the optimal stochastic control problem up to an exit time,
\begin{align}
\begin{aligned}
  &\min_{u\in\Ucal}  J(u,x) :=\tE_x \left[
    \int_0^{\tilde\tau_D} \Bigl(
      f(\tilde X_t) + \tfrac{1}{2}u(\tilde X_t)^\top a(\tilde X_t)\,u(\tilde X_t)
    \Bigr)\,dt
    + \Phi(\tilde X_{\tilde\tau_D})
  \right], \\
  &\text{subject to \eqref{eq:general-controlled-sde}},\ \tilde X_0 = x\in D.
\end{aligned}
  \label{eq:soc}
\end{align}
We denote its value function as $v(x) := \inf_{u\in\Ucal} J(u,x)$, for $x\in \bar D$, with $v|_{\partial D} = \Phi$, since $\tilde\tau_D = 0$ when $\tilde X_0\in\partial D$.
Consider the candidate value function $v:=-\log\rho$ (this is the logarithmic, or Cole--Hopf, transform of $\rho$, \citep{fleming1977exit}). Since $\bar D$ is compact by Assumption~2 and $\rho\in C(\bar D)$ with $\rho>0$ on $\bar D$ as established in Lemma~\ref{lemma.FK}, the extreme value theorem gives $m:=\min_{\bar D}\rho>0$ and $M:=\max_{\bar D}\rho<\infty$, so $\rho(\bar D)\subseteq[m,M] \subset(0,\infty)$. Since $\rho\in C^{2,\alpha}(\bar D)$, and $-\log$ is smooth on $[m,M]$, the chain rule gives $v=-\log\circ\,\rho\in C^{2,\alpha}(\bar D)$, and hence $\grad v$ is bounded on $\bar D$. Additionally, since $\rho\big|_{\partial D}=g=e^{-\Phi}$, the boundary condition of $v$ is $v\big|_{\partial D}=\Phi$.
We have the following lemma whose proof is presented in Appendix~\ref{pf:lemma.V}.

\begin{lemma}
The function $v$ satisfies the Hamilton-Jacobi-Bellman (HJB) equation of the OSC problem~\eqref{eq:soc} up to an exit time, i.e., 
\begin{align}
  Lv(x) + f(x) = \tfrac12\,\nabla v(x)^\top a(x)\,\nabla v(x)
  \quad\text{in } D, \qquad v\big|_{\partial D} = \Phi.
\label{eq:hjb}
\end{align}
\label{lemma.V}
\end{lemma}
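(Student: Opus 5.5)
The argument is a direct computation through the Cole--Hopf transform. By Lemma~\ref{lemma.FK}, $\rho\in C^{2,\alpha}(\bar D)$ solves $L\rho - f\rho = 0$ in $D$ with $\rho|_{\partial D}=e^{-\Phi}$, and $\rho\ge m>0$ on $\bar D$. We showed above that $v=-\log\rho\in C^{2,\alpha}(\bar D)$. So I will compute $\nabla v$ and $\partial_{ij}v$ classically in terms of $\rho$ and substitute them into the generator \eqref{eq:ref-generator}.

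\textbf{Derivatives.} For $x\in D$, since $\rho>0$ and $\rho$ is $C^2$, we have
\begin{align*}
\nabla v = -\frac{\nabla\rho}{\rho},\qquad
\partial_{ij} v = -\frac{\partial_{ij}\rho}{\rho} + \frac{\partial_i\rho\,\partial_j\rho}{\rho^2}.
\end{align*}

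\textbf{Generator of $v$.} Substituting into \eqref{eq:ref-generator} and using the symmetry of $a$ gives
\begin{align*}
Lv = -\frac{b\cdot\nabla\rho + \tfrac12\sum_{i,j} a_{ij}\partial_{ij}\rho}{\rho} + \frac{1}{2}\,\frac{\nabla\rho^\top a\,\nabla\rho}{\rho^2}
= -\frac{L\rho}{\rho} + \frac12\,\nabla v^\top a\,\nabla v .
\end{align*}
The last term follows from $\nabla\rho^\top a\nabla\rho/\rho^2 = \nabla v^\top a \nabla v$.

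\textbf{Using the PDE for $\rho$.} The elliptic equation \eqref{eq:elliptic-bvp} gives $L\rho/\rho = f$ in $D$. Hence $Lv = -f + \tfrac12\nabla v^\top a\nabla v$, which rearranges to the interior equation in \eqref{eq:hjb}.

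\textbf{Boundary condition.} On $\partial D$, $\rho=e^{-\Phi}$, so $v=\Phi$. This was already noted before the lemma.

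I expect no substantial obstacle. The one point needing care is justifying the classical differentiation and the division by $\rho$. Both rest on the positivity bound $\rho\ge m>0$ on the compact set $\bar D$ and the $C^{2,\alpha}$ regularity from Lemma~\ref{lemma.FK}, so the chain rule applies pointwise in $D$.

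To connect this with the optimal control problem \eqref{eq:soc}, I would also note that \eqref{eq:hjb} is the HJB equation in the form
\begin{align*}
\min_{u}\bigl\{Lv + (au)\cdot\nabla v + f + \tfrac12 u^\top a u\bigr\}=0 .
\end{align*}
The inner minimization is a quadratic in $u$, solved by $u=-\nabla v=\nabla\log\rho$, with minimum value $Lv+f-\tfrac12\nabla v^\top a\nabla v$. This verifies that \eqref{eq:hjb} is indeed the HJB equation of \eqref{eq:soc}.
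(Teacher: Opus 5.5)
Your proof is correct and follows essentially the paper's route. The paper runs the same Cole--Hopf computation in the opposite direction: it writes $\nabla\rho=-\rho\nabla v$ and $\sum_{ij}a_{ij}\partial_{ij}\rho$ in terms of $v$ to get $L\rho=\rho\bigl[-Lv+\tfrac12\nabla v^\top a\nabla v\bigr]$, then divides by $\rho>0$. It identifies the HJB by the same completion of squares, phrased through It\^o's formula for $v(\tilde X_t)$ instead of your pointwise minimization over $u$; note that uniqueness of the minimizer $u=-\nabla v$ uses $a\succ 0$ from Assumption~\ref{A1.elliptic}.
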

We hence have a candidate for the optimal controller of~\eqref{eq:soc} given by
\begin{align}
  u^*(x) = -\grad v(x) = \grad\log\rho(x) = \grad \rho(x) / \rho(x).
  \label{eq:u-star-exit}
\end{align} 
Theorem~\ref{thm:exit-time-SOC} below establishes that its minimal value equals $\min F = -\log Z$ and that its optimally controlled law is $P^{*,x}$.

\subsection{Verification Theorem of OSC up to an Exit Time}
Let ${\tP}^{*,x}$ be the law of the controlled process~\eqref{eq:general-controlled-sde} under $u^*$, starting at $x$.
In the remainder of the section, we establish that $u^*$ in~\eqref{eq:u-star-exit} is the optimal controller. Moreover, using a variational characterization of the optimally controlled path measure, we show that ${\tP}^{*,x}$ minimizes the free energy~\eqref{eq:free-energy} over all path measures on $(\Omega,\Fcal_{\tau_D})$.  
We begin by presenting an explicit formula for the Radon-Nikodym density of $\tP^{*,x}$ relative to $P^x$ on $(\Omega,\Fcal_{\tau_D})$. To that end, we invoke Girsanov's formula. In the classic fixed time-horizon case, one needs a suitable regularity condition, such as Novikov's condition ~\cite[Theorem 8.6.3]{oksendal2013stochastic}. By contrast, in the exit time setting, such a condition might not hold as it requires an exponential moment bound, which is hard to verify for metastable systems. To address this challenge, the proof presented in Appendix~\ref{pf:lemma:girsanov-density} only requires $\tau_D < \infty$ a.s.\,  guaranteed by Lemma~\ref{cor:finite-exit-time}. 
\begin{lemma}
Let $u^*=-\grad v$ be as in~\eqref{eq:u-star-exit}, and let $\tilde X_t$ be a solution to the controlled SDE~\eqref{eq:general-controlled-sde} with $u=u^*$, initiated at $\tilde X_0=x$, and under a probability measure $\tP^{*,x}$ on $(\Omega,\Fcal_{\tau_D})$. 
Then, we have
\begin{align}
  \left.\frac{d\tP^{*,x}}{dP^x}\right|_{\Fcal_{\tau_D}}
  = \frac{e^{-H(\chi)}}{\rho(x)},
  \qquad
  \E_{x} \left[\frac{e^{-H(\chi)}}{\rho(x)}\right] = 1.
  \label{eq:girsanov-density}
\end{align}
\label{lemma:girsanov-density}
\end{lemma}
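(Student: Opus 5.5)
Under $P^x$, the candidate density $M_t := \rho(X_{t\wedge\tau_D})\exp\bigl(-\int_0^{t\wedge\tau_D} f(X_s)\,ds\bigr)/\rho(x)$ is, by Itô's formula and the equation $L\rho - f\rho = 0$ in $D$ (Lemma~\ref{lemma.FK}), a stochastic exponential. Indeed,
\begin{align*}
dM_t = \mathbf{1}_{\{t<\tau_D\}}\,M_t\,\nabla\log\rho(X_t)^\top\sigma(X_t)\,dW_t
= \mathbf{1}_{\{t<\tau_D\}}\,M_t\,u^*(X_t)^\top\sigma(X_t)\,dW_t ,
\end{align*}
so that
\begin{align*}
M_t=\exp\Bigl(\int_0^{t\wedge\tau_D} u^{*\top}\sigma\,dW_s-\tfrac12\int_0^{t\wedge\tau_D}u^{*\top}a\,u^*\,ds\Bigr).
\end{align*}
Since $\rho\in C^{2,\alpha}(\bar D)$ with $\rho\ge m>0$ on $\bar D$, both $\rho$ and $u^*=\nabla\log\rho$ are bounded on $\bar D$. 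Hence $M_t$ is a positive local martingale which is moreover uniformly bounded:
\begin{align*}
0< M_t\le \frac{\max_{\bar D}\rho}{\rho(x)}\le \frac{1}{\rho(x)},
\end{align*}
because $f\ge0$ and $\rho\le1$. A bounded local martingale is a true, uniformly integrable martingale, so no Novikov-type condition is needed. This bypasses the exponential-moment issue mentioned before the lemma.

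By Lemma~\ref{cor:finite-exit-time}, $\tau_D<\infty$ a.s., so $M_t\to M_{\tau_D}$ a.s. By continuity of $\rho$ on $\bar D$ and the boundary condition $\rho|_{\partial D}=e^{-\Phi}$, we get $M_{\tau_D}=e^{-H(\chi)}/\rho(x)$. Bounded convergence then gives $\E_x[M_{\tau_D}]=\lim_t \E_x[M_t]=M_0=1$, which is the second identity in \eqref{eq:girsanov-density}. Uniform integrability also gives $M_{t}=\E_x[M_{\tau_D}\mid\Fcal_{t}]$.

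Next define $Q$ on $(\Omega,\Fcal_{\tau_D})$ by $dQ = M_{\tau_D}\,dP^x$. For each $t$, $Q$ restricted to $\Fcal_{t\wedge\tau_D}$ has density $M_t$. By Girsanov's theorem on $[0,t]$ (legitimate since $M$ is a true martingale), the process
\begin{align*}
\tilde W_s = W_s-\int_0^{s\wedge\tau_D}\sigma(X_r)^\top u^*(X_r)\,dr
\end{align*}
is a $Q$-Brownian motion up to time $t$, and consistency lets us take $t\to\infty$. Under $Q$, the stopped canonical process therefore solves \eqref{eq:general-controlled-sde} with $u=u^*$ up to $\tau_D$, since $\sigma\sigma^\top u^*=a u^*$. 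The drift $b+au^*$ is Lipschitz on $\bar D$; if needed, extend it Lipschitz-continuously outside $D$, which does not affect the stopped law. Pathwise uniqueness then gives weak uniqueness in law of the stopped process. Hence $Q=\tP^{*,x}$ on $\Fcal_{\tau_D}$, because $\Fcal_{\tau_D}$ is generated by the stopped path $X_{\cdot\wedge\tau_D}$, a standard fact for canonical path space. This gives the first identity in \eqref{eq:girsanov-density}.

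The main obstacle is the last identification step. It requires justifying that $\Fcal_{\tau_D}=\sigma(X_{\cdot\wedge\tau_D})$ on $\Omega$ (Galmarino's test), and that uniqueness in law holds for the controlled SDE stopped at $\tau_D$ when $u^*$ is defined only on $\bar D$. I would handle the first via Galmarino's theorem. For the second, I would extend $u^*$ to a bounded Lipschitz field on $\Xset$ and note that the stopped laws agree for any extension.
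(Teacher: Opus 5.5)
Your proposal is correct and follows essentially the same route as the paper. You identify the Girsanov exponential with $\rho(X_{t\wedge\tau_D})\exp\bigl(-\int_0^{t\wedge\tau_D} f\,ds\bigr)/\rho(x)$ and use its uniform bound $\max_{\bar D}\rho/\rho(x)$ to make the stochastic exponential a true martingale without Novikov. You then pass to $\tau_D$ and conclude by Girsanov plus uniqueness in law, exactly as the paper does. The differences are cosmetic: you apply It\^o to $\rho$ via $L\rho=f\rho$ rather than to $v$ via the HJB equation, and you obtain $\E_x[e^{-H}/\rho(x)]=1$ by bounded convergence instead of citing the Feynman--Kac identity of Lemma~\ref{lemma.FK}. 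Your remarks on Galmarino's test and on the stopped law not depending on the Lipschitz extension of $u^*$ are more careful than the paper's Step~4.
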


We now present the verification theorem the proof of which can be found in Appendix~\ref{pf:thm:exit-time-SOC}.
\begin{theorem}
Let Assumptions~\ref{A1}--\ref{A3} hold, and let $u^*$ be the controller in~\eqref{eq:u-star-exit}. Then,
  \begin{enumerate}
    \item[\textnormal{(i)}] [Optimality]
      $v(x) = J(u^*,x) = \min_u J(u,x)$ for all $x\in D$.
    \item[\textnormal{(ii)}] [Variational inequality]
    For every probability measure ${\tP}$ on $(\Omega,\Fcal_{\tau_D})$, 
    \begin{align}
    J(u^*,x) \le \DKL({\tP}\|P^x)+\tilde\E_x[H(\chi)],
    \label{eq:variational-ineq-exit}
    \end{align}
    with equality if and only if $d{\tP}/dP^x=e^{-H(\chi)}/\rho(x)$ on $\Fcal_{\tau_D}$;
    equivalently, if and only if ${\tP}={\tP}^{*,x},$ corresponding to the law of the
    optimally-controlled process restricted to $\Fcal_{\tau_D}$.
  \end{enumerate}
\label{thm:exit-time-SOC}
\end{theorem}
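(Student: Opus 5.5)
The plan is to derive both parts from the Gibbs variational principle (Proposition~\ref{prop:gibbs}) and the Girsanov density of Lemma~\ref{lemma:girsanov-density}, by expressing the cost of every admissible control as a free energy. First, by Lemma~\ref{lemma.FK} and Lemma~\ref{lemma:girsanov-density}, $Z=\E_x[e^{-H}]=\rho(x)>0$. So Proposition~\ref{prop:gibbs} applies, and its minimizer $P^{*,x}$ coincides with $\tP^{*,x}$ on $\Fcal_{\tau_D}$. The minimal free energy is $-\log\rho(x)=v(x)$.

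\textbf{Part (ii).} For any $\tP$ with $\DKL(\tP\|P^x)<\infty$, we write
\begin{align}
\DKL(\tP\|P^x)+\tE[H]=\DKL(\tP\|\tP^{*,x})-\log\rho(x).
\end{align}
This follows by splitting $\log(d\tP/dP^x)=\log(d\tP/d\tP^{*,x})+\log(d\tP^{*,x}/dP^x)$ and inserting $d\tP^{*,x}/dP^x=e^{-H}/\rho(x)$. We must check that $H\ge0$ makes each term meaningful: if $\tE[H]=\infty$ the inequality is trivial, and if $\DKL(\tP\|P^x)=\infty$ it is trivial as well. Nonnegativity of relative entropy then gives inequality~\eqref{eq:variational-ineq-exit}, with $-\log\rho(x)$ on the left in place of $J(u^*,x)$. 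Equality holds iff $\tP=\tP^{*,x}$ on $\Fcal_{\tau_D}$. Once part (i) identifies $J(u^*,x)=v(x)=-\log\rho(x)$, this is exactly (ii).

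\textbf{Part (i).} For each admissible $u$, let $\tP^{u,x}$ be the law of~\eqref{eq:general-controlled-sde}. The key identity to establish is
\begin{align}
\DKL(\tP^{u,x}\|P^x)\big|_{\Fcal_{\tau_D}}=\tE_x\Bigl[\int_0^{\tilde\tau_D}\tfrac12 u^\top a u\,dt\Bigr],
\end{align}
so that $J(u,x)=F(\tP^{u,x})\ge -\log\rho(x)=v(x)$, and $J(u^*,x)=F(\tP^{*,x})=v(x)$. I would prove this by localization. On $\Fcal_{\tau_D\wedge T}$, Novikov's condition holds because $u$ is bounded on $\bar D$ and $a$ is bounded, so Girsanov gives
\[
\log\frac{d\tP^{u}}{dP}=\int_0^{\tau_D\wedge T}u^\top\sigma\,d\tilde W+\tfrac12\int_0^{\tau_D\wedge T}u^\top a u\,dt
\]
under $\tP^u$. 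The stochastic integral has zero mean, which yields the identity with horizon $\tau_D\wedge T$. We then let $T\to\infty$. The right side converges by monotone convergence. The left side converges because relative entropies on an increasing family of $\sigma$-algebras increase to the relative entropy on $\Fcal_{\tau_D}=\sigma(\cup_T\Fcal_{\tau_D\wedge T})$ (up to null sets, using $\tau_D<\infty$ a.s.). This last step uses that the controlled exit time is a.s.\ finite, which follows from Lemma~\ref{cor:finite-exit-time} applied to the drift $b+au$: that drift is bounded and Lipschitz on $\bar D$, and can be extended outside $\bar D$ so that Assumptions~\ref{A1}--\ref{A2} still hold.

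As a cross-check on $J(u^*,x)=v(x)$, I could alternatively apply It\^o's formula to $v(\tilde X_{t\wedge\tilde\tau_D})$ using the HJB of Lemma~\ref{lemma.V}. Completing the square gives $Lv+a u\cdot\nabla v+f+\tfrac12u^\top au=\tfrac12(u+\nabla v)^\top a(u+\nabla v)\ge0$. Since $v\in C^{2}(\bar D)$ is bounded and $\E[\tilde\tau_D]<\infty$, optional stopping and dominated convergence would then give $J(u,x)\ge v(x)$, with equality at $u=-\nabla v$.

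\textbf{Main obstacle.} I expect the hardest step to be the passage $T\to\infty$ in the entropy identity, that is, monotone convergence of relative entropy along the stopped filtrations. If that route is delicate, I would fall back on the It\^o/HJB verification argument for (i), which needs only $\E[\tilde\tau_D]<\infty$. For (ii), only the Gibbs computation with Lemma~\ref{lemma:girsanov-density} is needed, together with $J(u^*,x)=v(x)$.
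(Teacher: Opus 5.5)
Your proposal is correct. Part (ii) is essentially the paper's argument: the paper invokes Proposition~\ref{prop:gibbs} with $Z=\rho(x)$, while you re-derive it via $\DKL(\tP\|P^x)+\tE[H]=\DKL(\tP\|\tP^{*,x})-\log\rho(x)$. Your primary route for Part (i), however, genuinely differs from the paper's.

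The paper proves (i) by exactly the verification argument you keep only as a fallback. It applies the localized Dynkin formula (Lemma~\ref{lem:localized-dynkin}) to $v$ along the controlled process and completes the square via Lemma~\ref{lemma.V}. It then passes $k\to\infty$ by bounded and monotone convergence with $\tE[\tilde\tau_D]<\infty$, giving $J(u,x)=v(x)+\tE\bigl[\int_0^{\tilde\tau_D}\tfrac12(u-u^*)^\top a(u-u^*)\,dt\bigr]$.

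You instead prove $J(u,x)=F(\tP^{u,x})$ for every $u\in\Ucal$ and deduce (i) from the Gibbs principle. That works as sketched:
\begin{itemize}
\item on $\Fcal_{\tau_D\wedge T}$ the Girsanov integrand is bounded, so Novikov's condition is trivial;
\item relative entropy increases to its value on $\sigma\bigl(\bigcup_T\Fcal_{\tau_D\wedge T}\bigr)$;
\item this $\sigma$-algebra coincides with $\Fcal_{\tau_D}$ on $\{\tau_D<\infty\}$, which has full measure under both $P^x$ and $\tP^{u,x}$ by Lemma~\ref{cor:finite-exit-time} and Corollary~\ref{cor:finite-exit-time-control}.
\end{itemize}

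Your route buys a conceptual unification. Part (i) becomes a corollary of Part (ii), and the suboptimality gap is identified as $\DKL(\tP^{u,x}\|\tP^{*,x})$ on $\Fcal_{\tau_D}$. This is the exit-time counterpart of the fixed-horizon identity the paper quotes before~\eqref{eq:soc} but does not establish in this setting.

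The paper's route is more economical. It uses only It\^o's formula and convergence theorems, with no change of measure for a general $u$ and no entropy-convergence theorem. It also directly yields the pathwise uniqueness statement stated after the theorem, $u(\tilde X_t)=u^*(\tilde X_t)$ for a.e.\ $t$. Yours gives uniqueness only at the level of laws, $\tP^{u,x}=\tP^{*,x}$ on $\Fcal_{\tau_D}$.

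In either route, you should state explicitly that $u^*=-\nabla v\in C^{1,\alpha}(\bar D)$ is Lipschitz. Then $u^*\in\Ucal$, so the minimum over $\Ucal$ is actually attained.
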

Part~(i) of the above theorem states that the minimal control cost is $\min_{u\in\Ucal} J(u,x) = v(x) = -\log\rho(x)$, and by Part~(ii), that the free energy~\eqref{eq:free-energy} satisfies $F(\tP) \geq v(x)$ for \emph{every} probability measure on $(\Omega,\Fcal_{\tau_D})$, with equality precisely for $\tP = \tP^{*,x}$. Hence,  the two minimal values of~\eqref{eq:free-energy} and~\eqref{eq:soc} coincide, and the optimal path measure that minimizes the free energy is exactly the one generated by the optimal controller.
In the proof of Part (i) of Theorem \ref{thm:exit-time-SOC}, we also show that the minimizer is unique in the following sense: $J(u,x)=v(x)$ for some $u\in\Ucal$ if and only if $u(\tilde X_t)=u^*(\tilde X_t)$ for Lebesgue-a.e. $t\in[0,\tilde\tau_D]$, $\tP^{u,x}$-a.s.

By the definition in~\eqref{eq:rho-exit}, $\rho(x)$ is large where the trajectory of the original dynamics \eqref{eq:ref-sde} tends to exit quickly and small where it dwells in nonreactive regions. The optimal controller in \eqref{eq:u-star-exit} therefore points toward states from which a trajectory is more likely to exit. The normalization by $\rho>0$ in \eqref{eq:u-star-exit} provides a stronger push towards where $\rho$ is small and the original dynamics is least likely to escape; thus the controller acts more aggressively when deep inside the source basin.

Using Lemma \ref{lemma:girsanov-density}, we can recover statistics of the original dynamics \eqref{eq:ref-sde} from the accelerated path alone. We present a reweighting mechanism via Girsanov's change of measure in Appendix \ref{sec:reweight}. 

\section{Koopman-Based Construction of the Optimal Controller}
\label{sec:opt-controller}
Theorem~\ref{thm:exit-time-SOC} reduces the problem of sampling from $P^{*,x}$ to computing the optimal controller $u^*=\grad\log\rho$, where $\rho$ solves the elliptic problem \eqref{eq:elliptic-bvp}. For TPS in Example~\ref{ex:tps}, applying this result requires three ingredients: the source and target sets $A_i,A_j$, the committor function $q$ which defines the running cost $f_\beta$, and $\rho$ itself. Each can be characterized through the Koopman generator $\Lmd$. Specifically, the metastable sets can be computed through the leading eigenfunctions of $\Lmd$, $q$ through $\Lmd q=0$ in \eqref{eq:committor-bvp}, and $\rho$ through $\Lmd\rho-f_\beta\rho=0$ in
\eqref{eq:elliptic-bvp}.

\subsection{Constructing Metastable Cores and Committor Functions}
\label{sec:committor}
The spectrum of the linear operator $\Lmd$ reveals global geometric structures of the underlying dynamics \citep{mezic2005spectral,mauroy2016global}. In molecular dynamics, for example, the metastable sets can be estimated using leading eigenfunctions whose eigenvalues are close to zero, followed by a spectral gap ~\citep{mardt2018vampnets,hou2023sparse,hou2026SICON}. 
Specifically, the leading eigenvalues
$0 = \lambda_0 > \lambda_1 \geq \cdots \geq \lambda_{m-1}  \gg \lambda_{m} \cdots$ also capture the slow-mode, inter-basin dynamics. By truncating at $m$ modes, fast modes are removed as they encode intra-basin dynamics. The leading eigenfunctions $\{\psi_k\}_{k=1}^{m-1}$ are approximately constant within each metastable set but take different values across them; we omit $k=0$ since $\psi_0 \equiv 1$. Let $\Psi(x)=[\psi_1(x),\dots,\psi_{m-1}(x)] \in \Rset^{m-1}$, where $\{x_n\}_{n=1}^N\subset \Xset$ are data samples and the eigenpairs are estimated from them via kernel EDMD in Section \ref{sec:preliminaries}. We then apply $k$-means clustering to $\{\Psi(x_n)\}_{n=1}^N$, which partitions the samples into $m$ clusters with index sets $\Ical_1,\dots,\Ical_m$. For each cluster $k$ and the set of leading eigenfunctions $\psi_l$, for $l=1,\dots,m-1$, we define the core level of the $l$-th eigenfunction as
\begin{align}
\theta^{(k)}_l := \frac{1}{|\Ical_k|}\sum_{n\in\Ical_k}\psi_l(x_n), \qquad \theta^{(k)}:=\big[\theta^{(k)}_1,\dots,\theta^{(k)}_{m-1}\big]^\top,
\label{eq:core-level}
\end{align}
so that $\theta^{(k)}$ is the $k$-means centroid of cluster $k$. 
Let $\varepsilon_k>0$ be a radius.
Since each $\psi_l$ is approximately constant on a metastable set with value approximately $\theta^{(k)}_l$ on the $k$-th one, we define the core of the $k$-th metastable set for $k = 1,\cdots,m$ as
\begin{align}
A_k := \{x\in \Xset : \|\Psi(x)-\theta^{(k)}\|^2 \leq \varepsilon_k^2\}.
\label{eq:core}
\end{align}

Leveraging the global geometric structure carried by the leading eigenfunctions of the Koopman generator, we next approximate the committor function $q$ via an ansatz $q = \tilde\psi + h$, where $\tilde\psi$ is a spectral term constructed from leading eigenfunctions and $h \in \Hcal_\kappa$ is a remainder with $\Hcal_\kappa$ is an RKHS equipped with kernel $\kappa$, since in general, the solution of \eqref{eq:committor-bvp} may not lie in $\text{span}\{\psi_l\}_{l=1}^{m-1}$. The spectral term $\tilde\psi$ contributes to the global shape of the transition, and also captures the slow coordinate along which the transition occurs, without requiring a reaction coordinate to be specified in advance.
Due to space limitations, detailed construction of $\tilde\psi$ that maps the core of $A_i$ to $0$ and the core of $A_j$ to $1$ is presented in Appendix~\ref{supp.committor}. 
We compute $h \in \Hcal_\kappa$ via $h(x) = \sum_{l\in\Ical_q}\alpha_l\,\kappa(x,x_{l})$ with $\{x_l\}_{l\in\Ical_q}\subset D_q$ and $\Ical_q=\{1,\dots,N_q\}$. The coefficients $\alpha$ are obtained from an equality-constrained convex quadratic program that minimizes the residual of~\eqref{eq:committor-bvp} in $D_q$, which is also deferred to in Appendix~\ref{supp.committor}. 
Putting together, we compute the committor function via
\begin{align}
\begin{aligned}
\hat{q}(x) := \tilde\psi(x) +\sum_{l\in\Ical_q}\alpha_l\,\kappa(x,x_{l}),
\end{aligned}
 \label{eq:committor-final}
\end{align}

\subsection{Kernel-Based Approximation of the Optimal Controller}
\label{sec:koopman-exit}
Unlike the fixed-horizon case studied in~\citep{raginsky2026},  where $u^*$ depends on time, the exit-time optimal controller $u^*$ in~\eqref{eq:u-star-exit} is stationary. To compute $u^*$, we focus on approximating $\rho$ defined in~\eqref{eq:rho-exit} so that it solves the elliptic boundary value problem~\eqref{eq:elliptic-bvp}. By definition of $\rho$, and $f_\beta = \beta(1-q^{(i,j)})$, $\Phi \equiv 0$ in Example \ref{ex:tps}, we have
\begin{align}
  \rho(x) = \E_x  \left[e^{-\beta {H_0}}\right],
  \qquad
  {H_0} = \int_0^{\tau_D}\bigl(1-q^{(i,j)}(X_t)\bigr)\,dt,
  \label{eq:W-def}
\end{align}
where ${H_0}$ is the time spent in nonreactive regions before reaching the target. We need to handle two regimes, capturing the transition region and regions within the basins. Directly approximating $\rho$ in an RKHS can be challenging, as the behavior of $\rho$ in the two regimes may be quite different. Near $\partial A_j$, the target is reached almost immediately and $\rho \approx 1$. Within the source basin, however, the transition is a rare event, and $\rho$ is small as the trajectory lingers there. Since $u^*$ depends on $\rho$ through $\log \rho$, we seek an ansatz of the form $\rho:= \rho_0(x) w(x)$ so that $\log \rho = \log \rho_0 + \log w$. Here we choose $\rho_0$ in closed form to capture how $\rho$ varies across the two regimes, and use a residual ${w}$ to solve for $\Hcal_\kappa$. 
We defer additional details to Appendix~\ref{sec:supp.rho} and only mention here that we approximate $\rho$ via a closed-form factor $\hat{\rho}$ as
\begin{align}
    \hat{\rho}(x) := \rho_0(x)\,w(x),
    \quad
    \rho_0(x):= \hat q(x)+ \bigl(1-\hat q(x)\bigr) r,\quad
    w \in \Hcal_\kappa,
    \label{eq:rho-kernel-expansion}
\end{align}
where $\Hcal_\kappa$ is an RKHS with kernel $\kappa$, $\hat{q}$ as in~\eqref{eq:committor-final} and $r = \frac{1}{1+\beta T_0}$, with $T_0 >0$ which encodes the time-scale of the original dynamics; for example, we set $T_0 = \frac{1}{|\lambda_1|},$ where $\lambda_1$ is the leading nontrivial Koopman eigenvalue. 
Let $\{x_i\}_{i=1}^N = \{x_i\}_{i\in I_D}\cup\{x_i\}_{i\in I_{\partial D}}$, with interior points $\{x_i\}_{i\in I_D} \subset D$ and boundary points $\{x_i\}_{i\in I_{\partial D}} \subset \partial D$. We approximate $w \in \Hcal_\kappa$ via $w= \sum_{i=1}^N c_i\,\kappa(x,x_i)$ (Appendix~\ref{sec:supp.rho}). Applying the chain rule to $\log\hat{\rho}=\log\rho_0+\log w$, we have
\begin{align}
  \hat{u}(x)
  = \nabla \log \hat{\rho}(x)
  = \frac{(1-r)\,\nabla \hat q(x)}{\hat q(x) + \bigl(1-\hat q(x)\bigr) r}
  + \frac{\nabla w(x)}{w(x)}.
  \label{eq:uK}
\end{align}
Thus, the problem of solving for $u^*$ reduces to solving a convex quadratic program with linear equality constraints, whose solution is characterized by a single linear KKT system as in Appendix~\ref{sec:supp.rho}.
The method is summarized in Algorithm~\ref{sec:algorithm-et}.

\begin{algorithm}[ht]
\caption{Summary of the Algorithm in Four Stages}
\label{sec:algorithm-et}
\begin{algorithmic}[0]
  \State \textbf{Stage I.} Construct leading eigenpairs $(\lambda_k,\psi_k)_{k=1}^{m-1}$ of the Koopman generator $\Lmd$ and identify metastable cores.
  \State \textbf{Stage II.} Compute the committor function $\hat{q}$ via~\eqref{eq:committor-final} using the leading Koopman eigenfunctions.
  \State \textbf{Stage III.} Construct the optimal controller $\hat{u}$ via~\eqref{eq:uK}.
  \State \textbf{Stage IV.} Sample starting points from the source set $A_i$ and integrate the controlled dynamics~\eqref{eq:general-controlled-sde} with $\hat{u}$ from \eqref{eq:uK} until the paths hit the target set.
\end{algorithmic}
\end{algorithm}
\section{Numerical Experiments}
\label{sec5}
We now present empirical evaluations of the proposed method on two synthetic systems, namely, the two-channel double-well system (Figure~\ref{fig:2ch-paths}) and the M\"uller-Brown system (Figure~\ref{fig:muller_brown_paths}), and a real-world alanine dipeptide (Figure~\ref{fig:alanine_sample_paths}). 
Due to space limitations, experimental details and one additional example are deferred to Appendix~\ref{sec:experiments}. In all these experiments, a path is a hit at its first entrance to the target, and is truncated at the numerical censor $T_{\max}$ otherwise. We report the target-hit percentage (THP), which is the fraction of paths that hit the target set before $T_{\mathrm{max}}$. We also remark that alanine dipeptide does not satisfy Assumption \ref{A1.elliptic}; nevertheless, our algorithm can be applied beyond the confines of the theoretical analysis, and this experiment serves as a demonstration that it performs well even in such cases. 

\begin{figure}[t]
  \centering
  \includegraphics[width=0.9\linewidth]{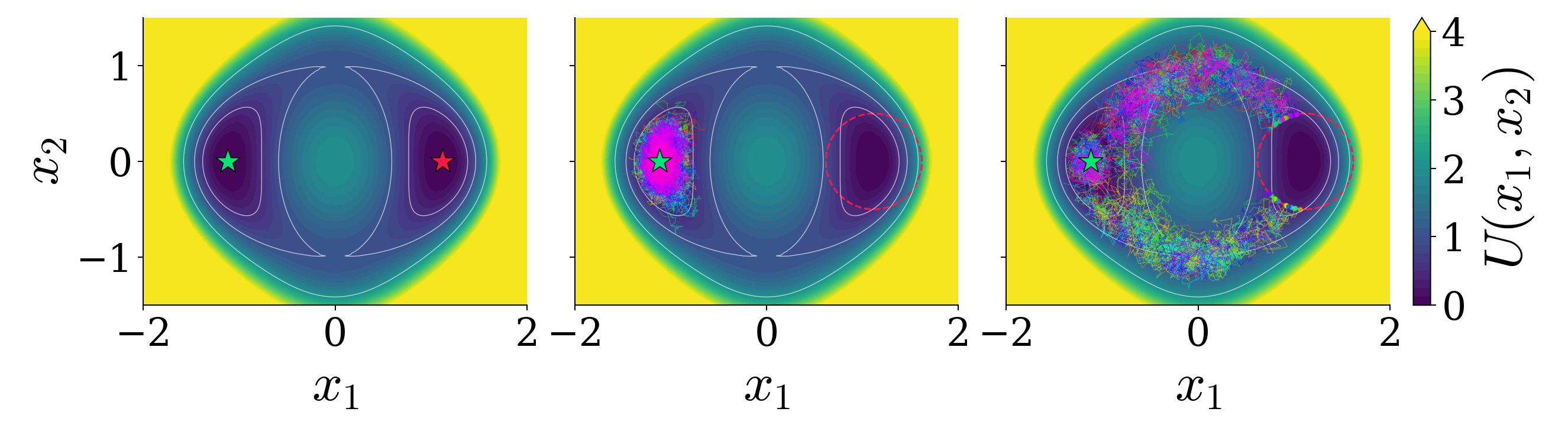}
  \caption{
  Two-channel double well. The potential has two 
  minima at $(\pm\sqrt{5}/2,\,0)$ connected by two channels around $(0,\pm 1)$. Following Algorithm~\ref{sec:algorithm-et}, we compute the Koopman eigenfunctions and extract the metastable cores $A_{\mathrm{left}}$ and $A_{\mathrm{right}}$ as described in Section~\ref{sec:committor}. Their minima are marked by the green and red stars, and the dashed red curve marks $\partial A_{\mathrm{right}}$. We sample $1000$ uncontrolled and $1000$ controlled paths, all started at the minimum of $A_{\mathrm{left}}$. (Left) Potential energy landscape. (Middle) Uncontrolled paths: THP($T_{\mathrm{max}}$) = $0\%$. (Right) Proposed method: THP($T_{\mathrm{max}}$) = $99.8\%$.}
  \label{fig:2ch-paths}
\end{figure}

\begin{figure}[t]
    \centering
    \begin{subfigure}[b]{0.31\textwidth}
    \centering
        \includegraphics[height = 3.2cm]{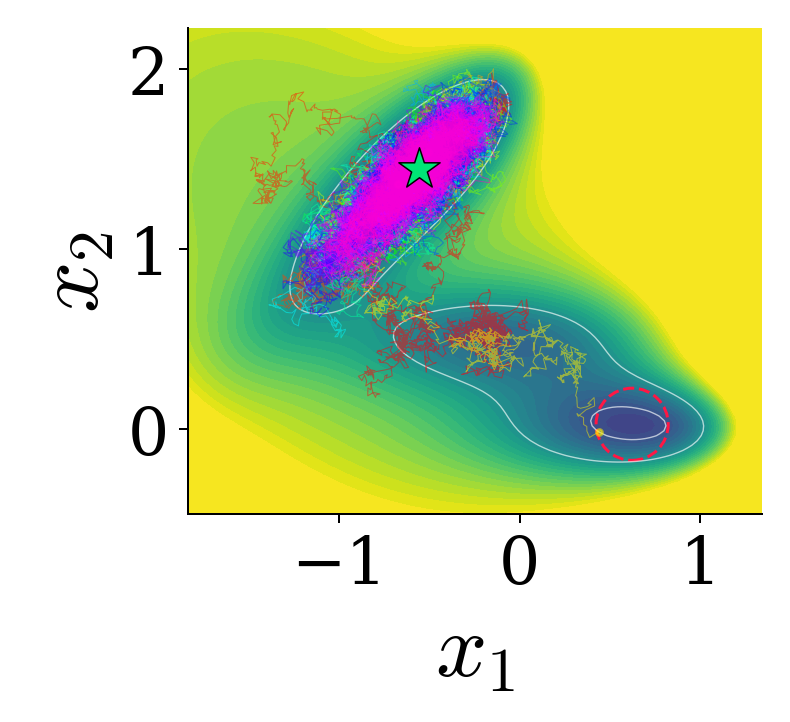}
        \caption{Uncontrolled}
        \label{fig:sub_a_mb_paths}
    \end{subfigure}
    \hfill
    \begin{subfigure}[b]{0.31\textwidth}
    \centering
        \includegraphics[height = 3.2cm]{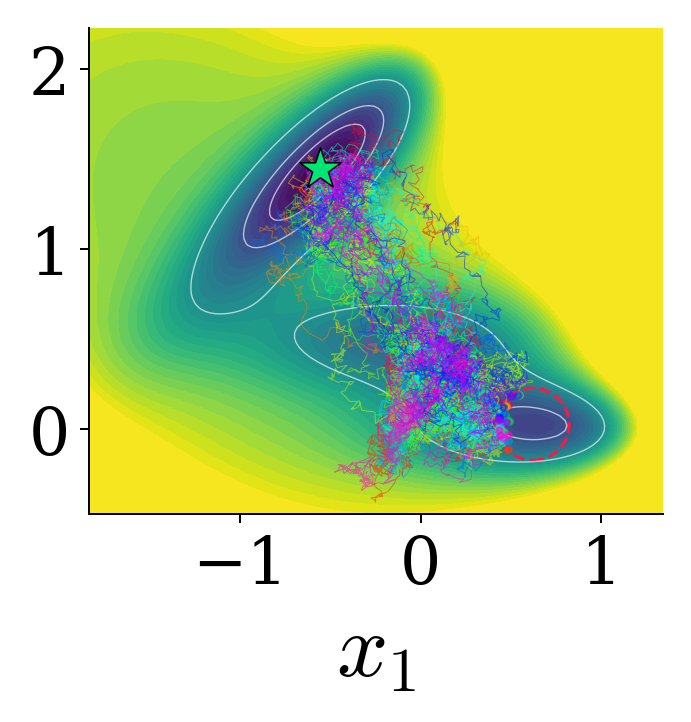}
        \caption{TPS-DPS}
        \label{fig:sub_b_mb_paths}
    \end{subfigure}
    \hfill
    \begin{subfigure}[b]{0.36\textwidth}
    \centering
        \includegraphics[height = 3.2cm]{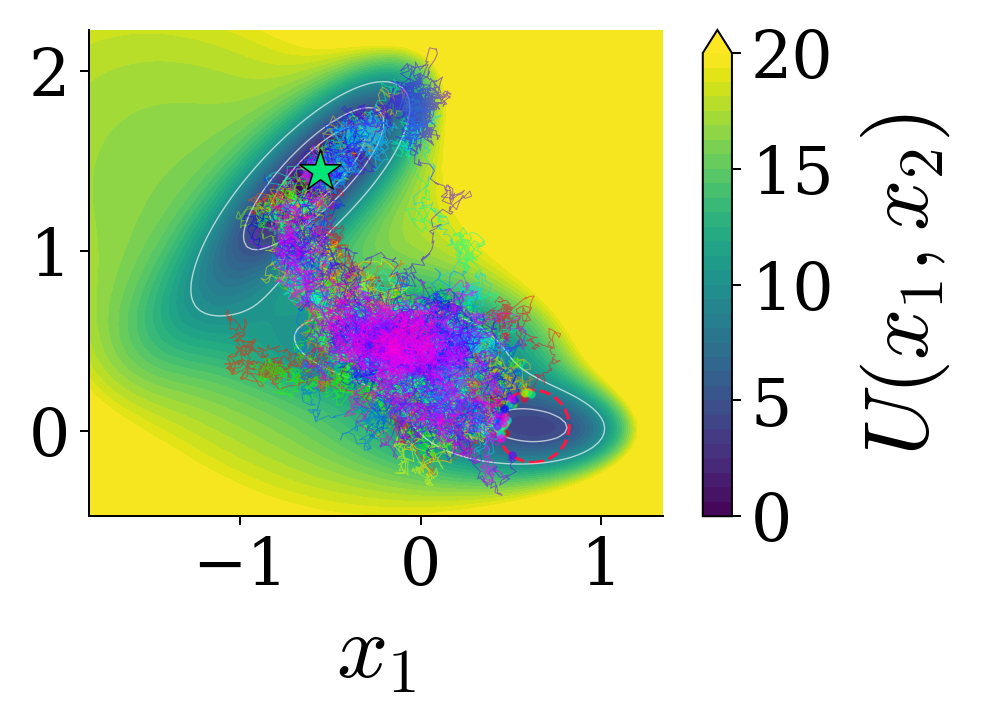}
        \caption{Proposed method}
        \label{fig:sub_c_mb_paths}
    \end{subfigure}
    \caption{M\"uller-Brown potential, with a deep well at $(-0.558, 1.442)$ (green star), a medium well at $(0.623, 0.028)$ (red circle), and a shallow well at $(-0.050, 0.467)$. We aim to sample transitions from the deep well to the medium well within $T_\mathrm{max} = 10\mathrm{ps}$.
    (a) Uncontrolled paths: THP($T_{\mathrm{max}}$) = $1.8\%$. (b) TPS-DPS~\citep{seong2024tpsdps}: THP($T_\mathrm{max}$) = $100\%$, but the paths run over the high energy barrier. (c) Proposed method: THP($T_\mathrm{max}$) = $99.0\%$ using $\beta = 1$, with paths following the low-energy channel through the shallow well rather than crossing the barrier.}
    \label{fig:muller_brown_paths}
\end{figure}

\begin{figure}[t]
    \centering
    \begin{subfigure}[b]{0.3\textwidth}
        \centering
        \includegraphics[height=3cm]{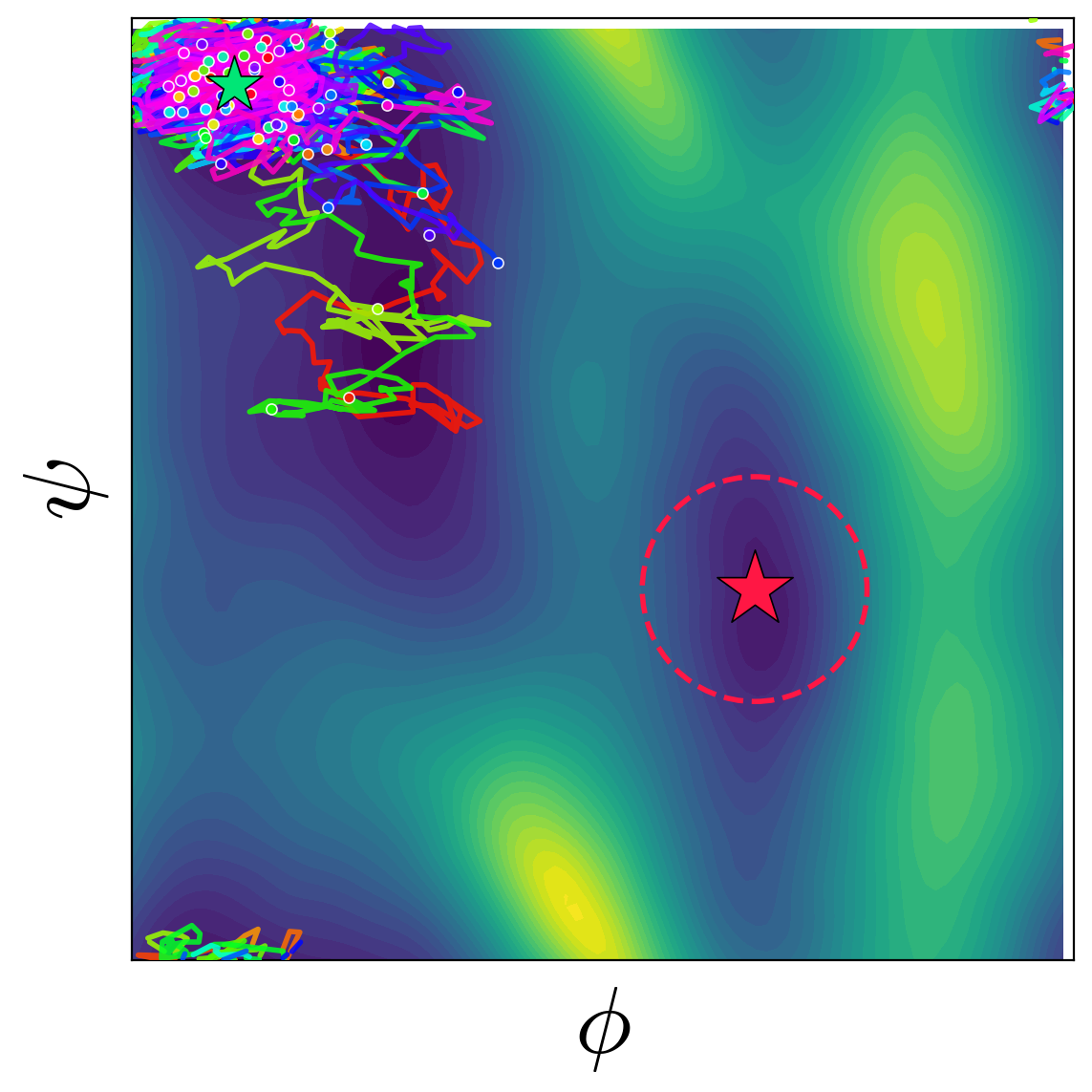}
        \caption{Uncontrolled}
        \label{fig:sub_a}
    \end{subfigure}
    \hfill
    \begin{subfigure}[b]{0.3\textwidth}
        \centering
        \includegraphics[height=3cm]{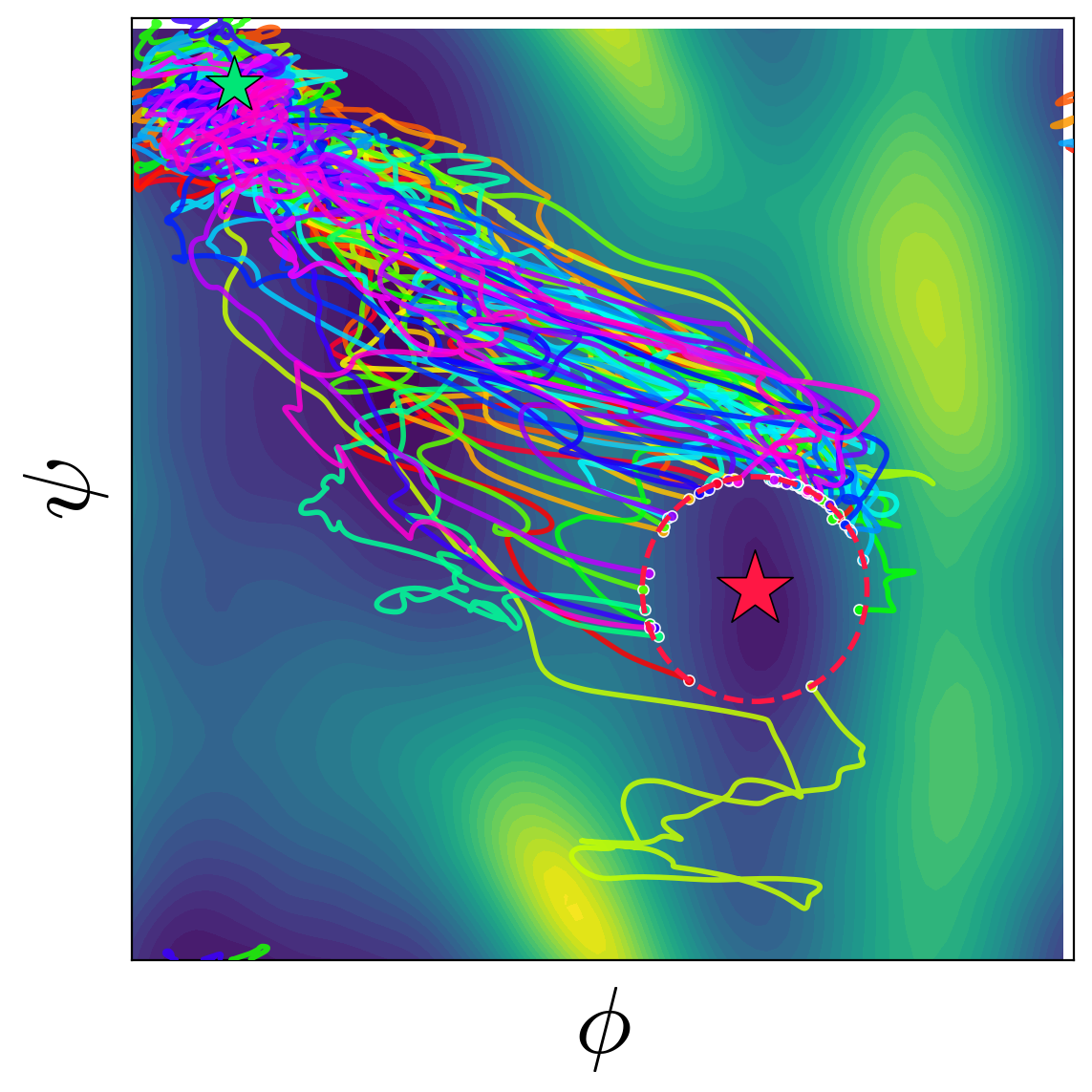}
        \caption{TPS-DPS}
        \label{fig:sub_a}
    \end{subfigure}
    \hfill
    \begin{subfigure}[b]{0.35\textwidth}
        \centering
        \includegraphics[height=3cm]{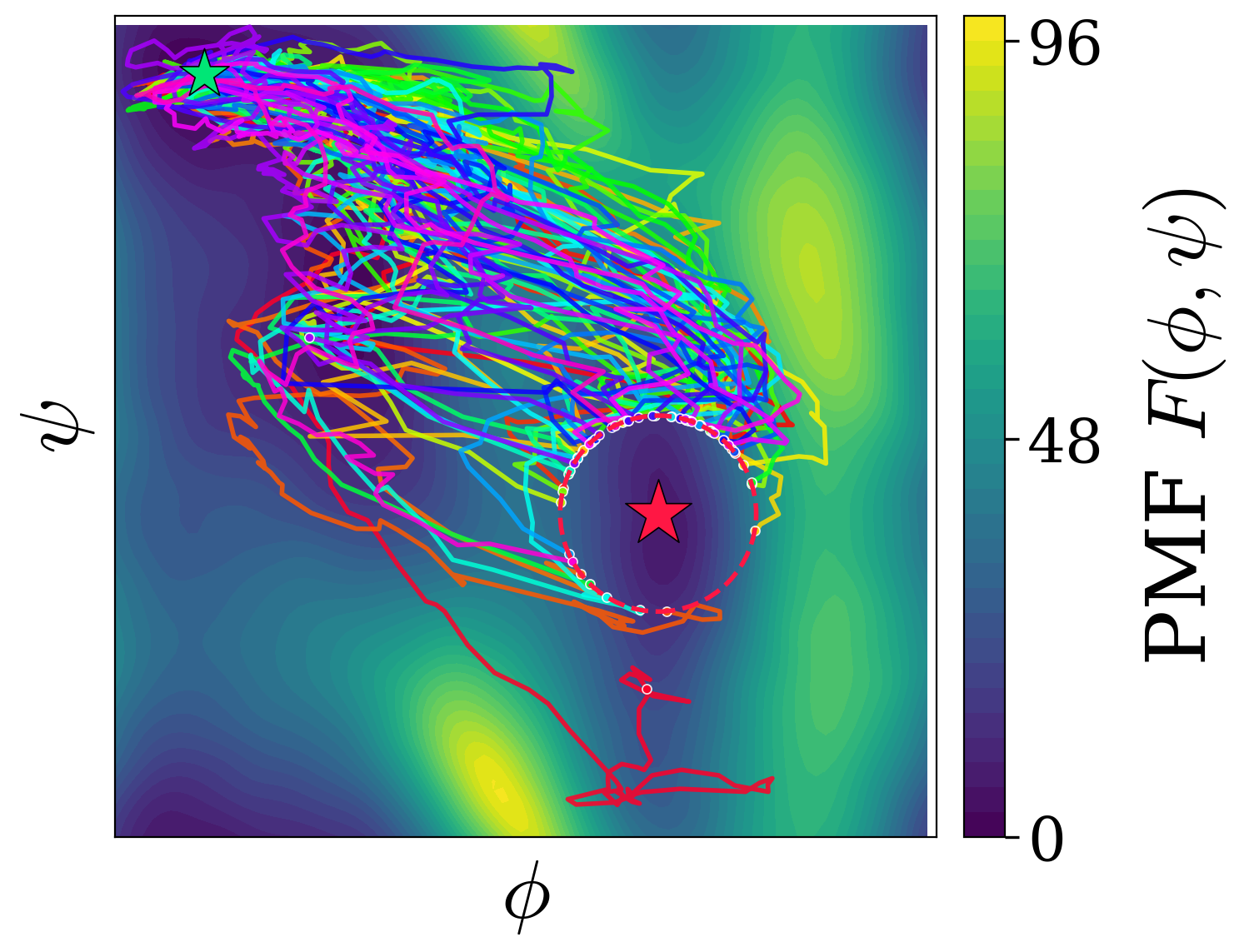}
        \caption{Proposed method}
        \label{fig:sub_b}
    \end{subfigure}
    \caption{Alanine dipeptide in vacuum, shown on the free-energy surface over the backbone dihedral angles $(\phi, \psi)$. We aim to sample the transition from $C5$ conformation (upper left) to the $C7_{\mathrm{ax}}$ conformation (red star) within $T_{\mathrm{max}}=1\mathrm{ps}$. (a) Uncontrolled paths: $(T_{\mathrm{max}})= 0\%$ (b) TPS-DPS: THP $(T_{\mathrm{max}})= 100\%$, obtained by training a neural-network bias with simulation in the loop ($1000$ rollouts, ${\sim}10^6$ gradient updates). (c) Proposed method: THP $(T_{\mathrm{max}})= 93\%$, with a controller obtained in closed form. The setup and computational cost are reported in Appendix~\ref{sec:experiments}.}
    \label{fig:alanine_sample_paths}
\end{figure}

\bibliographystyle{plainnat} 
\bibliography{diffusion}

\appendix
\setcounter{section}{0}
\clearpage 
\section*{\Large \centering Appendix}
\section{The Reproducing Kernel Hilbert Space}
\label{sec:rkhs}
We provide a brief introduction to the real-valued RKHS -- for a more comprehensive treatment, see~\citep{berlinet2011reproducing}. Let $\Xset$ be a subset of a Euclidean space and let $\kappa: \Xset \times \Xset \rightarrow \mathbb{R}$ be a continuous, symmetric, positive semi-definite kernel. Define the RKHS associated with the kernel $\kappa$, $\Hcal_\kappa,$ as the completion of the span of  $\left\{\phi(x):=\kappa(x, \cdot) : x \in \Xset \right\}$, equipped with the inner product $\langle\cdot, \cdot\rangle$, satisfying $\left\langle \phi({x}), \phi({y})\right\rangle=\kappa(x, y)$. Here, $\phi$ is called the feature map for the kernel $\kappa$. The inner product satisfies the \emph{reproducing property}, given by
$\left\langle \phi(x), f\right\rangle=f(x)$, for all $x \in \Xset$ and $f \in \Hcal_\kappa$. Throughout the exposition, we make the standard assumption that $\kappa$ is measurable.

\subsection{Auxiliary Result: A Localized Dynkin Formula}
\label{sec:localized-dynkin}
Several proofs in this paper follow a similar outline: we start with a real-valued function ${h}$ that solves a partial differential equation on $D$, and we aim to evaluate the expectation of ${h}$ along the trajectory at the exit time $\tau_D$. However, It\^o's formula is stated for functions that are $C^2$ on $\Rset^n$, whereas the functions we work with are $C^2$ only on the set $D$. Therefore, we need a mechanism to integrate up to $\tau_D$ at which time the trajectory reaches $\partial D$. In addition, such a $\tau_D$ is a random variable itself. We address this by localizing, following the scheme in the proof of \cite[Theorem 9.1.1]{oksendal2013stochastic}. Specifically, we first integrate to $\sigma_k:= k \wedge \tau_{D_k}$, where $D_k$ is an increasing sequence of open sets in $D$. Then on $[0,\sigma_k]$, the optional stopping theorem \cite[Theorem 1.3.22]{karatzas2014brownian} applies. Letting $k \to \infty$ then takes the result to $\tau_D$, given $\sigma_k \to \tau_D$. The following lemma (localized Dynkin formula) makes this construction precise.

\begin{lemma}
Let Assumption~\ref{A1}--\ref{A3} hold. Let $Y$ be the pathwise unique strong solution of
\begin{align}
  dY_t = {b}(Y_t)\,dt + \sigma(Y_t)\,dW_t, \qquad Y_0 = x \in D,
  \label{eq:generic-sde}
\end{align}
with generator ${L_Y}{h} := {b}^\top\grad{h} + \tfrac12\sum_{ij}a_{ij}\partial_{ij}{h}$, and let $\tau_D := \inf\{t>0: Y_t \notin D\}$. 
Define a sequence of open sets as
\begin{align}
  D_k := \{y \in D : \operatorname{dist}(y,\partial D) > 1/k\}.
  \label{eq:localizing-seq}
\end{align}
Let $\sigma_k := k \wedge \tau_{D_k}$, where $\tau_{D_k}$ is the exit time of $Y$ from $D_k$. Then, we have
\begin{enumerate}
  \item[\textnormal{(i)}] The sequence $(\sigma_k)_{k\ge1}$ is nondecreasing, and $\sigma_k \to \tau_D$ as $k\to\infty$.
  \item[\textnormal{(ii)}] Let ${h} \in C^2(D)\cap C(\bar D)$, 
  ${c} \in C(\bar D)$ with ${c} \ge 0$, and $Z_t := \exp\bigl(-\int_0^t {c}(Y_s)\,ds\bigr)$. Then for every $k$, we have
        \begin{align}
          {h}(Y_{\sigma_k})Z_{\sigma_k} - {h}(x)
          = \int_0^{\sigma_k} Z_t\bigl({L_Y}{h} - {c}{h}\bigr)(Y_t)\,dt
            + \int_0^{\sigma_k} Z_t\,\grad{h}(Y_t)^\top\sigma(Y_t)\,dW_t,
          \label{eq:localized-ito}
        \end{align}
        and
        \begin{align}
          \E\bigl[{h}(Y_{\sigma_k})Z_{\sigma_k}\bigr] - {h}(x)
          = \E\Bigl[\int_0^{\sigma_k} Z_t\bigl({L_Y}{h} 
          - {c}{h}\bigr)(Y_t)\,dt\Bigr].
          \label{eq:localized-dynkin}
        \end{align}
\end{enumerate}
\label{lem:localized-dynkin}
\end{lemma}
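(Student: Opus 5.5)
For part (i), I will first show that $\sigma_k=k\wedge\tau_{D_k}$ is nondecreasing. Since $D_k\subset D_{k+1}$ (because $1/(k+1)<1/k$), the exit time from the smaller set comes first, so $\tau_{D_k}\le\tau_{D_{k+1}}$. Together with $k\le k+1$, this gives $\sigma_k\le\sigma_{k+1}$. Also $\tau_{D_k}\le\tau_D$, because $D_k\subset D$, so every $\sigma_k\le\tau_D$.

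Let $\sigma:=\lim_k\sigma_k\le\tau_D$. I will show $\sigma=\tau_D$ by cases.

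\textbf{Case $\tau_D<\infty$.} Here $\sigma<\infty$, so eventually $\sigma_k=\tau_{D_k}$. At its exit time, $Y_{\tau_{D_k}}\in\partial D_k$, so $\operatorname{dist}(Y_{\tau_{D_k}},\partial D)=1/k$ (the exit point sits in $\bar D$ near the boundary). Path continuity then gives $\operatorname{dist}(Y_\sigma,\partial D)=0$, hence $Y_\sigma\in\partial D$. Therefore $\tau_D\le\sigma$.

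\textbf{Case $\tau_D=\infty$.} Suppose $\sigma<\infty$. The same argument puts $Y_\sigma\in\partial D$, so $\tau_D\le\sigma<\infty$, a contradiction. Hence $\sigma=\infty=\tau_D$.

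Note that $x\in D$ lies in $D_k$ for all large $k$, so $\sigma_k>0$ eventually. Lemma~\ref{cor:finite-exit-time} in fact gives $\tau_D<\infty$ a.s., but that is not needed here.

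For part (ii), I will apply It\^o's formula on $[0,\sigma_k]$. The main obstacle is that $h$ is only $C^2(D)$, and its derivatives may blow up at $\partial D$. I handle this as follows.
\begin{itemize}
\item The closure $\bar D_k$ is compact and contained in $D$. Take a smooth cutoff $\chi_k$ equal to $1$ on $\bar D_k$ with compact support in $D$.
\item Set $h_k:=\chi_k h$. Then $h_k\in C^2_c(\Rset^n)$ and $h_k=h$ on $\bar D_k$.
\item For $t\le\sigma_k$ we have $Y_t\in\bar D_k$, so every term computed for $h_k$ coincides with the corresponding term for $h$.
\end{itemize}

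Now apply the It\^o product rule to $h_k(Y_t)Z_t$, using $dZ_t=-c(Y_t)Z_t\,dt$ ($Z$ has finite variation, so there is no cross-variation term). This gives
$$d\bigl(h_k(Y_t)Z_t\bigr)=Z_t\bigl(L_Yh_k-c\,h_k\bigr)(Y_t)\,dt+Z_t\,\grad h_k(Y_t)^\top\sigma(Y_t)\,dW_t.$$
Integrating up to $\sigma_k$ and replacing $h_k$ by $h$ yields \eqref{eq:localized-ito}.

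To obtain \eqref{eq:localized-dynkin}, I will take expectations. The two integrals are controlled as follows.
\begin{itemize}
\item \textbf{Stochastic integral.} Since $c\ge0$, we have $0<Z_t\le1$. On $\bar D_k$, $\grad h$ is bounded (continuous on a compact set), and $\sigma$ is bounded by Assumption~\ref{A1}. The integrand $Z_t\grad h^\top\sigma\,\mathbf 1_{\{t\le\sigma_k\}}$ is therefore bounded, and the horizon is at most $k$. The stopped integral is then an $L^2$ martingale, and since $\sigma_k\le k$ is a bounded stopping time, optional stopping \cite[Theorem 1.3.22]{karatzas2014brownian} gives it zero mean.
\item \textbf{Drift integral.} $L_Yh-ch$ is bounded on $\bar D_k$, because $b$, $a$ and $c$ are bounded and the derivatives of $h$ up to second order are continuous on that compact set. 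Hence the $dt$-integral is integrable.
\end{itemize}
Taking expectations in \eqref{eq:localized-ito} then gives \eqref{eq:localized-dynkin}.
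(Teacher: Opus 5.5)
Your proof is correct. Part (ii) follows the paper's route essentially verbatim: cut $h$ off to $h_k=\chi_k h\in C^2_c(\Rset^n)$, apply the It\^o product rule with the finite-variation factor $Z_t$, and remove the stochastic integral by optional stopping at the bounded time $\sigma_k\le k$.

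Part (i) uses a different, equally valid argument. The paper fixes $t<\tau_D$ and notes that the compact set $\{Y_s:s\le t\}\subset D$ lies in some $D_{k_0}$, so $\sup_k\sigma_k>t$. You instead show that if $\sigma=\lim_k\sigma_k$ is finite, then the exit points $Y_{\tau_{D_k}}$ lie on the level sets $\{\operatorname{dist}(\cdot,\partial D)=1/k\}$, so $Y_\sigma\in\partial D$ and $\tau_D\le\sigma$. Your version exploits the specific form of $D_k$. The paper's only needs that every compact subset of $D$ sits inside some $D_{k_0}$. You are also right not to invoke Lemma~\ref{cor:finite-exit-time}: its proof uses this lemma, so citing it here would be circular.

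There is one slip in (ii). The claim ``$Y_t\in\bar D_k$ for $t\le\sigma_k$'' fails when $x\notin\bar D_k$, which happens for small $k$, and the identities are asserted for every $k$. In that case $Y_t\notin D_k$ for all small $t>0$, so $\tau_{D_k}=\sigma_k=0$. Then \eqref{eq:localized-ito} and \eqref{eq:localized-dynkin} both reduce to $0=0$. The paper handles this with an explicit case split, and you need one too.

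A smaller point: you take $\chi_k\equiv1$ only on $\bar D_k$, not on an open neighbourhood of it. Your replacement of $\grad h_k$ and $L_Yh_k$ by $\grad h$ and $L_Yh$ at points of $\partial D_k$ still holds, because all derivatives of $\chi_k$ vanish on $D_k$ and hence, by continuity, on its closure. You should either say this or choose $\chi_k\equiv1$ on a neighbourhood of $\bar D_k$, as the paper does.
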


\begin{proof}
By construction of $D_k$ in \eqref{eq:localizing-seq},  $D_k\subset D_{k+1}$, and $\bigcup_k D_k=D$. Moreover, $\bar D_k$ is a compact subset of $D$.

\emph{Part (i)} For every $k \geq 1$, since $D_k\subset D_{k+1}$, we have  $\{t>0:Y_t\notin D_{k+1}\}\subseteq\{t>0:Y_t\notin D_k\}$. Hence, $\tau_{D_k}\leq\tau_{D_{k+1}}$. Since $\sigma_k=k\wedge\tau_{D_k}$ is the minimum of two nondecreasing sequences, it is itself also nondecreasing.

We next show that $\sup_k\sigma_k= \tau_D$ pathwise. Since $D_k \subset D$, we have $\sigma_k \le \tau_D$ for every $k$, so $\sup_k \sigma_k \le \tau_D$. For the reverse direction, fix $\omega$ and $t < \tau_D$. Then $\{Y_s : 0 \le s \le t\}$ is a compact subset of $D$ and hence has strictly positive distance to $\partial D$, so it is contained in $D_{k_0}$ for some $k_0$. For $k>\max(k_0,t)$ we have $D_{k_0}\subseteq D_k$, so $\tau_{D_k}>t$ and $\sigma_k=k\wedge\tau_{D_k}>t$. As $t < \tau_D$ was arbitrary, we have $\sup_k \sigma_k \ge \tau_D$.

\emph{Part (ii)} Fix $k \geq 1$. We distinguish two cases, depending on whether the initial point $x$ is in $\bar D_k$ or not. For the first case, consider $x\notin\bar D_k$. Since $\Xset\setminus\bar D_k$ is open and paths are continuous, there is $\epsilon>0$ such that $Y_t\notin\bar D_k$, and in particular $Y_t\notin D_k$, for all $t\in(0,\epsilon)$. Hence $\tau_{D_k}=0$ and $\sigma_k=0$. Then $t\wedge\sigma_k=0$ for every $t\ge0$. We have $h(Y_0)Z_0-h(x)=0$ as $Z_0=1$ and all integrals from $0$ to $\sigma_k = 0$ are zero. Hence, both sides of \eqref{eq:localized-ito} and \eqref{eq:localized-dynkin} vanish.

For the second case where $x\in\bar D_k$, we first show that $Y_t\in\bar D_k$ for all $t\in[0,\sigma_k]$. Indeed, $Y_0=x\in\bar D_k$, and for $t\in(0,\sigma_k)$, we have $t<\tau_{D_k}$, hence $Y_t\in D_k$. If $\sigma_k>0$, then $Y_{\sigma_k}=\lim_{t\to\sigma_k}Y_t\in\bar D_k$ by path continuity.

Since $\bar D_k$ is a compact subset of the open set $D$, there exists some $\zeta_k \in C_c^\infty(D)$ with $\zeta_k \equiv 1$ on an open neighbourhood $U_k$ of $\bar D_k$, so that in particular, $U_k\subseteq\operatorname{supp}\zeta_k\subset D$. Let ${h}_k$ equal $\zeta_k{h}$ on $D$ and $h_k:=0$ on $\Rset^n \setminus D$. The open sets $D$ and $\Rset^n\setminus\operatorname{supp}\zeta_k$ cover $\Rset^n$, $h_k$ is $C^2$ on the first and identically zero on the second, and the two definitions agree on the overlap. Thus $h_k\in C^2(\Rset^n)$ with compact support. In addition, we have ${h}_k = {h}$, $\grad{h}_k = \grad{h}$ and ${L_Y}{h}_k = {L_Y}{h}$ on $U_k$.

Since $Z_t=\exp(-\int_0^t c(Y_s)\,ds)$ is absolutely continuous in $t$ with $dZ_t=-Z_t\,c(Y_t)\,dt$, it has locally bounded variation and hence $dZ_t^\top d Y_t=0$. Applying It\^o's product rule to ${h}_k(Y_t)Z_t$ on $[0,t\wedge\sigma_k]$, we have 
\begin{align}
 h_k(Y_{t\wedge\sigma_k})Z_{t\wedge\sigma_k}-h_k(x)
  =\int_0^{t\wedge\sigma_k} Z_s\bigl(L_Yh_k-c\,h_k\bigr)(Y_s) ds
   +\int_0^{t\wedge \sigma_k} Z_s \grad h_k(Y_s)^\top\sigma(Y_s) dW_s.   
\end{align}
Since $Y_s\in\bar D_k \subset U_k$ for $s \leq t \wedge \sigma_k$, every term above may be written in terms of ${h}$, yielding \eqref{eq:localized-ito}.

Finally, $\sigma_k \le k$ is a bounded stopping time, $\grad{h}_k$ is continuous with compact support, hence bounded on $\Rset^n$. In addition, $\sigma$ is bounded by Assumption \ref{A1.elliptic}, and $Z_s \in (0,1]$ since ${c} \ge 0$. Hence, we have $\E\bigl[\int_0^k|Z_s\,\grad{h}_k(Y_s)^\top\sigma(Y_s)|^2 ds\bigr] < \infty$ and the optional stopping theorem \cite[Theorem 1.3.22]{karatzas2014brownian} applies.  Taking $t=k$ in \eqref{eq:localized-ito}, so that $t\wedge\sigma_k=\sigma_k$, and taking expectations proves Eq.~\eqref{eq:localized-dynkin}.
\end{proof}

\section{Proof of Results in Section \ref{sec:exit-time}}
\subsection{Proof of Lemma \ref{cor:finite-exit-time}}
\label{pf:cor:finite-exit-time}
Since $\bar D$ is bounded by Assumption~\ref{A2}, we have $ b\in C^{0,\alpha}(\bar D)$ for $\alpha\in(0,1)$.
We show that the SDE \eqref{eq:ref-sde} with $  X_0=x\in D$ has a pathwise unique strong solution, and its exit time $ \tau_D$ satisfies $\E_x[ \tau_D]\le\varphi(x)<\infty$, where ${\varphi} \in C^{2,\alpha}(\bar D)$, ${\varphi}\ge 0$ is the unique solution of the elliptic BVP
  \begin{align}
      L{\varphi}=-1 \text{ in } D,\qquad {\varphi}|_{\partial D}=0.
  \end{align}

\emph{Step 1:}
Consider the elliptic BVP $  L{\varphi}=-1$ in $D$, ${\varphi}|_{\partial D}=0$. Since $  b\in C^{0,\alpha}(\bar D)$ and $a=\sigma\sigma^\top\in C^{0,\alpha}(\bar D)$ by Assumption ~\ref{A1.elliptic}, $\partial D\in C^{2,\alpha}$ by Assumption~\ref{A2}, and the boundary data $0\in C^{2,\alpha}(\partial D)$, elliptic regularity theory \citep[Ch.~III, Theorem~1.1 and Theorem~1.2]{ladyzhenskaya1968linear} guarantees existence and uniqueness of a solution ${\varphi} \in C^{2,\alpha}(\bar D)$.

\emph{Step 2:} To establish that ${\varphi}\ge 0$ on $\bar D$, we will show that the minimum of $\varphi$ over $\bar D$ is attained at a boundary point and is therefore equal to $0$ since $\varphi|_{\partial D} \equiv 0$. Suppose, for contradiction, that ${\varphi}$ attains a local minimum at some interior point $x_0\in D$. Then $\grad {\varphi}(x_0)=0$ and $\mathrm{Hess}\left({\varphi}(x_0)\right)\succeq 0$ by the second-order necessary condition for a local minimum. And we have
\begin{align}
    L{\varphi}(x_0)
  = \underbrace{  b(x_0)^\top\grad {\varphi}(x_0)}_{=0}
+\tfrac12\,\mathrm{tr}\bigl(a(x_0)\,\mathrm{Hess}\left({\varphi}(x_0)\right)\bigr)
  \;\ge\; 0.
\end{align}
Here, by uniform ellipticity in Assumption \ref{A1.elliptic}, we have $a(x_0)\succ 0$, and since $\mathrm{Hess}\left({\varphi}(x_0)\right)\succeq 0$, the inequality holds. This contradicts $  L{\varphi}\equiv -1<0$ on $D$. Hence ${\varphi}$ has no interior local minimum, so $\min_{\bar D}{\varphi}=\min_{\partial D}{\varphi}=0$, i.e.\ ${\varphi}\ge 0$ on $\bar D$.

\emph{Step 3:}
Since $b,\sigma$ are bounded and Lipschitz by Assumption~\ref{A1}, the SDE \eqref{eq:ref-sde} has a pathwise unique strong solution for all $t\ge0$ \citep[Thm.~5.2.1]{oksendal2013stochastic}.

\emph{Step 4:}
We next show $\E_x[ \tau_D]\le {\varphi}(x)$. 
By the previous step, Eq.~\eqref{eq:ref-sde} has a pathwise unique strong solution, so Lemma~\ref{lem:localized-dynkin} applies with ${h} = {\varphi}$ and ${c} \equiv 0$. Let $(\sigma_k)_{k \ge 1}$ and $(D_k)_{k \ge 1}$ be the stopping times and open sets defined in Lemma~\ref{lem:localized-dynkin}. Since $L{\varphi} = -1$ on $D$, Eq.~\eqref{eq:localized-dynkin} gives $\E_x[{\varphi}(X_{\sigma_k})] - {\varphi}(x) = -\E_x[\sigma_k]$. 
By Step 2, ${\varphi} \ge 0$ on $\bar D$. Since $D_k\subset D$, we have $\sigma_k\le\tau_{D_k}\le\tau_D$. And, since the path stays in $\bar D$ up to its exit time $\tau_D$, we have $X_{\sigma_k} \in \bar D$. 
Thus we have $\E_x[{\varphi}(X_{\sigma_k})] \geq 0$, and hence
\begin{align}
  \E_x[\sigma_k] \le {\varphi}(x), \qquad k \in \Nset.
  \label{eq:sigma-k-bound}
\end{align}
Since the sequence $(\sigma_k)_{k\ge1}$ is nondecreasing and $\sigma_k \to \tau_D$ from Lemma~\ref{lem:localized-dynkin}(i), monotone convergence applied to \eqref{eq:sigma-k-bound} gives $\E_x[\tau_D] = \lim_k \E_x[\sigma_k] \leq {\varphi} (x) < \infty$.
Here, since ${\varphi} \in C(\bar D)$ and $\bar D$ is compact, $\max_{\bar D}{\varphi}<\infty$, so $\sup_{x\in D}\E_x[ \tau_D]\leq \max_{\bar D} {\varphi} <\infty$. 
Finally, a nonnegative random variable with finite expectation is finite almost surely, hence $ \tau_D<\infty$ a.s.

\subsection{Finite Mean Exit Time for Controlled Processes}
By Assumption~\ref{A1}, $b$ is bounded and Lipschitz on $\Xset$. More generally, for $u\in\Ucal$, $b+au$ is bounded and Lipschitz on $\Xset$, as $a$ is bounded and Lipschitz by Assumption~\ref{A1.elliptic}, and $u$ is Lipschitz on $\bar D$, which is compact, hence bounded, and is extended to $\Xset$. Hence, the following result on the finite mean exit time for every admissible control is immediate by applying Lemma~\ref{cor:finite-exit-time} to $\tilde b=b+au$.

\begin{corollary}
Consider the controlled SDE \eqref{eq:general-controlled-sde}, and denote $\tilde{L} g:=(b+au)^\top \grad g+\tfrac12\sum_{ij}a_{ij}\partial_{ij}g$. 
Let Assumptions~\ref{A1}--\ref{A2} hold and let $u\in\Ucal$.
Then, the controlled SDE~\eqref{eq:general-controlled-sde} starting from $x \in D$ has a pathwise unique strong solution $\tilde X$, and its exit-time $\tilde\tau_D$ satisfies $\tilde\E_x[\tilde\tau_D] <\infty$. Thus we also have $\tilde\tau_D<\infty$, ${\tP}^{u,x}$-a.s.\ 
\label{cor:finite-exit-time-control}
\end{corollary}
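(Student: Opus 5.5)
The plan is to reduce the statement directly to Lemma~\ref{cor:finite-exit-time}, applied to the modified drift $\tilde b := b + a u$. The argument has two parts. First, check that $\tilde b$ and $\sigma$ satisfy Assumption~\ref{A1}. Second, observe that Assumptions~\ref{A1.elliptic} and \ref{A2} involve only $a$ and $D$, which are unchanged by the control, so the proof of Lemma~\ref{cor:finite-exit-time} then runs verbatim with $L$ replaced by $\tilde L$.

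\textbf{Regularity of the modified drift.} The control $u$ is only defined and Lipschitz on $\bar D$, so I first extend it to a bounded Lipschitz map on all of $\Xset$. Applying McShane's extension to each coordinate preserves the Lipschitz constant. Composing with a truncation at level $\sup_{\bar D}|u|<\infty$ (finite since $\bar D$ is compact by Assumption~\ref{A2}) keeps the extension bounded. The extension does not change the process up to $\tilde\tau_D$, so the exit time is unaffected.

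Next, $a=\sigma\sigma^\top$ is bounded and Lipschitz, because a product of bounded Lipschitz functions is bounded Lipschitz. The product $au$ is likewise bounded Lipschitz, using the elementary estimate
\[
|a(x)u(x)-a(y)u(y)|\le \|a\|_\infty |u(x)-u(y)| + \|u\|_\infty \|a(x)-a(y)\|.
\]
Hence $\tilde b = b + au$ is bounded and Lipschitz, and Assumption~\ref{A1} holds for the pair $(\tilde b,\sigma)$.

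\textbf{Rerunning the lemma.} Assumptions~\ref{A1.elliptic} and \ref{A2} transfer unchanged. Lemma~\ref{cor:finite-exit-time} then gives a pathwise unique strong solution $\tilde X$ of \eqref{eq:general-controlled-sde} together with $\tilde\E_x[\tilde\tau_D]\le\tilde\varphi(x)<\infty$. Here $\tilde\varphi$ solves $\tilde L\tilde\varphi=-1$ in $D$ with $\tilde\varphi|_{\partial D}=0$; its existence in $C^{2,\alpha}(\bar D)$ needs $\tilde b\in C^{0,\alpha}(\bar D)$, which follows from Lipschitz continuity on the bounded set $\bar D$. Finally, since $\tilde\tau_D\ge 0$ has finite mean, $\tilde\tau_D<\infty$ holds $\tP^{u,x}$-a.s.

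The only potentially delicate point is the extension of $u$ off $\bar D$ and the verification that the lemma's proof uses nothing beyond Assumptions~\ref{A1}--\ref{A2} for the drift. The lemma's proof consists of the elliptic solvability of the boundary value problem, the maximum principle, and the localized Dynkin formula of Lemma~\ref{lem:localized-dynkin}. All three depend on the drift only through boundedness, Lipschitz continuity and H\"older continuity on $\bar D$, so they apply to $\tilde b$ without change.
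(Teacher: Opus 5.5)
Your proposal is correct and follows the paper's route: the paper also obtains the corollary by applying Lemma~\ref{cor:finite-exit-time} to the modified drift $\tilde b=b+au$, which is bounded and Lipschitz once $u$ is extended off the compact set $\bar D$. Your McShane-plus-truncation extension, and your derivation of the regularity of $a=\sigma\sigma^\top$ from Assumption~\ref{A1}, make explicit what the paper leaves implicit. (The paper cites Assumption~\ref{A1.elliptic} for that regularity, but that assumption only gives ellipticity.)
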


\subsection{Proof of Lemma \ref{lemma.FK}}
\label{pf:lemma.FK}

To show $w = \rho$ on $D$, let $Y_t:=w(X_t)$ and $Z_t := \exp(-\int_0^t f(X_s)\,ds)$. Fix $x \in D$ and we apply Lemma~\ref{lem:localized-dynkin} with ${h} = w$, and ${c}=f$ that satisfies Assumption~\ref{A3}. Since $Lw - fw = 0$ in $D$, the drift term in \eqref{eq:localized-dynkin} vanishes and we have
\begin{align}
  \E_x\bigl[w(X_{\sigma_k})Z_{\sigma_k}\bigr] = w(x), \qquad k \in \Nset.
  \label{eq:w-sigma-k}
\end{align}

We now take $k\to\infty$. By Lemma~\ref{cor:finite-exit-time}, $\tau_D<\infty$, $P^x$-a.s. Since $\sigma_k\to\tau_D$ and $D$ is open, $X_{\sigma_k}\to X_{\tau_D}\in\partial D$.
By continuity of $w$ on $\bar D$ and using the boundary condition \eqref{eq:elliptic-bvp}, we have $Y_{\sigma_k}=w(X_{\sigma_k})\to e^{-\Phi(X_{\tau_D})}$ a.s. By continuity of $t\mapsto\int_0^t f(X_s)\,ds$, we have $Z_{\sigma_k}\to\exp(-\int_0^{\tau_D}f(X_s)\,ds)$ a.s.
Moreover, since $w\in C(\bar D)$ with $\bar D$ compact and $Z_t\in(0,1]$, we have $|Y_{\sigma_k}Z_{\sigma_k}|\le\max_{\bar
D}|w|<\infty$ for every $k$. Hence, $Y_{\sigma_k}Z_{\sigma_k}\to e^{-\Phi(X_{\tau_D})}\exp(-\int_0^{\tau_D}f\,ds)$
a.s., and this sequence is uniformly bounded by $\max_{\bar D}|w|$. Thus, the bounded convergence theorem gives
\begin{align}
\lim_{k\to\infty}\E_x\bigl[Y_{\sigma_k}Z_{\sigma_k}\bigr]
  = \E_x \left[e^{-\Phi(X_{\tau_D})}\exp \left(-\int_0^{\tau_D}f(X_s)\,ds\right)\right]
  = \rho(x),
\end{align}
where the last equality follows from the definition $\rho$ in \eqref{eq:rho-exit}. On the other hand, $\E_x[Y_{\sigma_k}Z_{\sigma_k}]=w(x)$ for every $k$ by
\eqref{eq:w-sigma-k}, so the limit on the left is $w(x)$. Therefore, we conclude $w(x)=\rho(x)$ on $D$.
For $x \in \partial D$, we have $\tau_D = 0$, $P^x$ a.s.\, hence \eqref{eq:rho-exit} gives $\rho(x) = e^{-\Phi(x)} = w(x)$ via the boundary condition \eqref{eq:elliptic-bvp}. Hence, $w(x)=\rho(x)$ on $\bar D$.

It remains to show $\rho>0$ on $\bar D$. On the boundary, $\rho|_{\partial D}=e^{-\Phi}=g>0$ by Assumption~\ref{A3}. In the interior, fix $x\in D$: by Assumption~\ref{A3}, $f\ge0$ is bounded on $\bar D$ and $\Phi=-\log g$ is bounded on $\partial D$, while $\tau_D<\infty$ a.s.\ by Lemma~\ref{cor:finite-exit-time}. Hence, we have
\begin{align}
  \exp \Bigl(-\int_0^{\tau_D}f(X_t)\,dt-\Phi(X_{\tau_D})\Bigr)>0,
  \qquad P^x\text{-a.s.},
\end{align}
and taking expectations gives $\rho(x)>0$. Combining these, $\rho>0$ on $\bar D$. Since $\rho\in C(\bar D)$ and $\bar D$ is compact, it follows that $\min_{\bar D}\rho>0$.

\subsection{Proof of Lemma \ref{lemma.V}}
\label{pf:lemma.V}
Recall $x = (x_1,\dots,x_n) \in \Xset$, and $\partial_{ij}$ denotes the second partial derivative with respect to $x_i$ and $x_j$.
The gradient of $\rho$ is given by 
\begin{align}
  \grad\rho = e^{-v}(-\grad v) = -\rho\,\grad v.
  \label{eq:step-a}
\end{align}
For the second-order part $\sum_{ij}a_{ij}\partial_{ij}\rho$, we have
\begin{align}
  \sum_{ij}a_{ij}\partial_{ij}\rho
  = -\sum_{ij}a_{ij}\partial_j\rho\,\partial_i v
    - \rho\sum_{ij}a_{ij}\partial_{ij}v.
\end{align}
Substituting $\partial_j\rho = -\rho\,\partial_j v$, we have
\begin{align}
  \sum_{ij}a_{ij}\partial_{ij}\rho
  = \rho\sum_{ij}a_{ij}\partial_i v\,\partial_j v
    - \rho\sum_{ij}a_{ij}\partial_{ij}v
  = \rho\bigl[(\grad v)^\top a\,\grad v - \Delta_a v\bigr],
  \label{eq:step-b}
\end{align}
where $\Delta_a v := \sum_{ij}a_{ij}\partial_{ij}v$.
Hence, $L\rho$ can be written as
\begin{align}
\begin{aligned}
L\rho=& b^\top (-\rho\,\grad v)
+ \tfrac{1}{2}\rho\bigl[(\grad v)^\top a\,\grad v - \Delta_a v\bigr]\\
=& \rho\left[-b^\top\grad v - \tfrac{1}{2}\Delta_a v
+ \tfrac{1}{2}(\grad v)^\top a\,\grad v\right] \\
=& \rho\left[-Lv + \tfrac{1}{2}(\grad v)^\top a\,\grad v\right],
\end{aligned}
\label{eq:Lrho-final}
\end{align}
where the last line follows from $b^\top\grad v + \frac{1}{2}\Delta_a v = Lv$.
Using $L\rho - f\rho=0$, and since $\rho>0$, divide by $\rho$, we have
\begin{align}
  -Lv + \tfrac{1}{2}(\grad v)^\top a\,\grad v = f.
\end{align}
Rearranging the above equation, we have 
\begin{align}
    Lv(x) + f(x) = \tfrac{1}{2}(\grad v(x))^\top a(x)\,\grad v(x)
    \quad\text{in }D,
    \qquad v\big|_{\partial D} = \Phi.
  \label{eq:hjb-stationary}
\end{align}

We next show that this coincides with the HJB equation for the OSC \eqref{eq:soc} up to an exit time. Let $\tilde X_t$ solve the controlled SDE \eqref{eq:general-controlled-sde}.  Since $v\in C^2(D)$, It\^o's formula gives
\begin{align}
  dv(\tilde X_t) = \underbrace{\bigl[Lv(\tilde X_t)
    + (\grad v(\tilde X_t))^\top a(\tilde X_t)\,u(\tilde X_t)\bigr]}_{\text{drift}}\,dt
  + \grad v(\tilde X_t)^\top \sigma(\tilde X_t)\,d\tilde W_t.
  \label{eq:ito-general-u}
\end{align}
For $u \in \Ucal$, substituting \eqref{eq:hjb-stationary} into the drift, and adding and subtracting $\frac{1}{2}u^\top a\,u$, we have
\begin{align}
  \text{drift}
  = -\Bigl(f + \tfrac{1}{2}u^\top a\,u\Bigr)
  + \tfrac{1}{2}(\grad v + u)^\top a\,(\grad v + u),
  \label{eq:complete-square-exit}
\end{align}
where the last term $\geq 0$ since $a \succeq 0$.
Equation \eqref{eq:complete-square-exit} also shows that, for each $x\in D$, the mapping $u\mapsto u^\top a\nabla v + \tfrac12 u^\top a u$ is minimized uniquely at $u^*=-\nabla v$, since $a\succ 0$ by Assumption~\ref{A1.elliptic}.
Therefore, \eqref{eq:hjb} is the HJB equation of the OSC \eqref{eq:soc} up to an exit time, with a candidate optimal controller $u^*$ in \eqref{eq:u-star-exit}.

\subsection{Proof of Proposition \ref{prop:gibbs}}
Since $H \geq 0$, we have $H e^{-H}\geq 0$ and hence $-\E_{x}[H e^{-H}]<\infty$. Together with $Z\in(0,\infty)$, hypotheses of \citet[Proposition 2.1]{mitter2003variational} hold on $(\Omega,\Fcal_{\tau_D})$ with reference measure $P^x|_{\Fcal_{\tau_D}}$ and energy $H$. Hence, the claim follows.
\subsection{Proof of Lemma \ref{lemma:girsanov-density}}
\label{pf:lemma:girsanov-density}
Since $\rho\in C^{2,\alpha}(\bar D)$ as established in Lemma \ref{lemma.FK}, and  $-\log$ is smooth on $[m,M]$, the chain rule gives $v=-\log\circ\,\rho\in C^{2,\alpha}(\bar D)$, thus we have $u^* = -\grad v \in C^{1,\alpha}(\bar D)$, hence $u^* \in \Ucal$. By the paragraph preceding Corollary \ref{cor:finite-exit-time-control}, $\tilde b:=b-a\grad v$ is bounded and Lipschitz on $\Rset^n$. Thus, \eqref{eq:general-controlled-sde} has a pathwise unique strong solution for all $t\ge0$ by the standard SDE existence-uniqueness theorem for bounded, Lipschitz coefficients; see, e.g., \cite[Thm.~5.2.1]{oksendal2013stochastic}.

Define an exponential process as
\begin{align}
  Z_t := \exp \left(\int_0^t \left((\sigma^\top u^*)(X_s)\right)^\top dW_s
    - \tfrac12\int_0^t |\sigma^\top u^*(X_s)|^2\,ds\right),
  \qquad t < \tau_D.
  \label{eq:Zt-def}
\end{align}
Since $v\in C^{2,\alpha}(\bar D)$ and $\bar D$ is compact by Assumption \ref{A1}, $\nabla v$ is continuous and hence bounded on $\bar D$. Together with $\sigma$ bounded by Assumption \ref{A1.elliptic}, let $C:=\vnorm{\sigma}_\infty \vnorm{\nabla v}_\infty$ and we have $\left|\sigma^\top u^*\right| \leq C$ on $\bar D$. 
Since $\tau_D<\infty$ a.s.\  by Lemma~\ref{cor:finite-exit-time}, we have $\int_0^{\tau_D} |\sigma^\top u^*(X_s)|^2\,ds \leq C^2 \tau_D < \infty$ a.s.\ Thus the stochastic integral in \eqref{eq:Zt-def} is well defined at $t=\tau_D$, and $t\mapsto Z_t$ is continuous on $[0,\tau_D]$.

\emph{Step 1:}
We compute $Z_t$ as follows. Since $a=\sigma\sigma^\top$ and $u^*=-\grad v$, we have $(\grad v)^\top a\,\grad v = |\sigma^\top u^*|^2$, so the HJB equation \eqref{eq:hjb-stationary} is $Lv = \tfrac12|\sigma^\top u^*|^2 - f$ in $D$.
Applying Lemma~\ref{lem:localized-dynkin} with ${h} = v$ and ${c} = 0$, and taking $k$ large enough that $\sigma_k>t$, which is possible since $\sigma_k\to\tau_D$, for $t<\tau_D$,  we have, 
\begin{align}
  dv(X_t) = Lv(X_t)\,dt + \grad v(X_t)^\top \sigma(X_t)\,dW_t
  = \Bigl(\tfrac12|\sigma^\top u^*(X_t)|^2 - f(X_t)\Bigr)dt
    - \left((\sigma^\top u^*)(X_t) \right)^\top dW_t,
\end{align}
where we used $\grad v=-u^*$ and the HJB equation \eqref{eq:hjb}. 

Rearranging and integrating from $0$ to $t<\tau_D$, we have
\begin{align}
  \int_0^t \left((\sigma^\top u^*)(X_s) \right)^\top dW_s
  - \tfrac12\int_0^t |\sigma^\top u^*(X_s)|^2\,ds
  = v(x) - v(X_t) - \int_0^t f(X_s)\,ds.
\end{align}
Taking exponential and using $e^{-v}=\rho$, we have
\begin{align}
  Z_t = \frac{\rho(X_t)}{\rho(x)}
  \exp \left(-\int_0^t f(X_s)\,ds\right), \qquad t<\tau_D.
  \label{eq:Zt-explicit}
\end{align}

\emph{Step 2:}
We now take $t \to \tau_D$ in \eqref{eq:Zt-explicit}. By Lemma~\ref{cor:finite-exit-time}, we have $\tau_D<\infty$, $P^x$-a.s., thus $X_t\to X_{\tau_D}\in\partial D$ by continuity of the
path, and $\rho(X_t)\to\rho(X_{\tau_D})=e^{-\Phi(X_{\tau_D})}$ by continuity of $\rho$ on $\bar D$ together with the boundary condition $\rho|_{\partial D}=e^{-\Phi}$ in \eqref{eq:elliptic-bvp}. Moreover, $\int_0^tf(X_s)\,ds\to\int_0^{\tau_D}f(X_s)\,ds$ by continuity of $t\mapsto\int_0^tf(X_s)\,ds$, the limit is finite since $f$ is bounded by Assumption \ref{A3} and $\tau_D<\infty$ a.s. Hence, we have
\begin{align}
  Z_{\tau_D}
  = \frac{e^{-\Phi(X_{\tau_D})}}{\rho(x)}\exp\Bigl(-\int_0^{\tau_D}f(X_s)\,ds\Bigr)
  = \frac{e^{-H(\chi)}}{\rho(x)},
  \qquad P^x\text{-a.s.}
  \label{eq:Z-at-tau}
\end{align}
Taking expectations, the Feynman--Kac identity $\rho(x)=\E_{x}[e^{-H(\chi)}]$ of \eqref{eq:rho-exit} gives
\begin{align}
  \E_{x}\bigl[Z_{\tau_D}\bigr]=\frac{\rho(x)}{\rho(x)}=1 .
\end{align}

\emph{Step 3:}
Since $\bar D$ is compact and $\rho\in C(\bar D)$, we have $0<m\le\rho\le M<\infty$ on $\bar D$; and $\exp(-\int_0^tf\,ds)\in(0,1]$ since $f\ge0$. Hence \eqref{eq:Zt-explicit} gives the uniform, deterministic bound
\begin{align}
  0 < Z_{t\wedge\tau_D} \le \frac{M}{\rho(x)}
  \qquad\text{for every }t\ge0\text{ and a.e. }\omega .
  \label{eq:Z-bounded}
\end{align}
The stopped process $Z_{\cdot\wedge\tau_D}$ is a stochastic exponential, hence a local martingale; by \eqref{eq:Z-bounded} it is bounded, and a bounded local martingale is a true martingale. Therefore
$\E_{x}[Z_{t\wedge\tau_D}]=1$ for every $t\ge0$.

\emph{Step 4:}
Finally, define $Q$ on $\Fcal_{\tau_D}$ via $dQ/dP^x=Z_{\tau_D}$, which is a probability measure by the above. By Girsanov's theorem, we have $W_t-\int_0^t\sigma^\top u^*(X_s)\,ds$ is a $Q$-Brownian motion on $[0,\tau_D]$. Thus under $Q$, the process solves the controlled SDE \eqref{eq:general-controlled-sde} with $u=u^*$ up to $\tau_D$. By uniqueness in law for said SDE, established above, we have $Q=\tilde P^{*,x}$ on $\Fcal_{\tau_D}$, which together with \eqref{eq:Z-at-tau} proves \eqref{eq:girsanov-density}.

\subsection{Proof of Theorem~\ref{thm:exit-time-SOC}}
\label{pf:thm:exit-time-SOC}
\medskip\noindent\emph{Part (i).}
We first consider $u\in\Ucal$, and consider the optimal control $u^*$ at the end as the special case. 
We proceed in four steps: (1) applying It\^o's formula for $v(\tilde X_t)$ under $u \in \Ucal$; (2) applying Lemma \ref{lem:localized-dynkin} to integrate $v$ on $\bar D$; (3) passing $k\to\infty$; and (4) rearranging and specializing to $u=u^*$.

\medskip\noindent\emph{Step 1:}
From It\^o's formula and \eqref{eq:complete-square-exit}, for the controlled process $\tilde X$,we have
\begin{align}
  dv(\tilde X_t)
  = -\Bigl(f(\tilde X_t)+\tfrac12 u^\top a\,u(\tilde X_t)\Bigr)dt
  + \tfrac12\bigl(\grad v+u\bigr)^\top a\,\bigl(\grad v+u\bigr)(\tilde X_t)\,dt
  + \grad v(\tilde X_t)^\top \sigma(\tilde X_t)\,d\tilde W_t .
  \label{eq:ito-general-parti}
\end{align}

\medskip\noindent\emph{Step 2:}
The above equation is derived from $v \in C^2(D)$, so it may be integrated only while the trajectory remains in $D$. We therefore apply Lemma~\ref{lem:localized-dynkin} with drift $b + au$, ${h} = v$ and ${c}\equiv 0$ along the controlled diffusion process \eqref{eq:general-controlled-sde}, where admissibility of $u$ and Assumption~\ref{A1.elliptic} make the drift term bounded and Lipschitz. Substituting the drift decomposition in \eqref{eq:complete-square-exit} into \eqref{eq:localized-dynkin}, we have
\begin{align}
  \underbrace{{\tE}\bigl[v(\tilde X_{\sigma_k})\bigr]}_{:=T_1} - v(x)
  = -\underbrace{{\tE} \left[\int_0^{\sigma_k}\Bigl(f+\tfrac12u^\top a\,u\Bigr)dt\right]}_{:=T_2}
  + \underbrace{{\tE} \left[\int_0^{\sigma_k}\tfrac12(\grad v+u)^\top a\,(\grad v+u)\,dt\right]}_{:=T_3}.
  \label{eq:localized-identity-parti}
\end{align}

\medskip\noindent\emph{Step 3:}
We next study the three terms $T_1$-$T_3$ in \eqref{eq:localized-identity-parti} separately.  For the boundary term $T_1$, recall by Corollary~\ref{cor:finite-exit-time-control}, $\tilde\tau_D<\infty$ a.s., and we have shown before that $\sigma_k \to \tilde\tau_D$ as $k\to\infty$ in Lemma \ref{lem:localized-dynkin}. 
Hence $\tilde X_{\sigma_k}\to\tilde X_{\tilde\tau_D}$ a.s.\ by continuity of the path. Since the controlled process exits $D$ at time $\tilde\tau_D$, we have $\tilde X_{\tilde\tau_D}\in\partial D$ a.s., so continuity of $v$ on $\bar D$ together with the boundary condition $v|_{\partial D}=\Phi$ gives $v(\tilde X_{\sigma_k}) \;\to\; v(\tilde X_{\tilde\tau_D}) = \Phi(\tilde X_{\tilde\tau_D})$ a.s.
Moreover $v\in C(\bar D)$ with $\bar D$ compact gives $|v|\le M:=\max_{\bar D}|v|<\infty$, so $|v(\tilde X_{\sigma_k})|\le M$ for every $k$ and a.e.\ $\omega$. By bounded convergence theorem, we thus have
\begin{align}
  \lim_{k\to\infty}{\tE}\bigl[v(\tilde X_{\sigma_k})\bigr]
  = {\tE}\bigl[\Phi(\tilde X_{\tilde\tau_D})\bigr].
  \label{eq:limit-boundary-parti}
\end{align}

For the running cost $T_2$, since $f\ge0$ by Assumption~\ref{A3} and $a\succ0$ by Assumption~\ref{A1.elliptic}, we have that $f+\tfrac12u^\top a\,u$ is nonnegative. As $\sigma_k$ is nondecreasing with $\sigma_k\to\tilde\tau_D$ from Lemma \ref{lem:localized-dynkin}, the integrals $\int_0^{\sigma_k}(f+\tfrac12u^\top a\,u)\,dt$ are nondecreasing in $k$ and converge pointwise a.s.\ to $\int_0^{\tilde\tau_D}(f+\tfrac12u^\top a\,u)\,dt$. By monotone convergence theorem, we have
\begin{align}
\lim_{k\to\infty}{\tE} \left[\int_0^{\sigma_k}\Bigl(f+\tfrac12u^\top a\,u\Bigr)dt\right]
={\tE} \left[\int_0^{\tilde\tau_D}\Bigl(f+\tfrac12u^\top a\,u\Bigr)dt\right].
\label{eq:limit-running-parti}
\end{align}

For the last term $T_3$, proceed similarly as above, since $a\succ0$, $\tfrac12(\grad v+u)^\top a\,(\grad v+u)$ is nonnegative, so monotone convergence theorem gives
\begin{align}
\lim_{k\to\infty}{\tE} \left[\int_0^{\sigma_k}\tfrac12(\grad v+u)^\top a\,(\grad v+u)\,dt\right]
  = {\tE} \left[\int_0^{\tilde\tau_D}\tfrac12(\grad v+u)^\top a\,(\grad v+u)\,dt\right].
  \label{eq:limit-remainder-parti}
\end{align}

We next show that all three limits are finite. Indeed, by Assumption~\ref{A3}, $\Phi=-\log g$ is continuous on the compact set $\partial D$, hence bounded. Let $C:= \sup_{\bar D}\Bigl(f+\tfrac12u^\top a\,u\Bigr)$ and $C':= \sup_{\bar D}\tfrac12(\grad v+u)^\top a\,(\grad v+u)$. Both terms are finite because $f$ is bounded by Assumption~\ref{A3}, $a$ is bounded by Assumption~\ref{A1.elliptic}, $u \in \Ucal$ is bounded on the compact $\bar D$ by definition, and $\grad v$ is bounded on $\bar D$ since $v\in C^{2,\alpha}(\bar D)$. The two integrals in \eqref{eq:limit-running-parti} and  \eqref{eq:limit-remainder-parti} are dominated by $C \tilde\tau_D$ and $C' \tilde\tau_D$, respectively, whose expectations are finite since ${\tE}\bigl[\tilde\tau_D\bigr]<\;\infty$ by Corollary~\ref{cor:finite-exit-time-control}.
Combining \eqref{eq:limit-boundary-parti}--\eqref{eq:limit-remainder-parti} into \eqref{eq:localized-identity-parti}, we have
\begin{align}
  {\tE}\bigl[\Phi(\tilde X_{\tilde\tau_D})\bigr] - v(x)
  = -{\tE} \left[\int_0^{\tilde\tau_D}\Bigl(f+\tfrac12u^\top a\,u\Bigr)dt\right]
  + {\tE} \left[\int_0^{\tilde\tau_D}\tfrac12(\grad v+u)^\top a\,(\grad v+u)\,dt\right],
  \label{eq:limit-identity-parti}
\end{align}
where every term is finite.

\emph{Step 4:} Rearranging \eqref{eq:limit-identity-parti}, and using $\grad v+u=u-u^*$ as $u^*=-\grad v$, we obtain
\begin{align}
\begin{aligned}
  J(u,x)
  &= {\tE}\bigl[\Phi(\tilde X_{\tilde\tau_D})\bigr]
   + {\tE}\left[\int_0^{\tilde\tau_D}
   \Bigl(f+\tfrac12u^\top a u\Bigr)dt\right]\\
  &= v(x) 
  + {\tE}\left[\int_0^{\tilde\tau_D}\tfrac12(u-u^*)^\top a(u-u^*) dt\right] \\
  &\geq  v(x),
\end{aligned}
\label{eq:gap-exit}
\end{align}
where the inequality follows from $a\succ0$ by Assumption~\ref{A1.elliptic}. 
Since $u\in\Ucal$ was arbitrary, $v(x)$ is a lower bound. Here, equality holds if and only if ${\tE}\bigl[\int_0^{\tilde\tau_D}\tfrac12(u-u^*)^\top a\,(u-u^*)\,dt\bigr]=0$. Since the integrand is nonnegative, this holds if and only if $\int_0^{\tilde\tau_D}\tfrac12(u-u^*)^\top a\,(u-u^*)(\tilde X_t)\,dt=0$, $\tP^{u,x}$-a.s. Again, by nonnegativity, it holds if and only if $(u-u^*)^\top a\,(u-u^*)(\tilde X_t)=0$ for Lebesgue-almost every $t\in[0,\tilde\tau_D(\omega)]$,$\tP^{u,x}$-a.s. By uniform ellipticity in Assumption~\ref{A1.elliptic}, we have $(u-u^*)^\top a\,(u-u^*)\ge c\,|u-u^*|^2$ with $c>0$. So this is equivalent to $u(\tilde X_t)=u^*(\tilde X_t)$ for Lebesgue-almost every $t\in[0,\tilde\tau_D]$, $\tP^{u,x}$-a.s.

It remains to check that this bound is attained. Since
$v\in C^{2,\alpha}(\bar D)$, we have $u^*=-\grad v\in C^{1,\alpha}(\bar D)$, hence $u^*\in\Ucal$ and \eqref{eq:gap-exit} applies. The remainder term then vanishes identically, and we have $J(u^*,x)=v(x)$. Hence, the infimum is attained, and for every $x\in D$, we have
\begin{align}
  v(x)=J(u^*,x)=\min_{u\in\Ucal}J(u,x).
\end{align}
This completes the proof of Part (i).

\medskip\noindent\emph{Part (ii)}
Since $\tau_D<\infty$, $P^x$-a.s. as in Lemma~\ref{cor:finite-exit-time}, $H(\chi)$ is well-defined and finite $P^x$-a.s., and by the Feynman--Kac representation of $\rho$, we can define
\begin{align}
  Z := \E_{x}\bigl[e^{-H(\chi)}\bigr] = \rho(x) \in (0,1],
\end{align}
so $Z\in(0,\infty)$ as required by \cite[Proposition 1]{raginsky2026}. 
Moreover $f\ge0$ by Assumption~\ref{A3} and $\Phi=-\log g\ge0$ since $g\le1$ by the same assumption, so $H(\chi)\ge0$ $P^x$-a.s. Hence the integrand
$-H(\chi)e^{-H(\chi)}$ is non-positive and $-\E_{x}\bigl[H(\chi)\,e^{-H(\chi)}\bigr]  <\infty$. So both assumptions of Proposition \ref{prop:gibbs} are satisfied, and applying it on $(\Omega,\Fcal_{\tau_D})$ with $P=P^x$ and $H=H(\chi)$, for every admissible path measure $\tP\ll P^x$, we have
\begin{align}
  \DKL(\tP\|P^x) + \E[H(\chi)] \;\geq\; -\log\rho(x) = v(x),
  \label{eq:gibbs-applied}
\end{align}
with equality if and only if $d\tP/dP^x = e^{ -H(\chi)}/\rho(x)$.
By Part~(i), we have $v(x)=J(u^*,x)$, so \eqref{eq:gibbs-applied} is \eqref{eq:variational-ineq-exit}. 
By Lemma~\ref{lemma:girsanov-density}, $\tP^{*,x}$ of the optimally-controlled process satisfies 
\begin{align}
  \left.\frac{d\tP^{*,x}}{dP^x}\right|_{\Fcal_{\tau_D}}
  = \frac{e^{-H(\chi)}}{\rho(x)}
\end{align}
Hence equality in \eqref{eq:gibbs-applied}, and so in \eqref{eq:variational-ineq-exit}, holds if and only if $\tP=\tP^{*,x}$.


\section{Proof of Identity \eqref{eq:committor-bvp} in Section \ref{sec:committor}}
\label{pf:sec:committor}

Let Assumption \ref{A1.elliptic} hold. By construction in Section \ref{sec:committor}, $A_i,A_j$ are closed, and $\partial A_i\cap\partial A_j=\emptyset$. Assume that $D_q=\Xset\setminus(A_i\cup A_j)$ satisfies Assumption \ref{A2}, with $\partial D_q=\partial A_i\cup\partial A_j$.
Denote $\varphi:\partial D_q\to\{0,1\}$ with $\varphi=0$ on $\partial A_i$ and $\varphi=1$ on $\partial A_j$. Since $\partial A_i$ and $\partial A_j$ are disjoint compact sets, $\varphi$ is locally constant, hence $\varphi\in C^{2,\alpha}(\partial D_q)$. By \cite[Ch.~III, Theorem~1.1 and Theorem~1.2]{ladyzhenskaya1968linear}, the boundary value problem
\begin{align}
L w=0 \text{ in } D_q,\qquad w|_{\partial D_q}=\varphi,
\label{eq:w-committor}
\end{align}
has a unique solution $w\in C^{2,\alpha}(\bar D_q)$. In particular, $w$ is continuous and bounded on the compact set $\bar D_q$.

We next show $q = w$. For $x\in D_q$, let $\tau:=\inf\{t>0: X_t\notin D_q\}$ be the first time the trajectory leaves $D_q$. 
Since $\Xset \setminus D_q=A_i\cup A_j$, we have $\tau=\tau_{A_i}\wedge\tau_{A_j}$.
Let $D_k$, $\sigma_k$ be as in Lemma \ref{lem:localized-dynkin} with $D=D_q$. Apply Lemma~\ref{lem:localized-dynkin}(ii) with $h=w$ and $c = 0$, we have $Z_t = 1$. Since $L w=0$ in $D_q$, from \eqref{eq:localized-dynkin}, we have
\begin{align}
\E_x\big[w(X_{\sigma_k})\big]=w(x), \qquad k\in \Nset.
\label{eq:dynkin-committor}
\end{align}
Here $w(X_{\sigma_k})$ is well defined since $X_{\sigma_k}\in\bar D_q$ by the case split in the proof of Lemma~\ref{lem:localized-dynkin}.

We next take $k \to \infty$. From Lemma~\ref{lem:localized-dynkin}(i), we have $\sigma_k \to \tau$. Since $\tau<\infty$ a.s., $D_q$ is open, and by path continuity, we have $X_{\sigma_k}\to X_\tau \in\partial D_q$ a.s. Since $w$ is continuous on $\bar D_q$, we have $w(X_{\sigma_k})\to w(X_\tau)=\varphi(X_\tau)$ a.s. 
Furthermore, $|w(X_{\sigma_k})|\leq \max_{\bar D_q}|w|<\infty$ for $k \in \Nset$. By the bounded convergence theorem, we then have 
\begin{align}
w(x)=\E_x\big[\varphi(X_\tau)\big]=P^x\big(X_\tau \in \partial A_j\big),
\end{align}
where the second equality follows since $\varphi(X_\tau) = \bm{1}_{\partial A_j} (X_\tau)$.

Finally, we show $P^x\big(X_\tau\in \partial A_j\big) = q(x)$. Since $A_i,A_j$ are closed by construction, and paths are continuous, and by definition of $\tau$, we have $X_{\tau_{A_k}}\in A_k$ whenever $\tau_{A_k}<\infty$ for $k\in\{i,j\}$. If $X_\tau\in\partial A_j\subset A_j$, then $\tau\neq\tau_{A_i}$, since otherwise $X_\tau\in A_i\cap A_j=\emptyset$, and we have $\tau=\tau_{A_j}<\tau_{A_i}$. Conversely, if $\tau_{A_j}<\tau_{A_i}$, then $\tau=\tau_{A_j}$, and we have $X_\tau\in\partial A_j$. Thus  $\{X_\tau \in \partial A_j\}=\{\tau_{A_j}<\tau_{A_i}\}$, and for $x\in D_q$, we have
\begin{align}
w(x)=P^x\big(\tau_{A_j}<\tau_{A_i}\big)=q(x). 
\end{align}
By definition of $q$, for $x\in\partial A_i$ ($x\in\partial A_j$),  we have $q(x)=0=w(x)$ ($q(x)=1=w(x)$). Hence $q=w$ on $\bar D_q$. Therefore, $q\in C^{2,\alpha}(\bar D_q)$, and $q$ is the unique solution of \eqref{eq:committor-bvp}.

\section{Detailed Computation of the Committor Function from Section \ref{sec:committor}}
\label{supp.committor}

We first construct the spectral term $\tilde{\psi}$ that separates the two cores using the leading eigenfunctions of $\Lmd$. Consider $A_k$ and the set of leading eigenfunctions $\psi_l$, for $l=1,\dots,m-1$. With the core level $\theta^{(k)}_l$ defined in \eqref{eq:core-level}, we define $\hat\psi(x):= \sum_{l=1}^{m-1} c_l\,\psi_l(x)$, where $c_l := \theta^{(j)}_l- \theta^{(i)}_l$ measures how much the $l$-th mode varies between $A_i,A_j$. Define the core level of $\hat \psi$ as $\bar \theta^{(k)}:= \frac{1}{|\Ical_k|}\sum_{i\in\Ical_k}\hat \psi(x_{i})$. Substituting $\hat\psi$, we have $\bar \theta^{(k)} = \sum_{l} c_l \theta_l^{(k)}$. Let $c = [c_1,\cdots, c_{m-1}]$. We recenter and rescale $\hat{\psi}$ to get ${\tilde{\psi}}$ as
\begin{align}
    {\tilde{\psi}}(x) := \frac{{\hat{\psi}}(x)- \bar \theta^{(i)}}{\|c\|^2},
    \qquad \bar \theta^{(i)} = \sum_{l} c_l \theta_l^{(i)}.
    \label{eq:tildepsi}
\end{align}
With this choice of $c$, we have $\bar \theta^{(j)} - \bar \theta^{(i)} = \|c\|^2$, and $\tilde \psi$ maps the centroid of $A_i$ to $0$ and that of $A_j$ to $1$. Since each $\psi_l$ is an eigenfunction, we have $\nabla {\hat{\psi}} = \sum_{l=1}^{m-1} c_l\,\nabla\psi_l$, and ${\Lmd} {\hat{\psi}} = \sum_{l=1}^{m-1} c_l\,\lambda_l\,\psi_l$.
Applying $\Lmd$ to~\eqref{eq:tildepsi} and using that $\bar\theta^{(i)}$ and $\|c\|^2$ are constants, so that $L\bar\theta^{(i)}=0$, we have $L\tilde\psi =L\hat\psi/ \|c\|^2
  =\frac{1}{\|c\|^2}\sum_{l=1}^{m-1} c_l\,\lambda_l\,\psi_l$.
Using linearity of ${L}$, we obtain ${L}h={L}q-{L}\tilde\psi=-{L}\tilde\psi$ in $D_q$. Together with the boundary conditions $q|_{\partial A_i}=0$ and $q|_{\partial A_j}=1$, we have that $h$ solves
\begin{align}
    {L}h = -\,{L}\tilde\psi \ \text{ in } D_q,
    \qquad
    h|_{\partial A_i} = -\tilde\psi|_{\partial A_i},
    \qquad
    h|_{\partial A_j} = 1-\tilde\psi|_{\partial A_j}.
    \label{eq:interior-in-h}
\end{align}

We next approximate the remainder $h$ in the ansatz $q = \tilde{\psi} + h$ in $\Hcal_\kappa$. 
To solve \eqref{eq:interior-in-h} with $\tilde{\psi}$ in \eqref{eq:tildepsi}, let $\{x_l\}_{l\in\Ical_q}\subset D_q$ with $\Ical_q=\{1,\dots,N_q\}$, and consider solving for a $h\in \Hcal_\kappa$ of the form
\begin{align}
    h(x)=\sum_{l\in\Ical_q} \alpha_l\,\kappa(x,x_{l}),
    \qquad \alpha=(\alpha_l)_{l\in\Ical_q}.
    \label{eq:h-expansion}
\end{align}
The problem then reduces to solving for $\alpha$. To this end, we impose ~\eqref{eq:interior-in-h} it at interior points $\{x_{n}\}_{n\in \Jcal_q}\subset D_q$ where the residual is minimized, and boundary points $\{x_{n}\}_{n \in \Jcal_\partial} \subset \partial A_i\cup\partial A_j$ where the boundary condition is imposed. 
Specifically, define the matrix $A \in \Rset^{|\Jcal_q| \times |\Ical_q|}$ and vector $g \in \Rset^{|\Jcal_q|}$ with entries
\begin{align}
   A_{nl} := ({L}\kappa)(x_{n},x_{l}),
    \qquad
    [g]_n := ({L}\tilde\psi)(x_{n})
      = \frac{1}{\|c\|^2}
      \sum_{l=1}^{m-1} c_l\lambda_l\psi_l(x_{n}),
    \label{eq:A-matrix}
\end{align}
for $n\in \Jcal_q$. Similarly, define $K_\partial \in \Rset^{|\Jcal_\partial| \times |\Ical_q|}$ and $b \in \Rset^{|\Jcal_\partial|}$ as
\begin{align}
    (K_\partial)_{nl} := \kappa(x_{n},x_{l}),
    \qquad
    (b)_n := 
    \begin{cases}
      -\tilde\psi(x_{n}), & x_{n}\in\partial A_i,\\
      1-\tilde\psi(x_{n}), & x_{n}\in\partial A_j,
    \end{cases}
    \label{eq:bdry-block}
\end{align}
for $n\in \Jcal_\partial$.
Here, ${L}\kappa$ can be computed using the derivative of the kernel function $\kappa$. Specifically, for the Gaussian kernel $\kappa(x,y)=\exp\big(-\|x-y\|^2/2\sigma^2\big)$ with bandwidth $\sigma$, writing $z:=x-y$, we have
\begin{align}
    \nabla_x\kappa(x,y) = -\frac{z}{\sigma^{2}}\,\kappa(x,y),
    \qquad
    \partial_{x_ix_j}\kappa(x,y)
      = \Big(\frac{z_iz_j}{\sigma^{4}}-\frac{\delta_{ij}}{\sigma^{2}}\Big)\kappa(x,y).
    \label{eq:gaussian-derivs}
\end{align}
Hence, we have
\begin{align}
    ({L}\kappa)(x,y)
    = \left(
        -\frac{b(x)\cdot z}{\sigma^{2}}
        + \frac{z^\top a(x)\,z}{2\sigma^{4}}
        - \frac{\operatorname{tr}a(x)}{2\sigma^{2}}
      \right)\kappa(x,y).
\end{align}

We next solve for $\alpha$ in \eqref{eq:h-expansion} by imposing the boundary condition at $\{x_n\}_{n\in\Jcal_\partial}$ and minimizing the least-squares sense with Tikhonov regularization as
\begin{align}
\begin{aligned}
\min_{\alpha}\ \
& \big\|A\alpha + g\big\|^2 
+ \lambda_q\,\alpha^\top K \alpha,\\
\text{s.t.}\ \ & K_\partial\,\alpha = b ,
\end{aligned}
\label{eq:committor-qp}
\end{align}
for $\lambda_q\ge0$ and $K_{ll'}=\kappa(x_{l},x^{l'})$. 
For such an equality-constrained quadratic program, define a multiplier $\mu \in \Rset^{|\Jcal_\partial|}$. The Lagrangian is then $\|A\alpha+g\|^2+\lambda_q\alpha^\top K\alpha+2\mu^\top(K_\partial\alpha-b)$, and the KKT conditions in $(\alpha,\mu)$ are
\begin{align}
    \begin{bmatrix}
      A^\top A 
      + \lambda_q K & K_\partial^\top \\[2pt]
      K_\partial & 0
    \end{bmatrix}
    \begin{bmatrix}\alpha\\ \mu\end{bmatrix}
    =
    \begin{bmatrix}-A^\top g\\ b\end{bmatrix}.
    \label{eq:committor-kkt}
\end{align}
Solving for $\hat{\alpha}$ via the linear system \eqref{eq:committor-kkt} and plugging this into $q = \tilde{\psi} + h$, the committor function can be approximated via \eqref{eq:committor-final}.

\section{Details Behind the Computation of $\rho$ from Section \ref{sec:koopman-exit}}
\label{sec:supp.rho}

We start by constructing $\rho_0$. Consider $x \in D$, define $\tau_i := \inf\{t>0 : X_t \in A_i\}$, and $\tau_j := \inf\{t>0 : X_t \in A_j\} = \tau_D$. Let $R := \{\tau_j < \tau_i\}$ be the reactive event, i.e., the trajectory reaches the target set before returning to the source set. By definition of the committor function in Section~\ref{sec:committor}, we have $q^{(i,j)}(x) = {P^x}(R)$ where ${P^x}$ denotes the probability of starting from $x$. Conditioning $\E_x \left[e^{-\beta {H_0}}\right]$ in \eqref{eq:W-def} on $R$ and $R^c$, and using the law of total expectation, we have
\begin{align}
\begin{aligned}
\rho(x) 
=& \E_x \left[e^{-\beta {H_0}}\right]\\
=& {P^x}(R) \underbrace{\E_x \left[e^{-\beta {H_0}} \,\middle|\, R\right]}_{:=\rho_{\mathrm{R}}(x)}
+ {P^x}(R^c) \underbrace{\E_x \left[e^{-\beta {H_0}} \,\middle|\, R^c\right]}_{:=\rho_{\mathrm{NR}}(x)} \\
=& q^{(i,j)}(x) \rho_{\mathrm{R}}(x) 
+ \bigl(1-q^{(i,j)}(x)\bigr)\,\rho_{\mathrm{NR}}(x),
\end{aligned}
\label{eq:committor-is-splitting}
\end{align}
where $\rho_{\mathrm{R}}, \rho_{\mathrm{NR}} \in (0,1]$ since ${H_0} \geq 0$, with the convention that a term is omitted when its conditioning event has probability zero. Thus $\rho(x)$ is a convex combination of $\rho_{\mathrm R}(x)$ and $\rho_{\mathrm{NR}}(x)$ with weight being $q(x)$.
In practice, we use $\hat q$ from \eqref{eq:committor-final} as an approximation of $q$.

On $R^c$, the trajectory reaches $A_i$ at time $\tau_i < \tau_j$. We therefore split ${H_0} = {H_0^{i}} + {H_0^{ij}}$ with ${H_0^{i}} := \int_0^{\tau_i}(1-q^{(i,j)})\,dt$ and ${H_0^{ij}} := \int_{\tau_i}^{\tau_j}(1-q^{(i,j)})\,dt$. Conditioning on the path up to time $\tau_i$, by the Markov property, we have $\E\bigl[e^{-\beta {H_0^{ij}}} \mid \mathcal F_{\tau_i}\bigr] = \rho(X_{\tau_i})$ where $X_{\tau_i} \in \partial A_i$. Since $e^{-\beta {H_0}} =  e^{-\beta {H_0^{i}}} e^{-\beta {H_0^{ij}}}$, we have
\begin{align}
  \rho_{\mathrm{NR}}(x) = \E_x \left[e^{-\beta {H_0^{i}}}\,\rho(X_{\tau_i}) \middle|  R^c\right] .
\end{align}
For $\rho(X_{\tau_i})$, notice that the trajectory lingers inside $A_i$, and hence $\rho(X_{\tau_i}) \approx \rho|_{A_i}$. To compute the latter, let $y \in A_i$. Since $q^{(i,j)} \equiv 0$ on $A_i$ and the trajectory spends nearly all of time $[0,\tau_j]$ inside $A_i$, we have $H_0^{i} \approx \tau_j$. Since the system is metastable, we assume the first hitting time is asymptotically exponential with mean $T_0 = 1/|\lambda_1|$, where $\lambda_1$ is the leading nontrivial eigenvalue of the Koopman generator. In other words,, $\tau_j$ has density $T_0^{-1}e^{-t/T_0}$ on $t \ge 0$, and we have
\begin{align}
  \rho_A  \approx  \E \left[e^{-\beta \tau_j}\right]
   =  \int_0^\infty \frac{1}{T_0}\,e^{-t/T_0}\,e^{-\beta t}\,dt
   =  \frac{1}{1+\beta T_0}  =:  r .
  \label{eq:rho-A}
\end{align}
Hence, we have
\begin{align}
\begin{aligned}
\rho(x) 
=& q^{(i,j)}(x) \rho_{\mathrm{R}}(x)
+ \left[\bigl(1-q^{(i,j)}(x)\bigr) r\right] \,\E_x \left[e^{-\beta {H_0^{i}}} \,\middle|\, R^c\right].
\end{aligned}
\end{align}

On $R$, the trajectory goes from $x$ to $A_j$ without visiting $A_i$, hence $\rho_{\mathrm{R}}=\E_x \left[e^{-\beta {H_0}} \,\middle|\, R\right]$ lies in $(0,1]$ with ${H_0}$ the duration of reactive crossing. For term $\E_x\left[e^{-\beta {H_0^{i}}} \middle|R^c\right]$, on $R^c$, ${H_0^{i}}$ is the time of a non-reactive path falling back into $A_i$. Therefore, neither term encodes the metastability of the system, and the metastability enters only through $q^{(i,j)}(x)$ and $\bigl(1-q^{(i,j)}(x)\bigr) r$. We therefore consider the factorization of $\rho$ as
\begin{align}
\begin{aligned}
\rho =& \left( q^{(i,j)}(x)+ \left(1- q^{(i,j)}(x)\right) r\right) \times \frac{\rho}{q^{(i,j)}(x)+ \bigl(1- q^{(i,j)}(x)\bigr) r}.
\end{aligned}
\label{eq.rho-factor}
\end{align}
Replacing $q^{(i,j)}$ by its approximation $\hat q$ from \eqref{eq:committor-final}, we define $\rho_0(x):= \hat q(x)+ \bigl(1-\hat q(x)\bigr) r$, and the remainder is the ${H_0}$.

We next compute ${w}$ in the ansatz $\hat{\rho}(x) := \rho_0(x)\,w(x)$. Since $\rho_0$ is affine in $\hat q$, we also have
\begin{align}
  \nabla \rho_0 = (1-r)\,\nabla \hat q ,
  \qquad
  \Lmd \rho_0 = (1-r)\,\Lmd \hat q ,
  \label{eq:rho0-derivs}
\end{align}
By the product rule, we have 
\begin{align}
\Lmd(\rho_0 w)=\rho_0 \Lmd w + w \Lmd \rho_0 +\nabla\rho_0^\top a \nabla w.
\end{align}
Thus $\Lmd\rho-f_\beta\rho=0$ in \eqref{eq:elliptic-bvp} can be written in terms of ${w}$ as
\begin{align}
    \rho_0\,\Lmd w
     +  w\,\big(\Lmd\rho_0 - f_\beta\,\rho_0\big)
     +  \nabla\rho_0^\top a\,\nabla w
     = 0
    \qquad\text{in } D.
    \label{eq:bvp-in-w}
\end{align}
To solve \eqref{eq:bvp-in-w}, define two matrices in $\Rset^{|I_D|\times N}$ as
\begin{align}
  (K_D)_{li} = \kappa(x_l,x_i),
  \qquad
  ({\Lmd K})_{li} = (\Lmd\kappa)(x_l,x_i),
  \label{eq:eval-matrices}
\end{align}
for $x_l \in I_D$ and $i = 1,\dots,N$.
In addition, define $\Gamma_D \in \Rset^{|I_D|\times N}$, and $P,Q \in \Rset^{|I_D|\times|I_D|}$ as
\begin{align}
\begin{aligned}
&(\Gamma_D)_{li} = \nabla\rho_0(x_l)^\top a(x_l)\,\nabla_x\kappa(x_l,x_i),\\
&P:=\operatorname{diag}\big(\rho_0(x_l)\big),
\quad
Q:=\operatorname{diag}\big(\Lmd\rho_0(x_l)-f_\beta(x_l)\rho_0(x_l)\big),
\end{aligned}
\label{eq:eval-matrices-2}
\end{align}
where $\nabla_x \kappa$ is available in closed form as in \eqref{eq:gaussian-derivs}, and $\nabla \hat{q}=\nabla\tilde\psi+\sum_{l}\alpha_l\nabla_x\kappa(\cdot,x_l)$ from \eqref{eq:committor-final}.
Using the chain rule and \eqref{eq:rho-kernel-expansion}, we also have
\begin{align}
\Lmd\rho_0 
= \rho_0
\Big(\eta\,\Lmd \hat{q}
+ \tfrac{\eta^{2}}{2}\,\nabla \hat{q}^\top a\,\nabla \hat{q}\Big).
\end{align}
Hence \eqref{eq:eval-matrices} together with \eqref{eq:eval-matrices-2}, the residual of \eqref{eq:bvp-in-w} at the interior points can thus be written as 
\begin{align}
  \big[P\,{\Lmd K} + Q\,K_D + \Gamma_D\big]\,c
   =:  {S_D}\,c,
  \qquad
  {S_D} \in \Rset^{|I_D|\times N}.
  \label{eq:residual}
\end{align}

Likewise, with $b_l:=e^{-\Phi(x_l)}/\rho_0(x_l)$ for $x_l\in I_{\partial D}$ and $(K_{\partial D})_{li} := \kappa(x_l,x_i)$, the boundary condition can be written as $K_{\partial D}\,c = b$.
We then consider the following regularized optimization problem
\begin{align}
  \min_{c\in\Rset^N}\ \ & \|S_D\,c\|^2 + \lambda\, c^\top K c \nonumber\\
  \text{s.t.}\ \ & K_{\partial D}\,c = b ,
  \label{eq:coeff-solve-kernel}
\end{align}
for $\lambda\ge0$, and $K\in\Rset^{N\times N}$ is the Gram matrix, $K_{ii'} = \kappa(x_i,x^{i'})$, so that $c^\top K c = \|w\|^2_{\Hcal_k}$. 
Here, the boundary condition is imposed exactly, while the interior residual is minimized in the least squares sense with Tikhonov regularization. Here, \eqref{eq:coeff-solve-kernel} is again an equality-constrained convex quadratic program. We define the Lagrangian $\|S_Dc\|^2+\lambda c^\top Kc+2\mu^\top(K_{\partial D}c-b)$, and obtain the KKT condition, which is a linear system in $c$ and the Lagrange multiplier $\mu$
\begin{align}
  \begin{pmatrix}
    S_D^\top S_D + \lambda K & K_{\partial D}^\top \\
    K_{\partial D} & 0
  \end{pmatrix}
  \begin{pmatrix} c \\ {\mu} \end{pmatrix}
  = \begin{pmatrix} 0 \\ b \end{pmatrix}.
  \label{eq:kkt-kernel}
\end{align}
With $c$ determined by \eqref{eq:kkt-kernel}, applying the chain rule to $\log\hat{\rho}=\log\rho_0+\log w$ with $\rho_0$ in \eqref{eq:rho-kernel-expansion}, we obtain \eqref{eq:uK}.
\section{Reweighting the Generated Paths}
\label{sec:reweight}

The controlled dynamics \eqref{eq:general-controlled-sde} generate transition paths more efficiently than the reference dynamics, but at the price of sampling from a tilted path measure rather than the reference one. To fix this, first consider starting from some initial condition $x\in D$. Recall that $P^{x}$ is the path measure of \eqref{eq:ref-sde} started at $x$, and $\tilde P^{*,x}$ is that of the optimally controlled process \eqref{eq:general-controlled-sde} started at $x$. 
By Girsanov's theorem, $P^{x}$ and $\tilde P^{*,x}$ are equivalent masures on the common path space $\Omega$. Let $\chi \in \Omega$ be a generic element of the path space.
By Lemma~\ref{lemma:girsanov-density}, we have
\begin{align}
  \frac{d\tilde P^{*,x}}{dP^{x}}
  = \frac{e^{-H(\chi)}}{\rho(x)},
  \label{eq:reweight-fixed-density}
\end{align}
where $\rho(x)= \E_{x} [e^{-H}]$, and $H(\chi) := \int_0^{\tilde\tau_D} f(\tilde x_t)\,dt + \Phi(\tilde x_{\tilde\tau_D})$. Hence, the correction from the controlled process \eqref{eq:general-controlled-sde} to the original process \eqref{eq:ref-sde} at a fixed start is simply $\rho(x) e^{H(\chi)}$.

When the initial conditions are drawn from a probability measure $\nu$ on the source set $A_i$, e.g., the Boltzmann measure restricted to $A_i$, consider the mixture
\begin{align}
  M(\cdot) = \int_{A_i} P^{x_0}(\cdot)\,\nu(dx_0),
  \quad
   \tilde M(\cdot) = \int_{A_i}\tilde P^{*,x_0}(\cdot)\,\nu(dx_0).
  \label{eq:reweight-mixtures}
\end{align}
Both $M$ and $\tilde M$ share the same initial distribution $\nu$. Since every path starts from its own initial condition $x_0=\chi(0)$, we have
\begin{align}
  \frac{d\tilde M}{dM}
  = \frac{d\tilde P^{*,\chi(0)}}{dP^{\chi(0)}}
  = \frac{e^{-H(\chi)}}{\rho\big(\chi(0)\big)}.
  \label{eq:reweight-mixture-density}
\end{align}
Hence, to map $\tilde M$ back to $M$, each path is weighted by $dM/d\tilde M$ by Lemma~\ref{lemma:girsanov-density}. From \eqref{eq:reweight-mixture-density} and replacing $\rho$ by $\hat{\rho}$ in~\eqref{eq:rho-kernel-expansion}, we can approximate it as
\begin{align}
  \text{weight}(\chi) = \frac{dM}{d\tilde M}
          = \hat{\rho}\big(\tilde x_0\big)\,e^{H(\chi)} ,
  \qquad \tilde x_0 := \chi(0).
  \label{eq:reweight-weight}
\end{align}
Here, $\hat{\rho}(\tilde x_0)$ is given by \eqref{eq:rho-kernel-expansion}.

We further note that as $\beta\to\infty$, the probability that a sampled path spends more than an arbitrary fixed time $\varepsilon>0$ in nonreactive regions tends to zero. To see this, recall from Example~\ref{ex:tps} that $H=\beta H_0$. From~\eqref{eq:reweight-fixed-density}, we have 
\begin{align}
\begin{aligned}
\tilde P^{*,x}(H_0>\varepsilon)
= \frac{\E_x \left[e^{-\beta H_0} \bm{1} \{H_0>\varepsilon\}\right]}
        {\E_x \left[e^{-\beta H_0}\right]} 
\leq&  \frac{e^{-\beta \varepsilon}}
        {\E_x \left[e^{-\beta H_0}\right]} \\
\leq&  \frac{e^{-\beta\varepsilon}}
        {\E_x \left[e^{-\beta H_0} \bm{1}\{H_0\leq\varepsilon/2\}\right]} \\
\leq & \frac{e^{-\beta\varepsilon/2}}{P^x(H_0\le\varepsilon/2)}, 
\end{aligned}
\end{align}
for $\varepsilon>0$ with $P^x(H_0\leq\varepsilon/2)>0$.
Here, the first inequality follows from $e^{-\beta H_0}\leq e^{-\beta\varepsilon}$ on $\{H_0>\varepsilon\}$. Taking $\beta\to\infty$, we have $\tilde P^{*,x}(H_0>\varepsilon)\to 0$.

\section{Experiment Details and Additional Examples}
\label{sec:experiment}
\subsection{The Double-Well System}
\label{appendix:dw}
For the two-dimensional double-well system, the drift and diffusion coefficients in \eqref{eq:ref-sde} are specialized as $b(x) = -\nabla U(x)$, with
\begin{align}
U(x,y) = \tfrac16\Bigl(4(1-x^2-y^2)^2 + 2(x^2-2)^2 + \bigl((x+y)^2-1\bigr)^2 + \bigl((x-y)^2-1\bigr)^2 - 2\Bigr),
\end{align}
and $\sigma(x) = \sigma_0 I$ with $\sigma_0=\sqrt{2k_B T_{\mathrm{emp}}}\approx 0.455$ at $T_{\mathrm{emp}} = 1200\,\mathrm{K}$. Its potential has two minima at $(\pm\sqrt{5}/2,\,0)$ and two channels around $(0,\pm 1)$. Following Algorithm~\ref{sec:algorithm-et}, we compute the Koopman eigenfunctions and extract the metastable cores $A_{\mathrm{left}}$ and $A_{\mathrm{right}}$ as described in Section~\ref{sec:committor}. Specifically, $A_{\mathrm{left}}$ and $A_{\mathrm{right}}$ are obtained by retaining those points whose spectral embedding lies within radius $\varepsilon = 0.5$ of its cluster centroid.
We sample $1000$ uncontrolled and $1000$ controlled paths, all initialized at the minimum of $A_{\mathrm{left}}$. A path is a hit at its first entrance to $A_{\mathrm{right}}$, and is truncated at the numerical censor $T_{\mathrm{max}}$ otherwise. 
We use a step size of $\Delta t=0.01$, and $T_{\mathrm{max}}=10$, equivalently, a budget of $T_{\mathrm{max}}/\Delta t = 1000$ integration steps, and $\beta=1$.
We adopt two metrics to evaluate performance: the target hit percentage (THP), which measures the proportion of trajectories that successfully reach the target before $T_{\mathrm{max}}$; and the energy of the transition state (ETS), which measures the maximum potential energy along each transition path.
As shown in Figure~\ref{fig:2ch-paths} and Table~\ref{tab:2channel_benchmark}, the controller raises the THP from $0\%$ to $99.8\%$, and the controlled ensemble retains both channels rather than collapsing onto one.

\begin{table}[ht]
\centering
\caption{Target hit percentage (THP) and energy of the transition state (ETS) for the 2-channel double well. Uncontrolled and controlled rows are measured in our setup; TPS-DPS values are as reported by \citet{seong2024tpsdps}.}
\label{tab:2channel_benchmark}
\begin{tabular}{lcc}
\toprule
Method & THP ($\uparrow$) & ETS ($\downarrow$) \\
 & \% & kJmol$^{-1}$ \\
\midrule
Uncontrolled (1200K)  & $0.0$  & -- \\
Uncontrolled (2400K)  & $2.7$  & $1.73 \pm 0.42$ \\
Uncontrolled (3600K)  & $18.8$ & $2.07 \pm 0.46$ \\
Uncontrolled (4800K)  & $42.1$ & $2.49 \pm 0.66$ \\
TPS-DPS (1200K)           & $99.9$ & $1.38 \pm 0.16$ \\
Proposed method (1200K) & $\mathbf{99.8}$ & $1.40 \pm 0.11$ \\
\bottomrule
\end{tabular}
\end{table}

\subsection{Muller-Brown Potential}
The M\"uller-Brown potential \citep{Müller1979} is described by 
\begin{align}
    U(x, y) = \frac{1}{10}\sum_{k = 1}^{4}A_{k}\exp\left(a_{k}(x - x_{k})^{2} + b_{k}(x - x{k})(y - y_{k}) + c_{k}(y - y_{k})^{2} \right),
\end{align}
with parameters specified as 
\begin{align}
    \begin{pmatrix}
        A_{1} & a_{1} & b_{1} & c_{1} & x_{1} & y_{1}\\
        A_{2} & a_{2} & b_{2} & c_{2} & x_{2} & y_{2}\\
        A_{3} & a_{3} & b_{3} & c_{3} & x_{3} & y_{3}\\
        A_{4} & a_{4} & b_{4} & c_{4} & x_{4} & y_{4}\\
    \end{pmatrix} = \begin{pmatrix}
        -200 & -1 & 0 & -10 & 1 & 0\\
        -100 & -1 & 0 & -10 & 0 & 0.5\\
        -170 & -6.5 & 11 & -6.5 & -0.5 & 1.5\\
        15 & 0.7 & 0.6 & 0.7 & -1 & -1\\
    \end{pmatrix}
\end{align}

We consider a particle with $m = 12 \mathrm{a.m.u}$ and $\gamma = 1 \mathrm{ps}^{-1}$. As shown in Figure~\ref{fig:muller_brown_paths}, the dynamics has one deep well at $(-0.558, 1.442)$, a medium-deep well at $(0.623, 0.028)$, and a shallow well sitting between the two at $(-0.050, 0.467)$. The goal is to transition from the deep well to the medium well. The source and target cores are defined with radius $\varepsilon = 0.2$. We choose a stepsize of $\Delta t = 0.002 \mathrm{ps}$ and $T_\mathrm{max} = 10\mathrm{ps}$.  
We use a Gaussian kernel with bandwidth $0.35$. 
At temperature $T_\mathrm{emp} = 200K$, the controller raises the THP from $1.8\%$ for the uncontrolled process to $99.0\%$.

\begin{table}[t]
\centering
\caption{M\"uller-Brown benchmark at $T = 150\,\mathrm{K}$ and $200\,\mathrm{K}$.}
\label{tab:mb_benchmark}
\begin{tabular}{lcc}
\toprule
Method & THP ($\uparrow$) & ETS ($\downarrow$)\\
 &  \%  & kJmol$^{-1}$\\
 \midrule
 Uncontrolled (150K)  & $0.1$  & $12.4$\\
 Uncontrolled (200K)      & $1.8$ &  $14.0\pm 1.3$\\
 TPS-DPS (150K) & $100$  &  $15.9\pm 3.5$\\
 TPS-DPS (200K) & $100$  &  $15.7 \pm 1.9$\\
 Proposed method (150K, $\beta = 1$) & $95.1$  &  $14.1 \pm 2.1$\\
 Proposed method (150K, $\beta = 2$) & $95.9$ & $14.3\pm 2.1$\\
 Proposed method (200K, $\beta = 1$) & $98.4$ & $14.5 \pm 2.1$\\
 Proposed method (200K, $\beta = 2$) & $\mathbf{99.0}$ & $14.5 \pm 2.1$\\
\bottomrule
\end{tabular}
\end{table}

\subsection{Alanine Dipeptide}
\label{appendix:aldp}

We simulate alanine dipeptide (ACE--ALA--NME) in vacuum with AMBER99SB-ILDN, hydrogen-bond constraints, and the VVVR integrator at $300\,\mathrm{K}$, friction $10^{-3}\,\mathrm{fs}^{-1}$, and timestep $1\,\mathrm{fs}$. Trajectories start from the energy-minimized $C5$ structure. Following~\citep{seong2024tpsdps}, the target is defined as $\Bcal =\{\mathbf{R}\mid||\xi(\mathbf{R})-\xi(\mathbf{R}_\mathcal B)||<0.75\}$ (in radians), where $\xi=(\phi,\psi)$ are the backbone dihedral angles and $\mathbf{R}_{\mathcal B}$ is the energy-minimized $C7_{\mathrm{ax}}$ structure. A trajectory stops when it enters $\mathcal{B}$ or reaches $T_{\mathrm{max}}=1\mathrm{ps}$. 

The controller $\hat u=\nabla \log \hat\rho$ is evaluated on the Kabsch-aligned Cartesian configuration and applied as an external force $F_{\mathrm{bias}} = s\,\hat u$ in OpenMM, with $s=30\,\mathrm{kJ\,mol^{-1}nm^{-1}}$ and $\|\hat u\|$ capped at $75\,\mathrm{kJ\,mol^{-1}nm^{-1}}$ per atom. We use $\beta=1$, Gaussian bandwidth $1$, and $\varepsilon=0.01$, with $80$ kernel centers and $400$ interior and $48$ boundary collocation points. OpenMM integrates the underdamped Langevin dynamics
$$dx_{i} = \frac{p_{i}}{m_{i}}dt,\quad dp_{i} = F_{i}(x)dt - \gamma_{i}p_{i}dt + \sqrt{2m_{i}\gamma_{i}k_{B}T}dW_{i},$$
where $i$ indexes the atom.
In this experiment, we solve the boundary value problem~\eqref{eq:elliptic-bvp} with the isotropic generator $L = F\cdot\nabla + \tfrac{\sigma_0^2}{2}\Delta$ where $F$ is the force and $\sigma_0=1$ as an approximation of the simulated dynamics that ignores the per-atom mobility $1/(m_i\gamma)$.

The proposed pipeline as summarized in Algorithm \ref{sec:algorithm-et} runs once: Koopman eigenfunctions are constructed via a generalized eigenproblem in terms of Gram matrices (see, e.g., \citep[Appendix M]{hou2023sparse} for details), followed by solving for $\hat{q}$ and $\hat{u}$. In contrast, the baselines in~\citep{seong2024tpsdps} repeatedly simulate controlled trajectories and update the neural network, so they pay the sampling cost per rollout. Table~\ref{tab:cost} compares the cost of obtaining the bias in the drift term. All timings are measured on a single NVIDIA A100 GPU.

\begin{figure}[ht]
  \centering
  \begin{subfigure}[t]{0.45\linewidth}
    \centering
    \includegraphics[height=4cm]{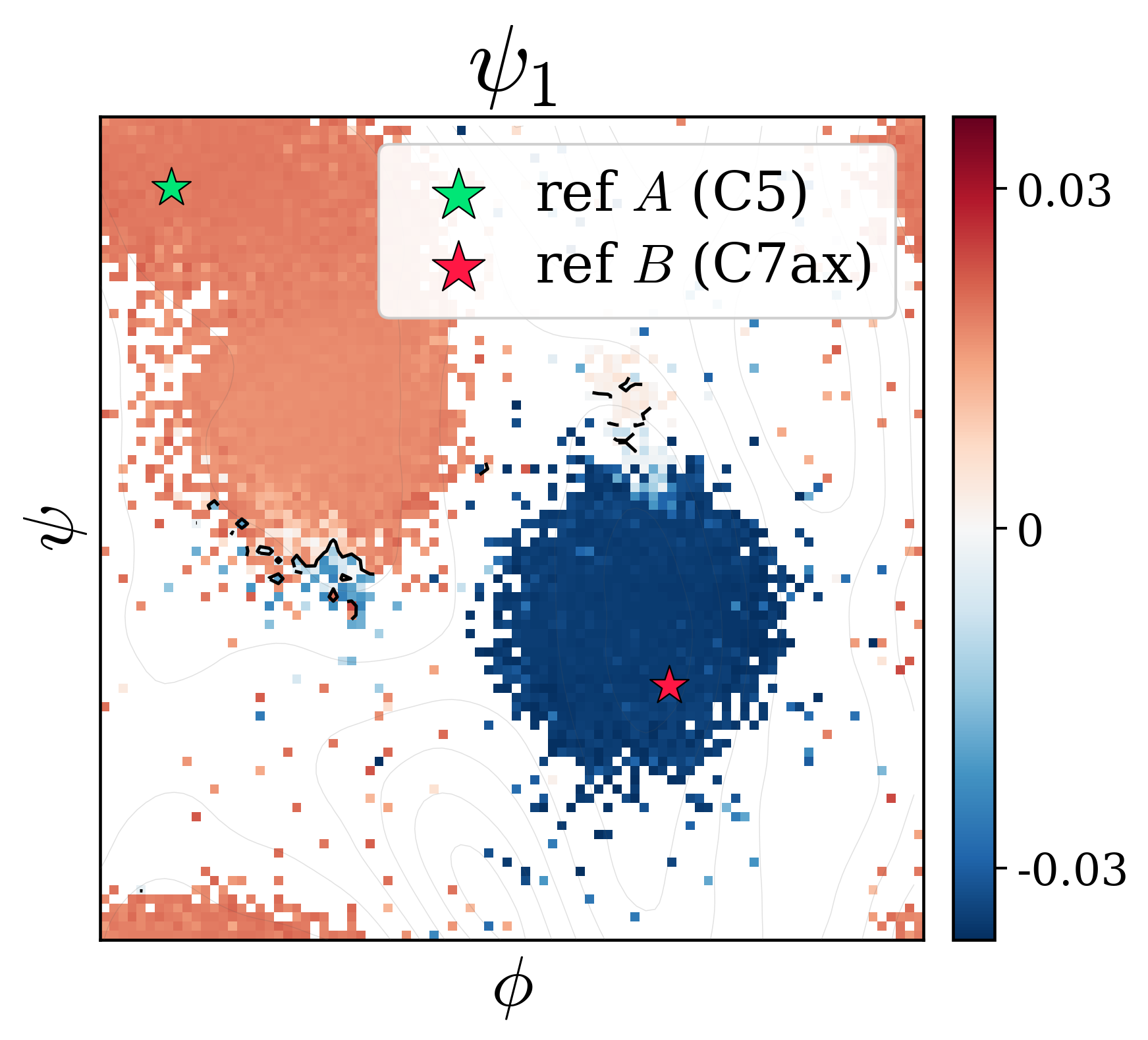}
  \end{subfigure}
  \begin{subfigure}[t]{0.45\linewidth}
    \centering
    \includegraphics[height=4cm]{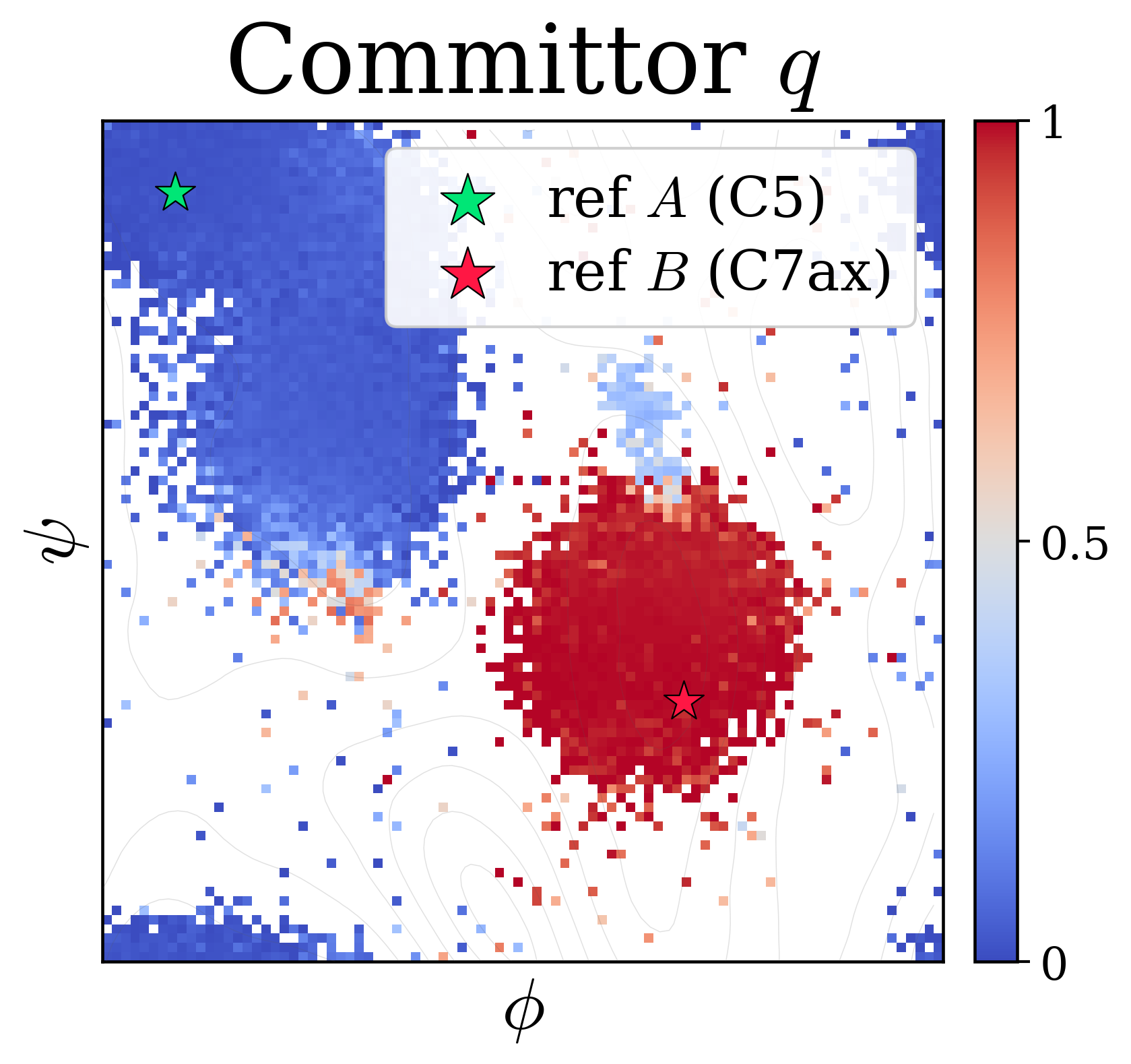}
  \end{subfigure}
  \caption{Alanine dipeptide on the Ramachandran plain of backbone dihedral angles $(\phi,\psi)$. (Left) The first nontrivial Koopman eigenfunction $\psi_1$ which separates the $C5$ basin from $C7_{\mathrm{ax}}$ basin. (Right) The committor $q$ from $C5$  to $C7_{\mathrm{ax}}$.}
  \label{fig:aldp_eigenfunction}
\end{figure}

\begin{figure}[t]
    \centering
    \begin{subfigure}[b]{0.3\textwidth}
        \centering
        \includegraphics[height=4cm]{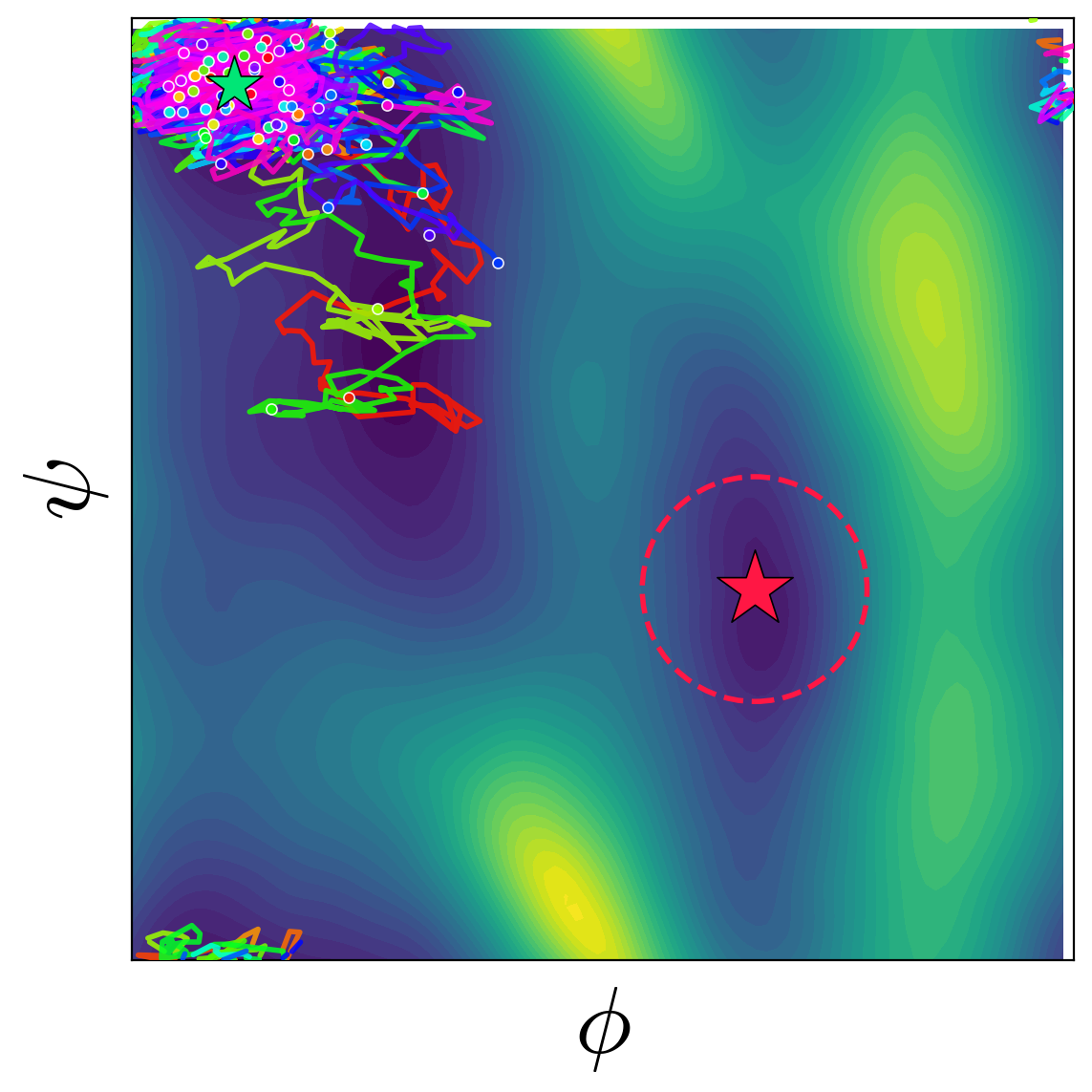}
        \caption{Uncontrolled}
        \label{fig:sub_a}
    \end{subfigure}
    \hfill
    \begin{subfigure}[b]{0.3\textwidth}
        \centering
        \includegraphics[height=4cm]{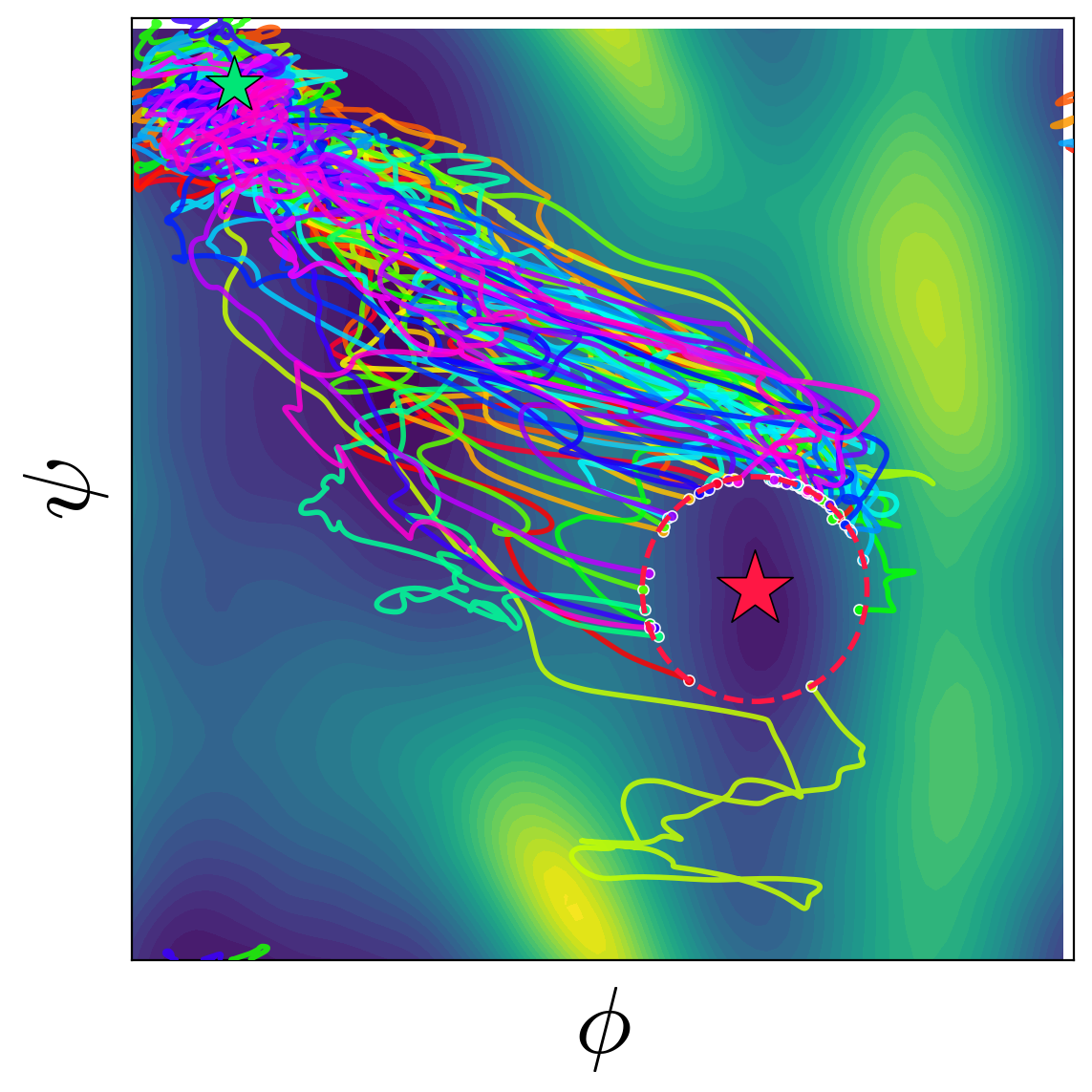}
        \caption{TPS-DPS}
        \label{fig:sub_a}
    \end{subfigure}
    \hfill
    \begin{subfigure}[b]{0.35\textwidth}
        \centering
        \includegraphics[height=4cm]{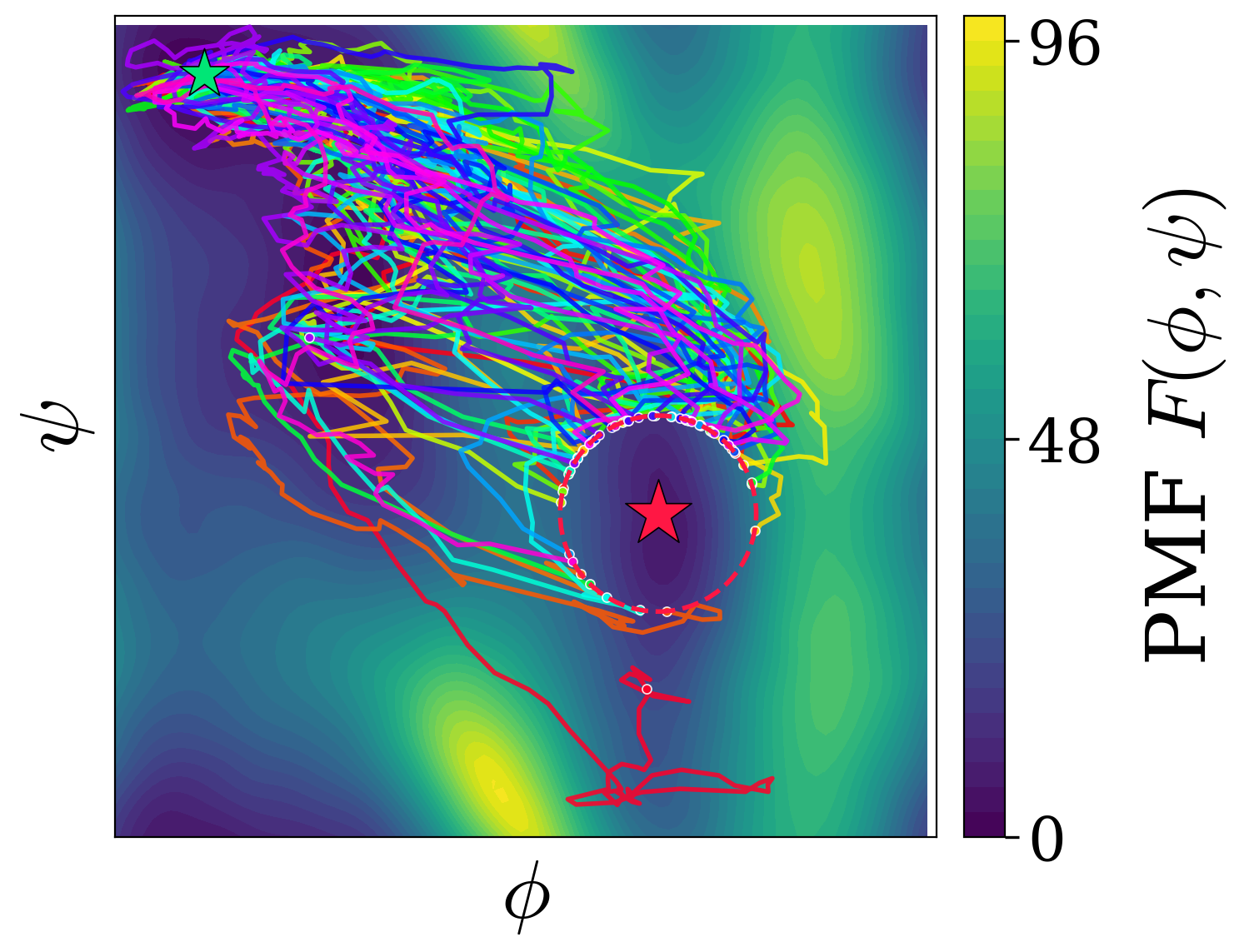}
        \caption{Proposed method}
        \label{fig:sub_b}
    \end{subfigure}
    \caption{Sampled transition paths for alanine dipeptide in vacuum, same as Figure~\ref{fig:alanine_sample_paths} in Section~\ref{sec5}. Paths start in the $C5$ basin (upper left) and end in the $C7_{\mathrm{ax}}$ basin (red star). (a) Uncontrolled paths. (b) TPS-DPS~\citep{seong2024tpsdps}: THP $= 100\%$, obtained by training a neural-network bias with simulation in the loop ($1000$ rollouts, ${\sim}10^6$ gradient updates). (c) Proposed method: THP $= 93\%$, with a closed-form controller.}
    \label{fig:sample_paths.alpd.2}
\end{figure}

\begin{table}[ht]
\centering
\caption{Cost of obtaining the controller on alanine dipeptide on a single NVIDIA A100 GPU, excluding the subsequent path generation, which is common to all methods. }
\label{tab:cost}
\begin{tabular}{lrcrcrcr}
\toprule
\multicolumn{8}{l}{\textit{Proposed method}} \\
\cmidrule(lr){1-8}
Stage & \multicolumn{5}{c}{Time (s)} & & Total (s) \\
Sampling       & \multicolumn{5}{c}{\textcolor{black}{Direct evaluation of \eqref{eq:ref-generator}}} & & \multirow{4}{*}{180} \\
Koopman eigenpairs   & \multicolumn{5}{c}{73.8} & & \\
Committor $\hat q$   & \multicolumn{5}{c}{56.2} & & \\
Controller $\hat u$  & \multicolumn{5}{c}{51.0} & & \\
\midrule
\multicolumn{8}{l}{\textit{Training-based baselines}} \\
\cmidrule(lr){1-8}
Method & Rollouts & $\times$ & (\,Sampling (s) & $+$ & Training (s)\,) & $=$ & Total (h) \\
TPS-DPS (F) & 1000  & $\times$ & (\,60.6 & $+$ & 14\,) & $=$ & 20.7 \\
PIPS (F)    & 15000 & $\times$ & (\,60.6 & $+$ & 5.4\,) & $=$ & 275 \\
\bottomrule
\end{tabular}
\end{table}

\subsection{Additional Experiment: The Four-Well Dynamics}
\label{sec:experiments}
Consider the two-dimensional four-well potential described by 
\begin{align}
    U(x, y) = 5((x^{2} - 1)^{2} + (y^{2} - 1)^{2}),
\end{align}
which has four symmetric local minima at $(\pm 1, \pm 1)$. The source core is centered at pair $(-1, -1)$ and the target core is centered at $(1, 1)$, with radius $\varepsilon =0.3$. We use mass $m = 12$ a.m.u, friction $\gamma = 1\mathrm{ps}^{-1}$, stepsize $\Delta t = 0.002\mathrm{ps}$, and $T_{\mathrm{max}} = 10\mathrm{ps}$. Results for various settings are reported in~\ref{tab:4well_benchmark}, and the sampled paths at $200\,\mathrm{K}$ are shown in Figure~\ref{fig:four_well_paths_T_200}.

\begin{figure}[ht]
    \centering
    \begin{subfigure}[b]{0.3\textwidth}
    \centering
        \includegraphics[height = 4cm]{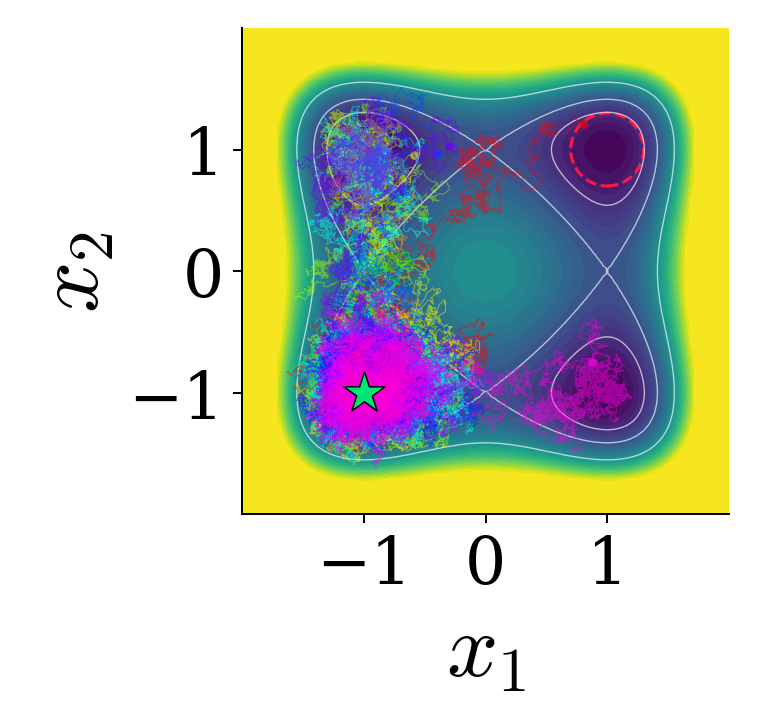}
        \caption{Uncontrolled}
        \label{fig:sub_a_4w_paths}
    \end{subfigure}
    \hfill
    \begin{subfigure}[b]{0.3\textwidth}
    \centering
        \includegraphics[height = 4cm]{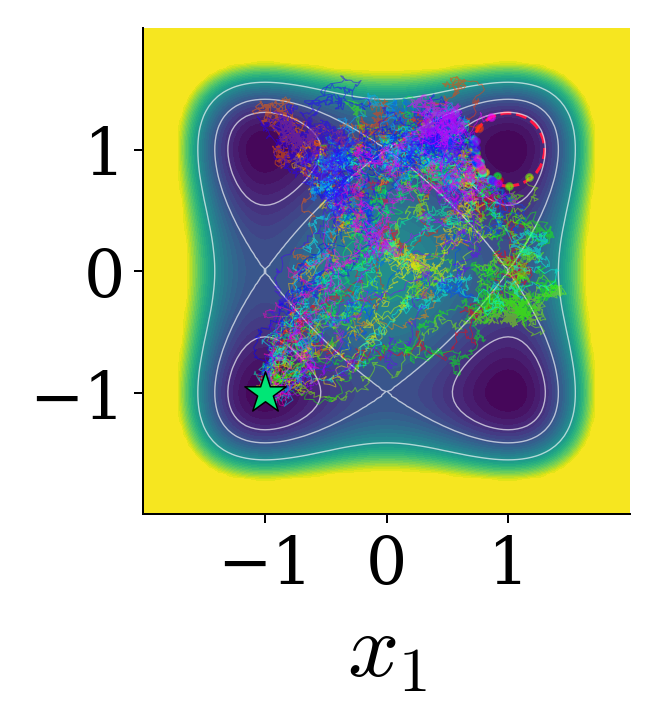}
        \caption{TPS-DPS}
        \label{fig:sub_b_4w_paths}
    \end{subfigure}
    \hfill
    \begin{subfigure}[b]{0.3\textwidth}
    \centering
        \includegraphics[height = 4cm]{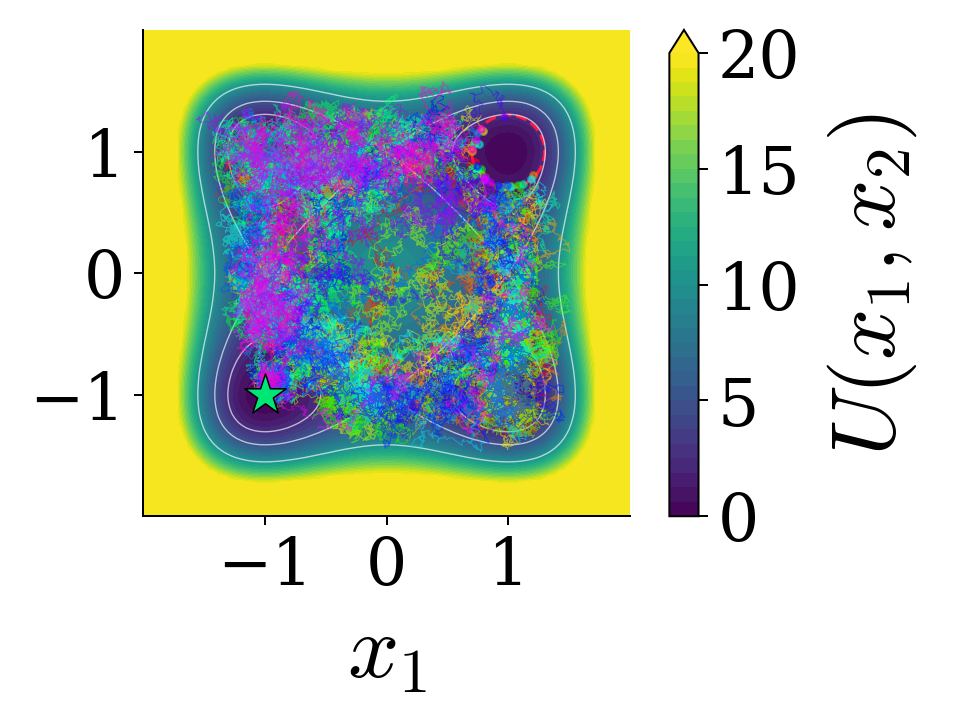}
        \caption{Proposed method}
        \label{fig:sub_c_4w_paths}
    \end{subfigure}
    \caption{Four-well dynamics at $T = 200\,\mathrm{K}$. We aim to sample transitions from the source core at $(-1,-1)$ (green star) to the target core at $(1,1)$ (red star). (a) Uncontrolled paths: THP($T_\mathrm{max}$) = $0.9\%$. (b) TPS-DPS~\citep{seong2024tpsdps}: THP($T_\mathrm{max}$) = $81.6\%$. (c) Proposed method with $\beta = 2$: THP($T_{\mathrm{max}}$) = $94.7\%$.}
    \label{fig:four_well_paths_T_200}
\end{figure}

\begin{table}[t]
\centering
\caption{Four-well benchmark at $T = 200\,\mathrm{K}$ and $300\,\mathrm{K}$.}
\label{tab:4well_benchmark}
\begin{tabular}{lcc}
\toprule
Method & THP ($\uparrow$) & ETS ($\downarrow$)\\
 &  \%  & kJmol$^{-1}$\\
 \midrule
 Uncontrolled (200K)  & $0.9$  & $10.6\pm 3.2$\\
 Uncontrolled (300K)      & $8.8$ &  $12.8\pm 3.6$\\
 TPS-DPS (200K) & $81.6$  &  $8.6\pm .9$\\
 TPS-DPS (300K) & $87.1$  &  $10.2 pm .6$\\
 Proposed method (200K, $\beta = 1$) & $73.2$  &  $9.5 \pm 1.8$\\
 Proposed method (200K, $\beta = 2$) & $94.7$ & $9.0\pm 1.8$\\
 Proposed method (300K, $\beta = 1$) & $91.8$ & $11.2 \pm 3.1$\\
 Proposed method (300K, $\beta = 2$) & $\mathbf{100.0}$ & $10.4 \pm 2.5$\\
\bottomrule
\end{tabular}
\end{table}

\end{document}